\spnewtheorem{definitionitalic}{Definition}{\bf}{\it}
\spnewtheorem*{nonumtheorem}{Theorem}{\bf}{\it}
\definecolor{lcolor}{rgb}{0.6,0.3,0.3}
\definecolor{qcolor}{rgb}{0.19,0.55,0.91}
\definecolor{hcolor}{rgb}{0.9,0.2,0.5}
\definecolor{scolor}{rgb}{0.9,0.5,0.2}
\newcommand{\tc}[2]{\textcolor{#1}{#2}}
\newcommand{\phiel}{\tc{scolor}{\varphi}}
\newcommand{\eel}{\tc{hcolor}{\gamma}}
\newcommand{\phiconstraint}{\tc{qcolor}{\chi}}
\newcommand{\scl}{\tc{qcolor}{\zeta}}
\newcommand{\bklq}{\tc{qcolor}{p_0}}
\newcommand{\inphi}[1]{\smash{\underline{#1}}} 
\newcommand{\E}{\mathcal{E}}
\newcommand{\A}{\mathcal{A}}
\renewcommand{\P}{\mathcal{P}}
\renewcommand{\L}{\mathcal{L}}
\newcommand{\I}{\mathcal{I}}
\newcommand{\p}{\partial}
\newcommand{\N}{\mathbbm{N}}
\newcommand{\Z}{\mathbbm{Z}}
\newcommand{\R}{\mathbbm{R}}
\newcommand{\C}{\mathbbm{C}}
\newcommand{\one}{\mathbbm{1}}
\newcommand{\eps}{\epsilon}
\definecolor{cerulean}{rgb}{0.0, 0.48, 0.65}
\definecolor{brightube}{rgb}{0.82, 0.62, 0.91}
\newcommand{\step}{\vskip 3mm}
\newcommand{\anc}{\tc{qcolor}{\rho}}
\newcommand{\spar}{\tc{scolor}{s}}
\newcommand{\ff}[1]{\mathcal{#1}}
\newcommand{\gx}{{\mathfrak g}}
\DeclareMathOperator{\sheafHom}{\mathscr{Hom}\text{\kern -3pt {\calligra\large om}}\,}
\DeclareMathOperator{\rank}{rank}
\DeclareMathOperator{\image}{im}
\DeclareMathOperator{\End}{End}
\DeclareMathOperator{\Hom}{Hom}
\DeclareMathOperator{\MC}{MC}
\DeclareMathOperator{\RE}{Re}
\DeclareMathOperator{\CDer}{\tc{scolor}{CDer}}
\DeclareMathOperator{\Ric}{Ric}
\DeclareMathOperator{\Orth}{O}
\DeclareMathOperator{\Der}{Der}
\DeclareMathOperator{\diag}{diag}
\DeclareMathOperator{\Sym}{Sym}
\newcommand{\cinf}{{\tc{scolor}{C^\infty}}}
\newcommand{\dd}{d} 
\renewcommand{\subset}{\subseteq}
\newcommand{\sv}[1]{\tc{qcolor}{\mathbf{s}_{#1}}}
\newcommand{\dimx}{{\tc{hcolor}{N}}}
\newcommand{\dimy}{{\tc{scolor}{n}}}
\newcommand{\dimz}{{\tc{qcolor}{m}}}
\newcommand{\fld}{{\tc{qcolor}{u}}}
\newcommand{\ind}{{\tc{scolor}{\xi}}}
\newcommand{\ptwo}{Subsection \ref{section:analysis2}}
\newcommand{\keroneform}{\tc{hcolor}{\omega}}
\newcommand{\spx}{\,\,\,}
\newcommand{\amc}{\tc{hcolor}{\psi}}
\newcommand{\normp}{{\tc{hcolor}{P}}}
\newcommand{\GS}{{\tc{hcolor}{G}}}
\newcommand{\GB}{{\tc{qcolor}{X}}} 
\newcommand{\Aspec}{\A_{\textnormal{gauged}}} 
\newcommand{\Aspecaffine}{\A_{\textnormal{affine gauged}}}
\newcommand{\ginj}{{\tc{scolor}{R}}}
\newcommand{\usol}{{\tc{scolor}{\gamma}}}
\newcommand{\uzero}{\widetilde{\usol}}
\newcommand{\epoly}{\econst[\tau]}
\newcommand{\econst}{\textcolor{scolor}{\mathcal{U}}}
\newcommand{\mcs}{\smash{\MC(\A)_{\textnormal{gauged}}}}
\newcommand{\mcsp}{\smash{\MC(\A)_{\textnormal{gauged}}^+}}
\begin{document}

\title{Semiglobal non-oscillatory
       big bang singular spacetimes\\
       for the Einstein-scalar field system}

\author{
Andrea N\"utzi, \\
Michael Reiterer, \\
Eugene Trubowitz
}

\institute{Andrea N\"utzi,
              ETH Zurich,
              \email{andrea.nuetzi@math.ethz.ch}\\
           Michael Reiterer,
              Hebrew University of Jerusalem,
              \email{michael.reiterer@protonmail.com}             \\
           Eugene Trubowitz,
              ETH Zurich,
              \email{eugene.trubowitz@math.ethz.ch}
}

\date{Received: date / Accepted: date}


\maketitle
\begin{abstract}
We construct semiglobal
singular spacetimes for the Einstein equations coupled to a massless scalar field.
Consistent with the heuristic analysis of
Belinskii, Khalatnikov, Lifshitz or BKL for this system,
there are no oscillations due to the scalar field.
(This is much simpler than the
oscillatory BKL heuristics for the Einstein vacuum equations.)
Prior results are due to Andersson and Rendall in the real analytic case,
and Rodnianski and Speck in the smooth
near-spatially-flat-FLRW case.
Similar to Andersson and Rendall we give asymptotic data at the singularity,
which we refer to as final data,
but our construction is not limited to real analytic solutions.
This paper is a test application of tools
(a graded Lie algebra formulation of the Einstein equations
 and a filtration)
intended for the more subtle vacuum case.
We use homological algebra tools to construct a formal series solution,
then symmetric hyperbolic energy estimates to construct a true solution well-approximated by truncations of the formal one.
We conjecture that the image of the
map from final data to initial data 
is an open set of 
anisotropic initial data.
\keywords{%
general relativity
\and
scalar field
\and
symmetric hyperbolic systems
\and
FLRW
\and
Kasner metric
\and
BKL conjectures
\and
graded Lie algebra
\and
filtration
\and
associated graded
\and 
Maurer Cartan elements
\and
homological algebra
\and
Rees algebra
}
\end{abstract}

\tableofcontents

\section{Introduction}

The Einstein equations for a Lorentzian metric coupled to a real scalar field are
\begin{equation}\label{eq:esf}
\begin{aligned}
\Ric_g & = 2 \dd \phi \otimes \dd \phi\\
\Box_g \phi & = 0
\end{aligned}
\end{equation}
Belinskii and Khalatnikov \cite{bkscalar} developed
the heuristics for a class of solutions with a spacelike singularity
along which curvature invariants diverge.
This requires $\phi \neq 0$.
The heuristics are much simpler than for vacuum,
$\Ric_g = 0$,
which are oscillatory according to Belinskii, Khalatnikov and Lifshitz \cite{bkl}.
\step
Andersson and Rendall \cite{ar}
rigorously constructed a large
class of real analytic solutions to \eqref{eq:esf} implementing the
non-oscillatory heuristics,
using Fuchsian systems.
By a different approach,
Rodnianski and Speck \cite{rs,rs2}
constructed smooth solutions to \eqref{eq:esf}
and in particular they showed that the class contains an open ball
of initial data in some gauge;
the results are for near-spatially-flat
Friedman-Lemaitre-Robertson-Walker or FLRW
spacetimes
and the spatial topology is a torus $\mathbbm{T}^3$.
A key difference is that Andersson and Rendall give
asymptotic data
at the singularity (final data) and
solve the equations away from the singularity,
whereas Rodnianski and Speck give initial data
along a spacelike hypersurface
and solve the equations towards the singularity.
It is possible that both approaches can be pushed further
in the context of the Einstein equations with scalar field:
\cite{ar} to include smooth spacetimes using energy estimates\footnote{%
A discussion of difficulties encountered
 when extending this to smooth solutions is in \cite{r}.
},
and \cite{rs,rs2} to construct solutions that are not necessarily near isotropic\footnote{%
Perhaps a careful examination of the constants in the estimates in \cite{rs,rs2},
and small adaptations, would show that this approach is not in a significant sense
limited to solutions that are nearly isotropic.
In \cite{rs} the deviation from being isotropic shows up, in particular,
in the traceless part of the second fundamental form of level sets of the time function,
and this is zero in the isotropic case.}\textsuperscript{,}\footnote{%
Both \cite{ar} and \cite{rs} also deal with the stiff fluid equations,
$\Ric_g = 2v \otimes v$ for a one-form $v$.
Usually $v$ is required to be future timelike.
A scalar field solution yields a stiff fluid solution
with $v = \dd \phi$.}.

\newcommand{\xa}{(-0.5,-2.) -- (-0.5975,-1.71475) -- (-0.69,-1.458) -- (-0.7775,-1.22825) -- (-0.86,-1.024) -- (-0.9375,-0.84375) -- (-1.01,-0.686) -- (-1.0775,-0.54925) -- (-1.14,-0.432) -- (-1.1975,-0.33275) -- (-1.25,-0.25) -- (-1.2975,-0.18225) -- (-1.34,-0.128) -- (-1.3775,-0.08575) -- (-1.41,-0.054) -- (-1.4375,-0.03125) -- (-1.46,-0.016) -- (-1.4775,-0.00675) -- (-1.49,-0.002) -- (-1.4975,-0.00025) -- (-1.5,0.)}%
\newcommand{\xb}{(1.5,0.) -- (1.4975,-0.00025) -- (1.49,-0.002) -- (1.4775,-0.00675) -- (1.46,-0.016) -- (1.4375,-0.03125) -- (1.41,-0.054) -- (1.3775,-0.08575) -- (1.34,-0.128) -- (1.2975,-0.18225) -- (1.25,-0.25) -- (1.1975,-0.33275) -- (1.14,-0.432) -- (1.0775,-0.54925) -- (1.01,-0.686) -- (0.9375,-0.84375) -- (0.86,-1.024) -- (0.7775,-1.22825) -- (0.69,-1.458) -- (0.5975,-1.71475) -- (0.5,-2.)}%
\begin{figure}[h]
  \centering
  \begin{tikzpicture}
  \draw (-1.5,0) -- (-1.5,0.2);
  \draw (1.5,0) -- (1.5,0.2);
  \draw (0.5,-2) -- (0.5,-2.2);
  \draw (-0.5,-2) -- (-0.5,-2.2);
  \draw [->] (-4,-1) -- (-4,-0.5) node [anchor = east, xshift=-2] {time};
  \draw [->] (-4,-1) -- (-3.5,-1) node [anchor = north, yshift=-2] {space};
  \draw [ultra thick] (-3,0) -- (3,0)
     node [text width = 75pt, anchor = west, xshift = 2]
         {$\tau = +\infty$\\ curvature singularity\\at finite proper time};
  \draw [dashed] (-3,-2) -- (3,-2)
     node [text width = 85pt, anchor = west, xshift = 2]
         {$\tau = 0$\\ initial hypersurface};
  \begin{scope}[shift={(0,0)}]
    \draw (0,0) node [anchor = south] {$A$};
    \draw (0,-1) node {future of $B$};
    \draw (0,-2) node [anchor = north] {$B$};
    \begin{scope}[on background layer]
       \draw [draw opacity=0,fill=gray!20] \xa -- \xb -- cycle;
    \end{scope}
    \draw \xa;
    \draw \xb;
  \end{scope}
  \end{tikzpicture}
\caption{%
Shown is the domain $[0,\infty) \times U$
where $U \subset M^3$ is a small patch of space.
An important property of all spacetimes constructed in this paper
is that they have particle horizons.
This is indicated by the trapezoidal domain
which is causally independent from its complement
going away from the singularity,
meaning final data along the top of the trapezoidal domain, $A$,
determines initial data along the bottom of the trapezoidal domain, $B$.
}\label{fig:causalx}
\end{figure}
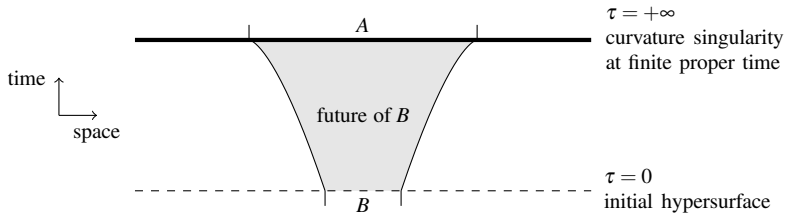

The present paper could roughly be seen as pushing the final data perspective of \cite{ar} further.
The solutions we construct are generally not real analytic or near isotropic.
Actually, our approach is more suited to anisotropic elements, see Remark \ref{remark:anisotropic}.
We construct a large class of smooth solutions on semiglobal domains
\[
[0,\infty) \times M^3
\]
where $M^3$ is a parallelizable closed 3-manifold,
for instance the sphere $S^3$ or the torus $\mathbbm{T}^3$.
The spatial topology is actually not very important
because all solutions constructed here
have particle horizons, see Figure \ref{fig:causalx}\footnote{%
 Therefore using the finite speed of propagation theorem
 for hyperbolic equations one can extract various kinds of statements
 that are local in space.}.
The time coordinate $\tau \in [0,\infty)$ is such that
there is a curvature singularity at $\tau = +\infty$,
but beware that this corresponds
to finite proper time.
Expansions near the singularity
are obtained as part of the construction of these solutions.
(A similar technique was used to prove
trapped spheres formation \cite{foc}, simplifying earlier work \cite{chr}.)
Similar to \cite{ar} in the real analytic class,
we expect that our construction yields an open set of initial data
in the smooth class.
Whereas \cite{ar} count the number of free functions,
we show completeness at a formal perturbative level
(all formal solutions are obtained up to gauge transformations).
\step
This paper 
uses the language and algebraic setup
developed in \cite{rtgla,rtfil,rtfil2}.
Briefly, our methods are geared towards
 the vacuum case
\begin{equation}\label{eq:vc}
\Ric_g = 0
\end{equation}
where the BKL heuristics predict a class of singularities with infinitely many oscillations.
From the perspective of mathematics this case is wide open.
One can view \eqref{eq:esf}
as training ground to learn from.
Another important training ground are
oscillatory spatially homogeneous solutions to \eqref{eq:vc},
briefly reviewed in Subsection \ref{subsec:heuristics}.
\step
This introduction
contains an informal conjecture
(Subsection \ref{sec:informalconjecture}),
a summary of our main results
(Subsections \ref{sec:informalmain}),
and an overview of the BKL heuristics
with and without oscillations for general orientation
(Subsection \ref{subsec:heuristics}).

\subsection{An informal conjecture}
\label{sec:informalconjecture}

To orient the reader,
we conjecture what form a strong implementation of
the BKL heuristics for \eqref{eq:esf} could take.
On the manifold $[0,\infty) \times M^3$ define:
\begin{itemize}
\item The graded Lie algebra in \cite{rtgla}
for the Einstein equations
extended to include a scalar field, see Section \ref{section:gla}.
This extended graded Lie algebra is denoted $\E_\Phi$.
It is actually a graded Lie algebroid,
which encodes its differential geometric nature.
\item
An increasing $\N^3$-indexed graded Lie algebra filtration
of $\E_\Phi$.
This filtration from \cite{rtfil}
is here extended to include a scalar field, see Section \ref{section:filtration}.
Its associated graded is denoted $\A$.
So $\A = \P/\spar\P$
where\footnote{%
Actually $\spar = (\spar_1,\spar_2,\spar_3)$,
because the filtration is indexed by $\N^3$.
We gloss over this in the introduction.
}
$\P \subset \E_\Phi[[\spar]]$ is the Rees algebra.
\end{itemize}
In general, in a graded Lie algebra, a Maurer-Cartan or MC element
is an element $x$ of degree one that solves the Maurer-Cartan equation\footnote{%
This equation is better known in another branch of mathematics,
namely the study of
deformations of algebraic structures. Quoting \cite{niri}:
`Our basic observation is that a wide class of algebraic structures 
on a vector space can be defined by essentially the same equation in 
an appropriate graded Lie algebra.
Among the structures thus obtained are Lie algebras,
associative algebras, commutative and associative algebras,
extensions of algebras (of any of the above types), 
and representations of algebras.'}
\[
[x,x] = 0
\]
In $\E_\Phi$ this is a first order partial differential equation
that is hyperbolic (up to gauge transformations) and in this sense a dynamical system.
The construction of $\E_\Phi$ is such that nondegenerate elements
of $\MC(\E_\Phi)$ are solutions to the Einstein equations \eqref{eq:esf},
see Definition \ref{def:nondeg}
and Lemmas \ref{lemma:glavac}, \ref{lemma:gsf}.
Gauge transformations are indicated by ${\sim}$.

\begin{conjecture}[Informal] \label{conj}
Scattering at $\tau \to \infty$
from the dynamical system defined by $\A$
to the dynamical system defined by $\E_\Phi$
defines a smooth map
\begin{equation}\label{eq:scx}
U \subset \frac{\MC(\A)}{\sim} \;\to\; \frac{\MC(\E_\Phi)}{\sim}
\end{equation}
that is a diffeomorphism onto its image
in suitable function spaces.
The domain of definition $U$ is open and contains
the anisotropic spatially homogeneous elements
with positive Kasner parameters (Lemma \ref{lemma:she} and Definition \ref{def:anisotropic}).
The nondegenerate elements in the image are semiglobal solutions to
\eqref{eq:esf} with a curvature singularity reached in finite proper time,
and with particle horizons. 
\end{conjecture}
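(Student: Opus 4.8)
The plan is to build the map \eqref{eq:scx} in three stages — analyze the model dynamical system $\A$, lift formally along the Rees parameter $\spar$, then pass to genuine solutions by symmetric hyperbolic energy estimates — and finally to read off the geometry and the openness of $U$. Since $\A = \P/\spar\P$ is the associated graded of the filtration, its Maurer--Cartan equation is a drastically simplified, partially decoupled version of the one in $\E_\Phi$, corresponding in BKL language to retaining only the velocity-term-dominated dynamics; I would first check directly that the anisotropic spatially homogeneous elements with positive Kasner parameters are nondegenerate MC elements of $\A$ (Lemma \ref{lemma:she}, Definition \ref{def:anisotropic}), that $\MC(\A)/{\sim}$ is a manifold near them with explicit $\tau\to\infty$ asymptotics, and this pins down the open set $U$.

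\emph{Formal lift along $\spar$.} Given an MC element $x_0$ of $\A$, viewed as final data at $\tau = \infty$, construct inductively a formal series $x = x_0 + \spar x_1 + \spar^2 x_2 + \cdots$ in the Rees algebra $\P$ with $[x,x] = 0$ modulo every power of $\spar$. At each order the obstruction to extending lies in a cohomology group of the complex with differential $[x_0,-]$; the homological algebra of \cite{rtgla,rtfil,rtfil2} — a contracting homotopy, or exactness of the relevant Koszul-type complex — shows the obstruction is a coboundary, so the series exists and is unique up to formal gauge. This simultaneously yields completeness at the formal perturbative level: every formal MC element of $\E_\Phi$ near the Kasner regime is gauge-equivalent to such a lift.

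\emph{Truncation, energy estimates, geometry, openness.} Fix a large order $N$, set $x^{(N)} = \sum_{k \le N} \spar^k x_k$, and specialize $\spar$ (absorbing it into weights that degenerate as $\tau \to \infty$) to obtain an approximate solution whose Maurer--Cartan defect is of order $N{+}1$ and hence decays rapidly. After the gauge fixing that renders $[x,x] = 0$ symmetric hyperbolic (cf.\ Definition \ref{def:nondeg}, Lemmas \ref{lemma:glavac}, \ref{lemma:gsf}), run a weighted energy estimate for the remainder $y = x - x^{(N)}$ on $[0,\infty) \times M^3$, with weights tuned to the anisotropic Kasner exponents so that the borderline terms in $\tfrac{d}{d\tau}(\text{energy})$ have a favorable sign or are absorbed by the defect; for $N$ large this closes and produces a genuine smooth solution on the whole semiglobal domain, matching the formal expansion to order $N$. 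Independence of $N$ and of the gauge, smoothness in the data, and injectivity modulo $\sim$ follow from uniqueness in the estimates together with the previous step. From the Kasner-like leading asymptotics one reads off curvature blow-up at $\tau = +\infty$ with $\int \sqrt{-g_{\tau\tau}}\,\dd\tau < \infty$ (finite proper time), and finite speed of propagation gives the particle horizons of Figure \ref{fig:causalx}; openness of $U$ is built in, since the weight choice — and hence the whole estimate — is stable under small perturbations of $x_0$ within the positive-Kasner class, and the implicit function theorem applied to the estimate yields the diffeomorphism onto the image.

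\emph{Main obstacle.} The crux is closing the weighted energy estimate uniformly all the way to the singularity in the genuinely anisotropic case. The filtration is engineered so that in the associated graded the dangerous self-interactions vanish, but the corresponding lower-filtration terms are still present in $\E_\Phi$ and must be controlled over an infinite $\tau$-interval with weights that degenerate at different rates in different spatial directions. The strongest part of the conjecture — that \eqref{eq:scx} is a diffeomorphism onto an \emph{open set of anisotropic initial data} — is harder still, which is why the statement is posed as a conjecture; the construction direction and formal completeness are what the present paper actually establishes, leaving the description of the image open.
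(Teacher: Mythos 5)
Your three-stage plan — leading-order model in $\A$, formal lift along $\spar$ in the Rees algebra $\P$ with the inductive obstruction killed by a contracting homotopy, then truncation plus symmetric hyperbolic energy estimates on $[0,\infty)\times M^3$ — is precisely how the paper organizes its partial results toward Conjecture~\ref{conj} (Theorem~\ref{thm:informalintro} and Sections~\ref{sec:mca}--\ref{section:maintruesolution}), and your closing paragraph correctly re-scopes the claim to what the paper actually proves (construction direction and formal completeness), leaving the diffeomorphism-onto-an-open-set part open. The one place you overreach is in your third paragraph, where you assert that "the implicit function theorem applied to the estimate yields the diffeomorphism onto the image" and that "openness of $U$ is built in"; the paper does not establish either, and Section~\ref{sec:discussion} explains why this is the genuinely hard step: the linearized estimates are expected to lose derivatives so a Nash--Moser type inverse function theorem would be needed, and even then scattering between $\MC(\E_\Phi)$ and $\MC(\A)$ is only expected to hold on nontrivial Frechet submanifolds after quotienting by spatially-varying $\Orth(3)$ frame rotations. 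Your own final paragraph already corrects this, so the sketch is internally inconsistent on this point rather than wrong in substance.
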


Scattering means that two dynamical systems have
asymptotically the same behavior and therefore locally diffeomorphic
solution spaces, typically one simple system and one complicated system\footnote{%
For instance, in classical mechanics, a system of free particles
 and a system with short range interactions.
 For a discussion of scattering in the context of PDE,
 though not gauge theories, see \cite{tao}.}.
In Conjecture \ref{conj} the simple system is the one defined by $\A$,
because the $\MC(\A)$ equations can be solved explicitly\footnote{
The $\MC(\A)$ equations are related to equations in the BKL literature
called velocity-term-dominated or VTD.
However note that $\A$ is a more comprehensive object because it
is a graded Lie algebra.}\textsuperscript{,}\footnote{%
The nonlinear constraint equations
can be analyzed in perturbative regimes,
see Appendix \ref{app:constraints}.}.
\step
Informally, \eqref{eq:scx} is interpreted as the map
from asymptotic final data to initial data.
That is, in this paper the $\MC(\A)$ elements serve as final data.


\subsection{Summary of the main results}
\label{sec:informalmain}

Our results are pieces of Conjecture \ref{conj}. 
Define as above the graded Lie algebra and filtration
on the 4-dimensional domain $[0,\infty) \times M^3$.
We use a subspace $\Aspec \subset \A^1$
respectively an affine shift $\Aspecaffine \subset \A^1$ as a gauge,
that imposes fiberwise linear constraints on frame and connection,
see Definition \ref{def:agsub}.
Partial justification for this gauge is contained in b) below.
\begin{theorem}[Summary]\label{thm:informalintro}
Let $M^3 = \mathbbm{T}^3$.
Let
\[
\mcsp \;\subset\; \MC(\A) \cap \Aspecaffine
\]
be elements with positive Kasner parameters.
See Definitions \ref{def:mcsnew}, \ref{def:mcsp}. Then
\begin{itemize}
\item[a)]
Final data and asymptotic constraints:
The elements in $\mcsp$
are in one-to-one correspondence with
the solutions to an underdetermined
system of first order differential equations on $M^3$ 
that play the role of asymptotic constraint equations.
See Lemma \ref{lemma:mcael}.
A description of the space of solutions
near anisotropic spatially homogeneous elements is in Appendix \ref{app:constraints}.
\item[b)] Formal solutions: There is a smooth map
\[
S_{\textnormal{formal}} \;:\; \mcsp \;\to\; \MC(\P)
\]
that is right inverse to the canonical map $\pi : \MC(\P) \to \MC(\A)$.
This means that there are no obstructions to extending
any given element of $\mcsp$ to a formal power series solution.
See Theorem \ref{thm:ex}.
Formal completeness holds
for anisotropic elements $x \in \mcsp$
in the sense that the $S_{\textnormal{formal}}$-induced map
\[
\MC(\A[[\spar]]) \cap (x + \spar \Aspec[[\spar]])
\;\to\;
\frac{\pi^{-1}(\{x\})}{\exp(\spar\P^0)}
\]
is surjective. In this sense one gets all formal solutions.
See Lemmas \ref{lemma:gae} and \ref{lemma:fcomp}.
\item[c)] Actual solutions: 
There is a map 
\[
S\;:\; U \subset \mcsp \to \MC(\E_\Phi)
\]
where $U$ encodes smallness conditions
(see the hypotheses of Theorem \ref{theorem:sendg},
Remark \ref{remark:s1})
but contains all
spatially homogeneous elements
with positive Kasner parameters (Lemma \ref{lemma:she});
for every $x \in U$ the corresponding $y = S(x)$
is asymptotic, as $\tau \to \infty$, to $x$ with
$\spar_1 = \spar_2 = \spar_3 = 1$;
the element $y$
corresponds to a smooth metric and scalar field 
that solve \eqref{eq:esf}, with curvature singularity
in finite proper time, and particle horizons.
See Theorem \ref{theorem:sendg} and Remark \ref{remark:s1}.
\end{itemize}
\end{theorem}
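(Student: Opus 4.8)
The plan is to assemble the three parts from the algebraic and analytic constructions of the body of the paper; here is the chain of arguments I would run. First I would build the graded Lie algebroid $\E_\Phi$ on $[0,\infty)\times M^3$ (Section~\ref{section:gla}) by adjoining a massless scalar-field sector to the vacuum graded Lie algebra of \cite{rtgla}, and verify the dictionary: a nondegenerate $y\in\MC(\E_\Phi)$ gives a Lorentzian metric $g$ and a function $\phi$ solving~\eqref{eq:esf} (Definition~\ref{def:nondeg}, Lemmas~\ref{lemma:glavac},~\ref{lemma:gsf}). Then I would set up the increasing $\N^3$-indexed filtration (Section~\ref{section:filtration}), form the Rees algebra $\P\subset\E_\Phi[[\spar]]$ and its associated graded $\A=\P/\spar\P$, and fix the fiberwise-linear gauge $\Aspec\subset\A^1$ with affine shift $\Aspecaffine$ (Definition~\ref{def:agsub}). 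With this in place, parts (a)--(c) are all statements about the canonical surjection $\pi:\MC(\P)\to\MC(\A)$: (a) parametrizes its base $\mcsp$ (Definitions~\ref{def:mcsnew},~\ref{def:mcsp}), (b) constructs a formal section of $\pi$, and (c) evaluates that section at $\spar=1$ and promotes it to a genuine solution.

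\emph{Part (a).} In the associated graded $\A$ the bracket has dropped every term carrying a $\tau$-derivative that the filtration annihilates, so the equation $[x,x]=0$ for $x\in\Aspecaffine$ collapses to a first-order, underdetermined system on the single slice $M^3$ --- these are the asymptotic (velocity-term-dominated) constraint equations. I would solve this system directly: the unknowns are a frame, a connection one-form and a scalar-field momentum, subject to a small number of first-order differential relations among them, and imposing positivity of the Kasner parameters carves out $\mcsp$; this yields the one-to-one correspondence of Lemma~\ref{lemma:mcael}. The structure of the solution set near the anisotropic spatially homogeneous points is then obtained by a perturbative (implicit-function) analysis, carried out in Appendix~\ref{app:constraints}.

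\emph{Part (b).} Given $x\in\mcsp$ I would build the section $S_{\textnormal{formal}}(x)\in\MC(\P)$ by successive approximation along the $\spar$-adic filtration of $\P$: having lifted $x$ to an element satisfying Maurer--Cartan modulo $\spar^{k+1}$, the obstruction to lifting one step further is a Maurer--Cartan defect sitting in a graded component of $\A$, annihilated by $d_x=[x,-]$ thanks to the graded Jacobi identity together with $[x,x]=0$; hence the obstruction is a class in the corresponding graded piece of $H^2(\A,d_x)$, and I would prove this group vanishes from the explicit description of $\A$ --- this vanishing is the crux, Theorem~\ref{thm:ex}. When the obstruction vanishes the new increment is determined up to a $1$-cocycle, and the same computation shows that, after quotienting by the gauge group $\exp(\spar\P^0)$, the residual freedom is exactly a choice in $\spar^{k+1}\Aspec$; iterating identifies $\pi^{-1}(\{x\})/\exp(\spar\P^0)$ with $\MC(\A[[\spar]])\cap(x+\spar\Aspec[[\spar]])$, which is the formal completeness of Lemmas~\ref{lemma:gae},~\ref{lemma:fcomp}.

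\emph{Part (c), and the main obstacle.} Now take $M^3=\mathbbm{T}^3$ and $x$ in the smallness set $U$, which by Lemma~\ref{lemma:she} still contains every spatially homogeneous element with positive Kasner parameters. Truncating $S_{\textnormal{formal}}(x)$ at a high order and setting $\spar_1=\spar_2=\spar_3=1$ produces an approximate Maurer--Cartan element $y_{\mathrm{app}}$ whose defect $[y_{\mathrm{app}},y_{\mathrm{app}}]$ decays like an arbitrarily high negative power of $\tau$. I would then write $y=y_{\mathrm{app}}+r$, perform the gauge reduction that turns $[y,y]=0$ into a first-order symmetric hyperbolic system for $r$, and close energy estimates for $r$ in $\tau$-weighted Sobolev spaces on $\mathbbm{T}^3$, integrating inward from $\tau=\infty$; the weights have to be tuned to the anisotropic, Kasner-dependent decay rates coming from the $\A$-dynamics, and positivity of the Kasner parameters is exactly what gives the borderline terms a favorable sign. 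This produces a smooth $y=S(x)\in\MC(\E_\Phi)$ asymptotic to $x$. Finally I would check that $y$ is nondegenerate, so $(g,\phi)$ solves~\eqref{eq:esf}; that the speed-of-propagation integral in $\tau$ converges, giving both finite proper time to the singularity and particle horizons (via finite speed of propagation, Figure~\ref{fig:causalx}); and that a curvature invariant diverges as $\tau\to\infty$, inherited from the genuinely singular associated-graded element $x$. This is Theorem~\ref{theorem:sendg} with Remark~\ref{remark:s1}. The essential algebraic input is the $H^2$-vanishing in part (b), but the genuinely hard step is the part (c) energy estimate: the hyperbolic system degenerates as $\tau\to\infty$, the three spatial directions decay at different rates, and closing the estimate uniformly forces one to choose compatible anisotropic weights and to exploit the sign of the marginal terms --- which is precisely the mechanism being rehearsed here for the harder vacuum problem~\eqref{eq:vc}.
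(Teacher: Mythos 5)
Your proposal follows essentially the same three-part assembly as the paper: part (a) from Lemma~\ref{lemma:mcael} and Appendix~\ref{app:constraints}, part (b) from Theorem~\ref{thm:ex} together with Lemmas~\ref{lemma:gae} and~\ref{lemma:fcomp}, and part (c) from Theorem~\ref{theorem:sendg} via the symmetric hyperbolic machinery of Theorems~\ref{theorem:semia} and~\ref{theorem:asyts}. A few descriptions are imprecise in ways that would matter when filling in details. In part (a), the constraints do not arise because the associated graded \emph{drops} $\tau$-derivatives: the time derivative $\theta_0\p_0$ lives at filtration degree $000$ and survives; it is precisely the $[\theta_0\p_0,-]$ equations at each degree that pin down the explicit $\tau$-dependence of the higher-degree coefficients, and the residue is the $\tau$-independent system \eqref{eq:consmca} on $M^3$. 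In part (b), the obstruction theory in the paper is not set up with $d_x=[x,-]$ on graded pieces of $\A$, but with the much simpler $d_{n\beta}=[\usol_0,-]$ (the degree-$000$ naive leading term only), acting on a fine bidegree-and-$\tau$-polynomial decomposition of the Rees subalgebra $\P_p$; triangularity in the $\tau$-polynomial degree reduces everything to a finite matrix whose determinant is a power of $n_1p_1+n_2p_2+n_3p_3$ --- positivity of the Kasner parameters is used exactly here to avoid small denominators, which you should make explicit, and Lemma~\ref{lemma:fcomp} claims only surjectivity, not the identification you describe. In part (c), the paper's symmetric hyperbolic system does \emph{not} degenerate as $\tau\to\infty$ --- the matrix $A^0$ stays uniformly definite, see \ref{bk2k} and \eqref{acomp2} --- and the energy estimates use a single isotropic weight $e^{-Z\tau}$, not anisotropic weights; what degenerates is the metric/frame, and the mechanism that makes the estimate close is that the threshold $Qz<Z$ can be arranged (with $Z=(J+1)\pmin$) by truncating $S_{\textnormal{formal}}$ at sufficiently high order $J$ rather than by tuning direction-dependent weights.
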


The assumption $M^3 = \mathbbm{T}^3$ is for simplicity,
most statements in the paper are for general $M^3$.
The theorems where we analyze symmetric hyperbolic equations,
in Section \ref{sec:semi}, are for $\mathbbm{T}^3$ only
for convenience so that one has a global coordinate system.
\step
Here are further comments about the statements in the theorem:
\begin{itemize}
\item[b)]
A key step for formal completeness is the
realizability of the gauge
  -- 
every MC element in $x+\spar\A[[\spar]]$ 
has a representative in $x+\spar\Aspec[[\spar]]$.
This is a post hoc justification of the definition of $\Aspec$
in the context of formal solutions.
Note that the homological language allows
for a natural formulation of completeness.
\item[c)]
This existence statement for smooth solutions
uses symmetric hyperbolic gauges from \cite{rtgla},
extended to the scalar field in Section \ref{section:gla},
making energy estimates available
which are indispensable for hyperbolic equations\footnote{%
Note for comparison that \cite{rs,rs2}
use a gauge where the equations are not purely hyperbolic.
Their motivation is to make a construction where the singularity
ends up automatically, and not by a post hoc coordinate transformation,
along a coordinate level set of a time coordinate $t$.
The situation in this paper is different since
we only solve the equations away from the singularity.
(The discussion in Section \ref{sec:discussion}
 proposes using separate gauges for low frequencies and late times
 versus high frequencies and early times.)}.
\end{itemize}

We expect that $S$ is smooth and that $S_{\textnormal{formal}}$ is obtained
by differentiating $S$, but we do not prove this.
We expect that $S$ satisfies a completeness statement,
in the sense of an open set of initial data,
see Section \ref{sec:discussion} for comments about this.

\begin{remark}[Anisotropy assumption] \label{remark:anisotropic}
Some of our results are only for fully anisotropic elements,
see Definition \ref{def:anisotropic}.
No attempt is made to get rid of this assumption
because our interest is vacuum where the BKL heuristics are inherently anisotropic,
and the filtration that we use is designed for this.
The distinction is relevant because we use a conformally orthonormal frame,
and only the anisotropic case picks a distinguished frame.
To make this more concrete,
every real symmetric $n \times n$ matrix is diagonalizable
meaning the map
$\Orth(n) \times \R^n \to \Sym_n$,
$(R,(\lambda_1,\ldots,\lambda_n)) \mapsto R \diag(\lambda_1,\ldots,\lambda_n) R^T$
is surjective, but its Jacobian is invertible
iff the $\lambda_i$ are pairwise different.
We expect that the methods can be extended to the isotropic case,
with technical modifications.
\end{remark}

We expect that the tools used to formulate and prove this theorem,
in particular the filtration from \cite{rtfil},
are suitable also for the oscillatory vacuum case,
see \cite{rtfil,rtfil2} for results at the formal level for a single bounce.

\begin{remark}
One could ask if the formal series solutions we construct
($\MC(\P)$ elements)
actually converge for real analytic
$\mcsp$ elements, which would be an alternative to \cite{ar} for real analytic solutions.
We have not investigated this.
\end{remark}


\subsection{Review of some BKL heuristics}\label{subsec:heuristics}
Here we use the proper time coordinate $t$ 
and we say in due course how this relates
to the BKL time coordinate $\tau$.
This review is organized around Table \ref{table:22}.

\begin{table}[h!]
\renewcommand{\arraystretch}{1.3}
\centering
\begin{tabular}{r||l|l}
& equation \eqref{eq:esf}, with $\phi\neq0$ & equation \eqref{eq:vc}, vacuum\\
\hline\hline
spatially homogeneous & non-oscillatory & oscillatory \\
& discussed below & \cite{hombkl,Ri1,Ri2,Be,LHWG,Br} \\
\hline
no symmetry & non-oscillatory & \cellcolor[gray]{0.9} oscillatory\\
& \cite{ar,rs,rs2} and this paper & \cellcolor[gray]{0.9} wide open
(but see \cite{rtfil,rtfil2})
\end{tabular}
\caption{The BKL heuristics 
and a selection of rigorous work.
Spatially homogeneous refers to a metric
$-(dt)^2 + h_t$ 
where $h_t$ is a $t$-dependent
left invariant metric on a semisimple 3-dimensional Lie group;
the Einstein equations reduce to ordinary differential equations in this case.
Non-oscillatory means that singularities are Kasner-like
whereas oscillatory means that singularities
are, according to BKL heuristics,
only intermittently Kasner-like.
}\label{table:22}
\end{table}

Begin in the upper left entry of Table \ref{table:22}.
For concreteness consider metrics on
$(t_-,t_+) \times S^3$ where the 3-sphere is viewed as a Lie group.
The simplest example are isotropic metrics,
also known as Friedman-Lemaitre-Robertson-Walker metrics,
\begin{equation}\label{eq:flrw}
-(d t)^2 + a(t)^2 g_{S^3}
\end{equation}
with past and future curvature singularities at $t_{\pm}$.
The solution is in Figure \ref{fig:flrw}
and satisfies
$a(t) \sim |t-t_\pm|^\normp$ as $t \to t_{\pm}$
with $\normp = \tfrac13$.
To understand the causal structure near $t_+$
it suffices to look at the conformal class of the metric.
In this particular case one can compactify using a new time coordinate
$s = |t-t_+|^{2/3}$.
The conformal metric extends smoothly to $s=0$
hence the boundary at $t_+$ is a 3-sphere whose
points are causally disconnected, one says that there are particle
horizons\footnote{%
A more precise discussion
uses the concept of a `past indecomposable set' in the literature.
}.
\begin{figure}[h!]
  \centering
  \begin{tikzpicture}
    \pgftransformyscale{0.7}
    \draw [ultra thick] plot [smooth] coordinates {(-2.39628,0.0190578) (-2.39628,0.0232772) (-2.39628,0.0284309) (-2.39627,0.0347256) (-2.39627,0.0424139) (-2.39626,0.0518044) (-2.39624,0.0632741) (-2.3962,0.0772831) (-2.39614,0.0943938) (-2.39603,0.115293) (-2.39582,0.140819) (-2.39543,0.171996) (-2.39474,0.210074) (-2.39347,0.25658) (-2.39115,0.313375) (-2.38693,0.382722) (-2.37926,0.467361) (-2.36528,0.570576) (-2.33986,0.696196) (-2.29378,0.848422) (-2.2107,1.03112) (-2.06285,1.24576) (-1.80682,1.48632) (-1.38763,1.72939) (-0.769577,1.92355) (0.,2.) (0.769577,1.92355) (1.38763,1.72939) (1.80682,1.48632) (2.06285,1.24576) (2.2107,1.03112) (2.29378,0.848422) (2.33986,0.696196) (2.36528,0.570576) (2.37926,0.467361) (2.38693,0.382722) (2.39115,0.313375) (2.39347,0.25658) (2.39474,0.210074) (2.39543,0.171996) (2.39582,0.140819) (2.39603,0.115293) (2.39614,0.0943938) (2.3962,0.0772831) (2.39624,0.0632741) (2.39626,0.0518044) (2.39627,0.0424139) (2.39627,0.0347256) (2.39628,0.0284309) (2.39628,0.0232772) (2.39628,0.0190578)}; 
    \draw (-2.4,-0.3) node {$t_-$};
    \draw (2.4,-0.3) node {$t_+$};
    \draw [->] (0,0) -- (0,2.5) node [anchor = west, xshift=2] {$a(t)$};
    \draw [->] (-2.9,0) -- (2.9,0) node [anchor = west, xshift=2] {$t$};
  \end{tikzpicture}
\caption{%
The metric \eqref{eq:flrw}
solves the Einstein equations \eqref{eq:esf}
when, up to translations and scaling,
$a(\int_0^\tau \cosh(s)^{-3/2} \dd s) = \cosh(\tau)^{-1/2}$.
This function satisfies $aa'' + 2(a')^2 = -\text{const}$.
}\label{fig:flrw}
\end{figure}
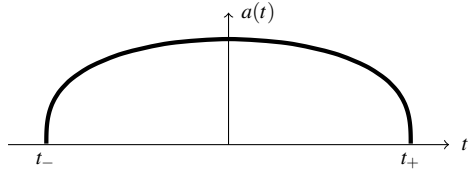

Still in the upper left entry of 
Table \ref{table:22}, consider now anisotropic metrics
\begin{equation}\label{eq:anisotropic}
- (dt)^2 + \textstyle\sum_{i=1}^3 a_i(t)^2 \omega_i \otimes \omega_i
\end{equation}
where $\omega_i$ is a standard 
frame of left-invariant one-forms on $S^3$,
so $\dd \omega_i = \omega_{i+1} \wedge \omega_{i+2}$ with $i \in \Z/3\Z$.
The Einstein equations \eqref{eq:esf} are
ordinary differential equations for
$a_1, a_2, a_3$ and their derivatives\footnote{%
There is no constraint equation but a constraint inequality.
Note that the effect of the scalar field in \eqref{eq:esf},
compared to vacuum \eqref{eq:vc},
is that the Ricci curvature
component $\Ric_g(\p_t,\p_t)$ needs only be non negative.
}.
A typical solution is in Figure \ref{fig:anisotropic}
with some oscillations in the bulk
-- note that we do not refer to such solutions as oscillatory
   because there are only a finite number of oscillations --
but simple limiting behavior
\begin{equation}\label{eq:lim}
a_i(t) \sim |t-t_\pm|^{\normp_i^\pm}
\end{equation}
as $t \to t_{\pm}$ for some $0 < \normp_i^\pm < 1$
with $\normp_1^\pm + \normp_2^\pm + \normp_3^\pm = 1$.
The analysis in this paper includes such solutions as a special case,
but only intervals near $t_{+}$, not the oscillations in the bulk
($t_-$ is analogous).
Near $t_{+}$ and near a point of $S^3$,
they
are well approximated, in local coordinates,
by the Kasner metric, have particle horizons and a curvature singularity.
\begin{figure}[h!]
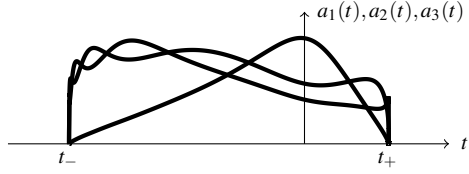

  \centering

\caption{%
A typical anisotropic spatially homogeneous
solution \eqref{eq:anisotropic} of \eqref{eq:esf} in proper time $t$,
with finitely many oscillations (this is clearer in Figure \ref{fig:anisotropic2})
but Kasner-like behavior near $t_{\pm}$.
The image was generated using a numerical solver for ordinary differential equations.
It would not be difficult to prove the behavior seen in such a numerical solution.
}\label{fig:anisotropic}
\end{figure}
\begin{remark}[Kasner metric] \label{remark:kasner1}
The Kasner metric on $(-\infty,t_+) \times \R^3$
(or $\times \mathbbm{T}^3$) is
\[
-(\dd t)^2 + \textstyle\sum_{i=1}^3 |t-t_+|^{2 \normp_i} (\dd x^i)^2
\]
with $\normp_i \in \R$ and $\normp_1 + \normp_2 + \normp_3 = 1$.
It solves \eqref{eq:esf} for some real scalar field $\phi$
if and only if $\normp_2 \normp_3 + \normp_3 \normp_1 + \normp_1 \normp_2 \geq 0$,
and \eqref{eq:vc} if and only if equality holds.
The scalar field crucially allows having $\normp_1,\normp_2,\normp_3 > 0$.
We are interested in the curvature singularity at $t \uparrow t_+$\footnote{%
The Kretschmann curvature invariant is
a constant times $|t-t_+|^{-4}$.
There is a transitive action of $\R \ltimes \R^3$
as a group of global conformal isometries,
so the Kretschmann scalar and all other curvature scalars
are determined up to an overall constant by this action.
}.
If $\normp_1,\normp_2,\normp_3 < 1$ then the boundary at $t_+$
is an $\R^3$ (or $\mathbbm{T}^3$) whose points are causally disconnected.
The same solution is given, using another time coordinate, by
\[
  -e^{-2(p_1+p_2+p_3)\tau} (\dd \tau)^2
   + \textstyle\sum_{i=1}^3 e^{-2p_i\tau} (\dd x^i)^2
\]
where $\normp_i = p_i/(p_1+p_2+p_3)$.
The Kasner spacetime can alternatively be presented as a Maurer-Cartan element in
the graded Lie algebra $\E_\Phi$, see Remark \ref{remark:kasner2}.
\end{remark}

It is very useful to change coordinates to push $t = t_{\pm}$
to points at infinity, $\tau = \pm \infty$.
Among the infinitude of choices, a distinguished choice is
\[
dt/d\tau = a_1a_2a_3
\]
The superficial reason is that instead of Figure \ref{fig:anisotropic}
we get Figure \ref{fig:anisotropic2}. 
More importantly, 
as BKL noticed,
various leading terms become hyperbolic trigonometric functions
of the time coordinate $\tau$. 
With reference to Figure \ref{fig:anisotropic2},
the large positive $\tau$ tail is given by a converging power series
\begin{equation}\label{eq:expa}
a_1, a_2, a_3 \;\in \; \R[[e^{-p_1 \tau}, e^{-p_2 \tau}, e^{-p_3 \tau}]]
\end{equation}
We sometimes refer to $\tau$ as BKL time.
The time function $\tau$ used in this paper reduces to this one
in the spatially homogeneous case.

\begin{figure}[h!]
  \centering
  \begin{tikzpicture}
     \pgftransformyscale{0.7}
     \draw [ultra thick] plot [smooth] coordinates {(-5.,0.126476) (-4.97,0.128249) (-4.94,0.130048) (-4.91,0.131873) (-4.88,0.133724) (-4.85,0.135602) (-4.82,0.137507) (-4.79,0.13944) (-4.76,0.141401) (-4.73,0.14339) (-4.7,0.14541) (-4.67,0.147458) (-4.64,0.149538) (-4.61,0.151648) (-4.58,0.153789) (-4.55,0.155963) (-4.52,0.15817) (-4.49,0.16041) (-4.46,0.162684) (-4.43,0.164994) (-4.4,0.167339) (-4.37,0.169721) (-4.34,0.17214) (-4.31,0.174597) (-4.28,0.177094) (-4.25,0.179631) (-4.22,0.18221) (-4.19,0.18483) (-4.16,0.187495) (-4.13,0.190204) (-4.1,0.19296) (-4.07,0.195763) (-4.04,0.198615) (-4.01,0.201517) (-3.98,0.204472) (-3.95,0.207481) (-3.92,0.210546) (-3.89,0.213669) (-3.86,0.216852) (-3.83,0.220098) (-3.8,0.223409) (-3.77,0.226787) (-3.74,0.230235) (-3.71,0.233757) (-3.68,0.237356) (-3.65,0.241035) (-3.62,0.244798) (-3.59,0.248648) (-3.56,0.252591) (-3.53,0.25663) (-3.5,0.260772) (-3.47,0.26502) (-3.44,0.269381) (-3.41,0.273861) (-3.38,0.278466) (-3.35,0.283204) (-3.32,0.288082) (-3.29,0.293108) (-3.26,0.29829) (-3.23,0.303639) (-3.2,0.309164) (-3.17,0.314876) (-3.14,0.320786) (-3.11,0.326906) (-3.08,0.33325) (-3.05,0.339831) (-3.02,0.346664) (-2.99,0.353765) (-2.96,0.361151) (-2.93,0.368839) (-2.9,0.376849) (-2.87,0.3852) (-2.84,0.393914) (-2.81,0.403012) (-2.78,0.41252) (-2.75,0.422461) (-2.72,0.43286) (-2.69,0.443747) (-2.66,0.455148) (-2.63,0.467092) (-2.6,0.479612) (-2.57,0.492737) (-2.54,0.506499) (-2.51,0.520933) (-2.48,0.53607) (-2.45,0.551945) (-2.42,0.568589) (-2.39,0.586036) (-2.36,0.604316) (-2.33,0.623459) (-2.3,0.64349) (-2.27,0.664433) (-2.24,0.686304) (-2.21,0.709116) (-2.18,0.732871) (-2.15,0.757563) (-2.12,0.78317) (-2.09,0.80966) (-2.06,0.836979) (-2.03,0.865051) (-2.,0.893778) (-1.97,0.923031) (-1.94,0.952647) (-1.91,0.982429) (-1.88,1.01214) (-1.85,1.0415) (-1.82,1.07021) (-1.79,1.09789) (-1.76,1.12418) (-1.73,1.14866) (-1.7,1.17094) (-1.67,1.19061) (-1.64,1.20731) (-1.61,1.22074) (-1.58,1.23068) (-1.55,1.23699) (-1.52,1.23969) (-1.49,1.2389) (-1.46,1.23487) (-1.43,1.22803) (-1.4,1.21888) (-1.37,1.20807) (-1.34,1.19636) (-1.31,1.18458) (-1.28,1.17365) (-1.25,1.16457) (-1.22,1.15838) (-1.19,1.15617) (-1.16,1.15906) (-1.13,1.16812) (-1.1,1.18438) (-1.07,1.20873) (-1.04,1.24183) (-1.01,1.28403) (-0.98,1.33523) (-0.95,1.39473) (-0.92,1.4611) (-0.89,1.53198) (-0.86,1.60398) (-0.83,1.6727) (-0.8,1.73299) (-0.77,1.77958) (-0.74,1.80808) (-0.71,1.81604) (-0.68,1.80382) (-0.65,1.7747) (-0.62,1.73431) (-0.59,1.68967) (-0.56,1.64806) (-0.53,1.61608) (-0.5,1.59879) (-0.47,1.59908) (-0.44,1.61703) (-0.41,1.64962) (-0.38,1.69082) (-0.35,1.73224) (-0.32,1.76433) (-0.29,1.77813) (-0.26,1.76716) (-0.23,1.72895) (-0.2,1.66546) (-0.17,1.58246) (-0.14,1.48814) (-0.11,1.3916) (-0.08,1.30158) (-0.05,1.22528) (-0.02,1.16747) (0.01,1.12987) (0.04,1.11127) (0.07,1.10834) (0.1,1.11677) (0.13,1.13232) (0.16,1.15135) (0.19,1.17107) (0.22,1.18938) (0.25,1.20484) (0.28,1.21646) (0.31,1.22358) (0.34,1.22587) (0.37,1.2232) (0.4,1.21563) (0.43,1.20341) (0.46,1.18689) (0.49,1.16652) (0.52,1.14283) (0.55,1.11635) (0.58,1.08763) (0.61,1.0572) (0.64,1.02557) (0.67,0.993154) (0.7,0.960362) (0.73,0.927524) (0.76,0.894925) (0.79,0.862794) (0.82,0.83132) (0.85,0.800648) (0.88,0.77089) (0.91,0.742129) (0.94,0.714422) (0.97,0.687805) (1.,0.662299) (1.03,0.637912) (1.06,0.614637) (1.09,0.592464) (1.12,0.571373) (1.15,0.551339) (1.18,0.532333) (1.21,0.514325) (1.24,0.497282) (1.27,0.481168) (1.3,0.465947) (1.33,0.451584) (1.36,0.438041) (1.39,0.425284) (1.42,0.413275) (1.45,0.401978) (1.48,0.39136) (1.51,0.381385) (1.54,0.37202) (1.57,0.363232) (1.6,0.35499) (1.63,0.347262) (1.66,0.340019) (1.69,0.333232) (1.72,0.326874) (1.75,0.320917) (1.78,0.315337) (1.81,0.310109) (1.84,0.30521) (1.87,0.300617) (1.9,0.296311) (1.93,0.29227) (1.96,0.288477) (1.99,0.284914) (2.02,0.281564) (2.05,0.278411) (2.08,0.27544) (2.11,0.272639) (2.14,0.269994) (2.17,0.267493) (2.2,0.265125) (2.23,0.26288) (2.26,0.260749) (2.29,0.258721) (2.32,0.25679) (2.35,0.254946) (2.38,0.253185) (2.41,0.251497) (2.44,0.249879) (2.47,0.248324) (2.5,0.246827) (2.53,0.245384) (2.56,0.243989) (2.59,0.24264) (2.62,0.241333) (2.65,0.240063) (2.68,0.238829) (2.71,0.237627) (2.74,0.236455) (2.77,0.23531) (2.8,0.234191) (2.83,0.233094) (2.86,0.23202) (2.89,0.230965) (2.92,0.229928) (2.95,0.228909) (2.98,0.227905) (3.01,0.226915) (3.04,0.225939) (3.07,0.224976) (3.1,0.224024) (3.13,0.223083) (3.16,0.222152) (3.19,0.221231) (3.22,0.220318) (3.25,0.219414) (3.28,0.218517) (3.31,0.217628) (3.34,0.216746) (3.37,0.21587) (3.4,0.215) (3.43,0.214136) (3.46,0.213278) (3.49,0.212425) (3.52,0.211576) (3.55,0.210733) (3.58,0.209895) (3.61,0.20906) (3.64,0.20823) (3.67,0.207404) (3.7,0.206583) (3.73,0.205765) (3.76,0.20495) (3.79,0.20414) (3.82,0.203332) (3.85,0.202529) (3.88,0.201728) (3.91,0.200931) (3.94,0.200138) (3.97,0.199347) (4.,0.198559) (4.03,0.197775) (4.06,0.196993) (4.09,0.196215) (4.12,0.195439) (4.15,0.194667) (4.18,0.193897) (4.21,0.19313) (4.24,0.192366) (4.27,0.191605) (4.3,0.190846) (4.33,0.19009) (4.36,0.189337) (4.39,0.188587) (4.42,0.187839) (4.45,0.187094) (4.48,0.186352) (4.51,0.185613) (4.54,0.184876) (4.57,0.184141) (4.6,0.18341) (4.63,0.18268) (4.66,0.181954) (4.69,0.18123) (4.72,0.180509) (4.75,0.17979) (4.78,0.179074) (4.81,0.17836) (4.84,0.177649) (4.87,0.176941) (4.9,0.176235) (4.93,0.175532) (4.96,0.174831) (4.99,0.174133)};
     \draw [ultra thick] plot [smooth] coordinates {(-5.,0.000036748) (-4.97,0.0000387306) (-4.94,0.0000408207) (-4.91,0.0000430227) (-4.88,0.000045344) (-4.85,0.0000477917) (-4.82,0.0000503705) (-4.79,0.0000530874) (-4.76,0.0000559524) (-4.73,0.0000589731) (-4.7,0.0000621554) (-4.67,0.0000655106) (-4.64,0.0000690489) (-4.61,0.0000727772) (-4.58,0.000076706) (-4.55,0.0000808495) (-4.52,0.0000852179) (-4.49,0.00008982) (-4.46,0.0000946725) (-4.43,0.0000997906) (-4.4,0.000105185) (-4.37,0.000110871) (-4.34,0.000116868) (-4.31,0.000123189) (-4.28,0.000129855) (-4.25,0.000136884) (-4.22,0.000144294) (-4.19,0.000152109) (-4.16,0.000160351) (-4.13,0.000169041) (-4.1,0.000178208) (-4.07,0.000187876) (-4.04,0.000198071) (-4.01,0.000208829) (-3.98,0.000220175) (-3.95,0.000232146) (-3.92,0.000244778) (-3.89,0.000258102) (-3.86,0.000272166) (-3.83,0.000287009) (-3.8,0.00030267) (-3.77,0.000319207) (-3.74,0.000336664) (-3.71,0.000355094) (-3.68,0.000374557) (-3.65,0.000395109) (-3.62,0.000416822) (-3.59,0.000439756) (-3.56,0.000463992) (-3.53,0.000489601) (-3.5,0.000516674) (-3.47,0.000545292) (-3.44,0.000575557) (-3.41,0.000607565) (-3.38,0.000641428) (-3.35,0.00067726) (-3.32,0.000715185) (-3.29,0.00075534) (-3.26,0.00079786) (-3.23,0.000842908) (-3.2,0.000890636) (-3.17,0.000941231) (-3.14,0.00099487) (-3.11,0.00105177) (-3.08,0.00111213) (-3.05,0.00117619) (-3.02,0.00124421) (-2.99,0.00131645) (-2.96,0.00139319) (-2.93,0.00147476) (-2.9,0.00156147) (-2.87,0.00165369) (-2.84,0.0017518) (-2.81,0.0018562) (-2.78,0.00196733) (-2.75,0.00208566) (-2.72,0.0022117) (-2.69,0.00234596) (-2.66,0.00248901) (-2.63,0.00264147) (-2.6,0.00280397) (-2.57,0.00297718) (-2.54,0.00316185) (-2.51,0.00335872) (-2.48,0.0035686) (-2.45,0.00379234) (-2.42,0.00403083) (-2.39,0.00428503) (-2.36,0.00455591) (-2.33,0.00484451) (-2.3,0.00515193) (-2.27,0.00547931) (-2.24,0.00582787) (-2.21,0.00619888) (-2.18,0.00659369) (-2.15,0.00701374) (-2.12,0.0074606) (-2.09,0.00793594) (-2.06,0.00844158) (-2.03,0.00897955) (-2.,0.0095521) (-1.97,0.0101618) (-1.94,0.0108114) (-1.91,0.0115044) (-1.88,0.0122444) (-1.85,0.0130359) (-1.82,0.0138838) (-1.79,0.0147939) (-1.76,0.0157727) (-1.73,0.0168277) (-1.7,0.0179671) (-1.67,0.0192001) (-1.64,0.0205366) (-1.61,0.0219872) (-1.58,0.0235628) (-1.55,0.0252748) (-1.52,0.0271343) (-1.49,0.0291525) (-1.46,0.0313401) (-1.43,0.0337073) (-1.4,0.0362642) (-1.37,0.0390214) (-1.34,0.0419906) (-1.31,0.0451861) (-1.28,0.0486265) (-1.25,0.0523363) (-1.22,0.0563486) (-1.19,0.0607061) (-1.16,0.0654625) (-1.13,0.0706812) (-1.1,0.0764333) (-1.07,0.0827927) (-1.04,0.0898307) (-1.01,0.0976098) (-0.98,0.106181) (-0.95,0.115583) (-0.92,0.125855) (-0.89,0.137047) (-0.86,0.149245) (-0.83,0.162591) (-0.8,0.177301) (-0.77,0.193657) (-0.74,0.211986) (-0.71,0.232606) (-0.68,0.25577) (-0.65,0.281646) (-0.62,0.31034) (-0.59,0.341995) (-0.56,0.376924) (-0.53,0.415731) (-0.5,0.459323) (-0.47,0.508777) (-0.44,0.56509) (-0.41,0.628967) (-0.38,0.700779) (-0.35,0.780766) (-0.32,0.869341) (-0.29,0.967294) (-0.26,1.07568) (-0.23,1.19535) (-0.2,1.32621) (-0.17,1.46629) (-0.14,1.61069) (-0.11,1.75063) (-0.08,1.87308) (-0.05,1.9622) (-0.02,2.00341) (0.01,1.98892) (0.04,1.92139) (0.07,1.81258) (0.1,1.67822) (0.13,1.5329) (0.16,1.38749) (0.19,1.24876) (0.22,1.12033) (0.25,1.00365) (0.28,0.898885) (0.31,0.805494) (0.34,0.722582) (0.37,0.649119) (0.4,0.584057) (0.43,0.526396) (0.46,0.475216) (0.49,0.429693) (0.52,0.3891) (0.55,0.352804) (0.58,0.320258) (0.61,0.290997) (0.64,0.26462) (0.67,0.240788) (0.7,0.21921) (0.73,0.19964) (0.76,0.181864) (0.79,0.1657) (0.82,0.15099) (0.85,0.137593) (0.88,0.125388) (0.91,0.114267) (0.94,0.104132) (0.97,0.094896) (1.,0.0864813) (1.03,0.0788161) (1.06,0.0718357) (1.09,0.0654805) (1.12,0.0596961) (1.15,0.0544329) (1.18,0.049645) (1.21,0.0452907) (1.24,0.0413316) (1.27,0.0377322) (1.3,0.0344602) (1.33,0.0314861) (1.36,0.0287824) (1.39,0.0263248) (1.42,0.0240904) (1.45,0.0220583) (1.48,0.02021) (1.51,0.0185283) (1.54,0.0169973) (1.57,0.0156031) (1.6,0.0143328) (1.63,0.0131747) (1.66,0.0121183) (1.69,0.0111539) (1.72,0.0102731) (1.75,0.0094679) (1.78,0.00873132) (1.81,0.00805698) (1.84,0.00743914) (1.87,0.00687264) (1.9,0.00635279) (1.93,0.00587536) (1.96,0.00543653) (1.99,0.0050329) (2.02,0.00466136) (2.05,0.00431909) (2.08,0.00400356) (2.11,0.0037125) (2.14,0.00344382) (2.17,0.00319563) (2.2,0.00296624) (2.23,0.00275412) (2.26,0.00255784) (2.29,0.00237611) (2.32,0.00220783) (2.35,0.00205187) (2.38,0.00190732) (2.41,0.00177324) (2.44,0.00164887) (2.47,0.00153344) (2.5,0.00142629) (2.53,0.00132679) (2.56,0.00123436) (2.59,0.0011485) (2.62,0.00106871) (2.65,0.000994548) (2.68,0.000925593) (2.71,0.00086149) (2.74,0.000801862) (2.77,0.000746425) (2.8,0.000694836) (2.83,0.000646853) (2.86,0.000602204) (2.89,0.000560655) (2.92,0.000521998) (2.95,0.000486011) (2.98,0.000452529) (3.01,0.000421349) (3.04,0.000392334) (3.07,0.00036532) (3.1,0.000340175) (3.13,0.000316767) (3.16,0.000294969) (3.19,0.00027468) (3.22,0.000255778) (3.25,0.000238186) (3.28,0.000221801) (3.31,0.000206548) (3.34,0.000192356) (3.37,0.000179129) (3.4,0.000166815) (3.43,0.000155352) (3.46,0.000144662) (3.49,0.000134707) (3.52,0.000125445) (3.55,0.00011682) (3.58,0.000108798) (3.61,0.00010133) (3.64,0.0000943613) (3.67,0.0000878704) (3.7,0.0000818244) (3.73,0.0000761852) (3.76,0.0000709375) (3.79,0.0000660629) (3.82,0.0000615403) (3.85,0.0000573387) (3.88,0.0000534085) (3.91,0.0000497169) (3.94,0.0000462683) (3.97,0.000043064) (4.,0.0000400735) (4.03,0.0000372994) (4.06,0.0000347548) (4.09,0.0000324512) (4.12,0.0000303559) (4.15,0.0000283952) (4.18,0.0000264749) (4.21,0.0000245666) (4.24,0.0000227234) (4.27,0.0000210195) (4.3,0.0000195018) (4.33,0.0000181602) (4.36,0.0000169928) (4.39,0.0000160073) (4.42,0.0000152085) (4.45,0.0000144429) (4.48,0.0000135037) (4.51,0.0000122024) (4.54,0.0000106844) (4.57,0) (4.6,0) (4.63,0) (4.66,0) (4.69,0) (4.72,0) (4.75,0) (4.78,0) (4.81,0) (4.84,0) (4.87,0) (4.9,0) (4.93,0) (4.96,0) (4.99,0)};
     \draw [ultra thick] plot [smooth] coordinates {(-5.,0.228274) (-4.97,0.232418) (-4.94,0.236637) (-4.91,0.24093) (-4.88,0.2453) (-4.85,0.249748) (-4.82,0.254274) (-4.79,0.258881) (-4.76,0.26357) (-4.73,0.268341) (-4.7,0.273196) (-4.67,0.278137) (-4.64,0.283164) (-4.61,0.288279) (-4.58,0.293483) (-4.55,0.298777) (-4.52,0.304163) (-4.49,0.309641) (-4.46,0.315213) (-4.43,0.32088) (-4.4,0.326643) (-4.37,0.332504) (-4.34,0.338462) (-4.31,0.34452) (-4.28,0.350678) (-4.25,0.356937) (-4.22,0.363297) (-4.19,0.36976) (-4.16,0.376326) (-4.13,0.382996) (-4.1,0.38977) (-4.07,0.396648) (-4.04,0.403631) (-4.01,0.410718) (-3.98,0.417909) (-3.95,0.425204) (-3.92,0.432603) (-3.89,0.440104) (-3.86,0.447706) (-3.83,0.455409) (-3.8,0.46321) (-3.77,0.471108) (-3.74,0.479101) (-3.71,0.487185) (-3.68,0.495358) (-3.65,0.503617) (-3.62,0.511957) (-3.59,0.520374) (-3.56,0.528863) (-3.53,0.537419) (-3.5,0.546036) (-3.47,0.554707) (-3.44,0.563424) (-3.41,0.57218) (-3.38,0.580966) (-3.35,0.589772) (-3.32,0.598588) (-3.29,0.607403) (-3.26,0.616205) (-3.23,0.624981) (-3.2,0.633718) (-3.17,0.642401) (-3.14,0.651015) (-3.11,0.659543) (-3.08,0.66797) (-3.05,0.676276) (-3.02,0.684445) (-2.99,0.692457) (-2.96,0.700293) (-2.93,0.707932) (-2.9,0.715355) (-2.87,0.722541) (-2.84,0.729471) (-2.81,0.736124) (-2.78,0.74248) (-2.75,0.748521) (-2.72,0.754228) (-2.69,0.759584) (-2.66,0.764573) (-2.63,0.769181) (-2.6,0.773396) (-2.57,0.777208) (-2.54,0.780608) (-2.51,0.783593) (-2.48,0.78616) (-2.45,0.78831) (-2.42,0.79005) (-2.39,0.791387) (-2.36,0.792336) (-2.33,0.792914) (-2.3,0.793143) (-2.27,0.793054) (-2.24,0.792678) (-2.21,0.792056) (-2.18,0.791235) (-2.15,0.790268) (-2.12,0.789217) (-2.09,0.788151) (-2.06,0.78715) (-2.03,0.786302) (-2.,0.785707) (-1.97,0.785476) (-1.94,0.785735) (-1.91,0.78662) (-1.88,0.788286) (-1.85,0.790899) (-1.82,0.794643) (-1.79,0.799718) (-1.76,0.806339) (-1.73,0.814733) (-1.7,0.825141) (-1.67,0.837812) (-1.64,0.852999) (-1.61,0.870953) (-1.58,0.891915) (-1.55,0.916109) (-1.52,0.943726) (-1.49,0.974913) (-1.46,1.00976) (-1.43,1.04825) (-1.4,1.09029) (-1.37,1.13562) (-1.34,1.18382) (-1.31,1.23422) (-1.28,1.28593) (-1.25,1.33777) (-1.22,1.38826) (-1.19,1.43568) (-1.16,1.47814) (-1.13,1.51374) (-1.1,1.54076) (-1.07,1.55794) (-1.04,1.56469) (-1.01,1.56127) (-0.98,1.54881) (-0.95,1.52929) (-0.92,1.50535) (-0.89,1.4801) (-0.86,1.45693) (-0.83,1.4393) (-0.8,1.43055) (-0.77,1.43372) (-0.74,1.45126) (-0.71,1.4847) (-0.68,1.5343) (-0.65,1.59864) (-0.62,1.67424) (-0.59,1.75529) (-0.56,1.83364) (-0.53,1.89951) (-0.5,1.94306) (-0.47,1.95685) (-0.44,1.93808) (-0.41,1.88947) (-0.38,1.81805) (-0.35,1.73267) (-0.32,1.64142) (-0.29,1.55004) (-0.26,1.46144) (-0.23,1.37604) (-0.2,1.29293) (-0.17,1.21105) (-0.14,1.13036) (-0.11,1.05239) (-0.08,0.979994) (-0.05,0.916499) (-0.02,0.86447) (0.01,0.824763) (0.04,0.796288) (0.07,0.776551) (0.1,0.762599) (0.13,0.751818) (0.16,0.742346) (0.19,0.733117) (0.22,0.723707) (0.25,0.714117) (0.28,0.704582) (0.31,0.695433) (0.34,0.687) (0.37,0.679567) (0.4,0.67335) (0.43,0.668489) (0.46,0.66506) (0.49,0.663078) (0.52,0.662517) (0.55,0.663313) (0.58,0.665383) (0.61,0.668626) (0.64,0.672935) (0.67,0.678196) (0.7,0.6843) (0.73,0.691137) (0.76,0.698602) (0.79,0.706595) (0.82,0.715021) (0.85,0.723789) (0.88,0.732815) (0.91,0.742015) (0.94,0.751313) (0.97,0.760633) (1.,0.769903) (1.03,0.779055) (1.06,0.788022) (1.09,0.796738) (1.12,0.805141) (1.15,0.81317) (1.18,0.820767) (1.21,0.827877) (1.24,0.834446) (1.27,0.840425) (1.3,0.845769) (1.33,0.850434) (1.36,0.854384) (1.39,0.857586) (1.42,0.860013) (1.45,0.861643) (1.48,0.862461) (1.51,0.862457) (1.54,0.861627) (1.57,0.859974) (1.6,0.857507) (1.63,0.85424) (1.66,0.850192) (1.69,0.845389) (1.72,0.839859) (1.75,0.833636) (1.78,0.826756) (1.81,0.819258) (1.84,0.811184) (1.87,0.802576) (1.9,0.793478) (1.93,0.783935) (1.96,0.77399) (1.99,0.763687) (2.02,0.75307) (2.05,0.742178) (2.08,0.731053) (2.11,0.719732) (2.14,0.70825) (2.17,0.696643) (2.2,0.684942) (2.23,0.673177) (2.26,0.661375) (2.29,0.649561) (2.32,0.637759) (2.35,0.62599) (2.38,0.614273) (2.41,0.602625) (2.44,0.591062) (2.47,0.579597) (2.5,0.568243) (2.53,0.557011) (2.56,0.54591) (2.59,0.534947) (2.62,0.524131) (2.65,0.513467) (2.68,0.50296) (2.71,0.492614) (2.74,0.482433) (2.77,0.472418) (2.8,0.462572) (2.83,0.452897) (2.86,0.443393) (2.89,0.43406) (2.92,0.424899) (2.95,0.415909) (2.98,0.407089) (3.01,0.398438) (3.04,0.389955) (3.07,0.381639) (3.1,0.373488) (3.13,0.365499) (3.16,0.357672) (3.19,0.350003) (3.22,0.342491) (3.25,0.335134) (3.28,0.327928) (3.31,0.320872) (3.34,0.313963) (3.37,0.307198) (3.4,0.300576) (3.43,0.294094) (3.46,0.287748) (3.49,0.281537) (3.52,0.275457) (3.55,0.269508) (3.58,0.263685) (3.61,0.257986) (3.64,0.25241) (3.67,0.246953) (3.7,0.241613) (3.73,0.236388) (3.76,0.231276) (3.79,0.226274) (3.82,0.221379) (3.85,0.21659) (3.88,0.211905) (3.91,0.207321) (3.94,0.202835) (3.97,0.198447) (4.,0.194154) (4.03,0.189954) (4.06,0.185844) (4.09,0.181824) (4.12,0.177891) (4.15,0.174043) (4.18,0.170278) (4.21,0.166595) (4.24,0.162992) (4.27,0.159467) (4.3,0.156018) (4.33,0.152644) (4.36,0.149344) (4.39,0.146115) (4.42,0.142955) (4.45,0.139865) (4.48,0.136841) (4.51,0.133883) (4.54,0.130989) (4.57,0.128158) (4.6,0.125389) (4.63,0.122679) (4.66,0.120028) (4.69,0.117435) (4.72,0.114898) (4.75,0.112415) (4.78,0.109987) (4.81,0.107611) (4.84,0.105287) (4.87,0.103013) (4.9,0.100788) (4.93,0.0986118) (4.96,0.0964825) (4.99,0.0943993)};
    \draw [->] (0,0) -- (0,2.5) node [anchor = west, xshift=2] {$a_1$, $a_2$, $a_3$};
    \draw [->] (-5.5,0) -- (5.5,0) node [anchor = west, xshift=2] {$\tau$};
  \end{tikzpicture}
\caption{%
The solution in Figure \ref{fig:anisotropic}
relative to BKL time $\tau$.
}\label{fig:anisotropic2}
\end{figure}
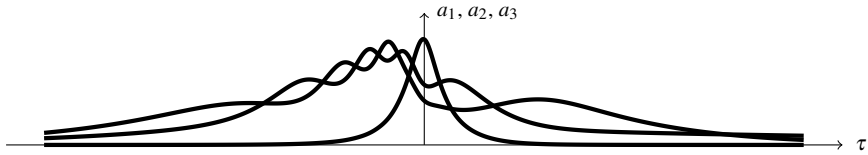

The upper right entry in Table \ref{table:22}
refers to the vacuum equations \eqref{eq:vc}.
There are no isotropic solutions \eqref{eq:flrw} on an interval times $S^3$.
Consider instead anisotropic metrics \eqref{eq:anisotropic}.
The heuristics of BKL and Misner
\cite{bkl,misner}
predict complicated oscillations that never stop;
the spacetime becomes singular at a finite time $t_{+}$
but only after infinitely many oscillations.
The Kasner metric is only intermittently a good approximation:
the only-approximately-defined parameters $\normp_1$, $\normp_2$, $\normp_3$
jump countably many times\footnote{%
Two positive and one negative.
Note that there are Kasner metrics for vacuum
with two positive and one negative $\normp_i$,
but this situation is unstable under perturbing $\R^3$ as a Lie group into $S^3$.
The instability leads to a transition (called BKL bounce)
from one Kasner-like interval to the next.}.
A substantial portion of these heuristics were made rigorous in \cite{hombkl}:
\begin{itemize}
\item A large class of semiglobal oscillatory solutions was constructed, up to $t_+$.
\item It was shown that these solutions have particle horizons.
\end{itemize}
A key tool in \cite{hombkl} are Poincare sections to construct a
discrete dynamical system of partially defined Poincare maps.
The Poincare map is constructed by shadowing
explicit approximate solutions.
Iterating the partially defined Poincare maps requires care
because part of this discrete dynamical system behaves like
the Gauss map for continued fractions\footnote{%
This is closely related to the appearance of
 denominators $1/(n_1p_1 + n_2p_2 + n_3p_3)$
 in formal series solutions in the present paper, with $n_i \in \Z_{>0}$.
 In the present paper we can take $p_1,p_2,p_3 > 0$ due to the scalar field,
 but for vacuum one $p_i$ is negative and
 one is faced with a small denominator problem.
 Actually there are even zero denominators at points where there
 are linear dependencies over the positive rationals
 among the $p_i$.
},
and only succeeds because there is another emerging small parameter
-- the bounces become more and more pronounced.
Other interesting rigorous work related to this case
includes \cite{lw1,lw2,Ri1,Ri2,Be,LHWG,Br}.
\step
The lower left entry of Table \ref{table:22}
is the topic of this paper
and requires solving the partial differential equations \eqref{eq:esf}.
The BKL heuristics are non-oscillatory and
qualitatively like in the upper left entry of Table \ref{table:22}.
We employ expansions similar to \eqref{eq:expa}
to implement Theorem \ref{thm:informalintro} b)
but $p_1,p_2,p_3$ are now functions of space and
the coefficients are functions of space and polynomial in $\tau$,
roughly corresponding to logarithms for proper time.
Such expansions diverge in general,
and we make energy estimates to construct actual solutions,
implementing Theorem \ref{thm:informalintro} c).
\step
The lower right entry of
Table \ref{table:22} for vacuum \eqref{eq:vc}
motivates this endeavor.
It remains to be seen if the oscillatory BKL heuristics will hold up.
If they do, the formulation of a theorem is likely to be subtle.
It seems essential that there be particle horizons
to allow a local-in-space analysis,
and it is encouraging that particle horizons appear in all cases discussed above
 including the oscillatory spatially homogeneous case \cite{hombkl}.
We speculate that a successful approach to this problem
will involve:
\begin{itemize}
\item The construction of 
a partially defined infinite-dimensional discrete dynamical system,
using shadowing.
This will require energy estimates.
\item For shadowing,
the construction of explicit leading terms for one BKL bounce.
\item
Control of the first derivative of the partially
defined discrete dynamical system,
because the inverse function theorem requires this.
\item An algebraic framework that is flexible and provides enough language
to be able to change between different gauges,
because a single gauge will not do.
\end{itemize}
Shadowing is used, in a simpler context, in \cite{hombkl}.
A leading term for one BKL bounce is in \cite{rtfil},
and this extends to formal power series solutions \cite{rtfil2}.
The MC formulation of the Einstein equations in \cite{rtgla} 
makes homological algebra language available that is very suited
to discussing gauge freedom, and quadratic equations.

\newcommand{\trap}{\textnormal{trapezoid}}

\section{Preliminaries} \label{sec:preliminaries}
\emph{Notation.}
Let $\N = \{0,1,2,\ldots\}$ be the non-negative integers;
$\End(\R^\dimy)$
the vector space of $\dimy \times \dimy$ matrices;
$\Sym_\dimy$ the vector space of real symmetric matrices.
\step
\emph{Domain.}
Sometimes we write $\dimx$ for the spatial dimension,
though the application will be to $\dimx = 3$.
Instead of $[0,\infty) \times M^\dimx$ 
we sometimes work on $[0,\infty) \times \mathbbm{T}^\dimx$
for simplicity where the notation for partial derivatives is more straightforward.
\step
\emph{Graded Lie algebra.}
A real gLa \cite{niri} is a real vector space $\gx = \bigoplus_{i \in \Z} \gx^i$
with an $\R$-bilinear bracket $[-,-] : \gx^i \times \gx^j \to \gx^{i+j}$
that satisfies
$[x,y] = -(-1)^{xy}[y,x]$ and
\begin{equation}\label{eq:jacobi}
[x,[y,z]]
+ (-1)^{x(y+z)}[y,[z,x]]
+ (-1)^{z(x+y)}[z,[x,y]] = 0
\end{equation}
\step
\emph{Maurer-Cartan elements.} The set of Maurer-Cartan elements in a gLa $\gx$ is
\[
\MC(\gx) = \{u \in \gx^1 \mid [u,u] = 0\}
\]
Associated to every MC element is a differential
$d = [u,-] \in \End^1(\gx)$, so $d^2 = 0$.
\step
\emph{Cyclic summation.} We say a triple $(i,j,k)$ is cyclic if it is in the set
\begin{equation}\label{eq:cycl}
C = \{(1,2,3), (2,3,1), (3,1,2)\}
\end{equation}


\newcommand{\plat}{\p_{\textnormal{latREPLACE}}}

\section{Two semiglobal theorems for symmetric hyperbolic systems} \label{sec:semi}
This section is logically self-contained
and not specific to general relativity.
It contains two abstract theorems
for semiglobal solutions to quasilinear
partial differential equations.
They are only applied in Section \ref{section:maintruesolution}.

\subsection{Preliminaries}
Let $\dimx \geq 1$ be an integer.
Throughout Section \ref{sec:semi} we are on
\[
\Omega = [0,\infty) \times \mathbbm{T}^\dimx
\qquad
\text{with coordinates denoted}
\qquad
\ind = (\tau,x)
\]
Standard partial derivatives are denoted
$\p_{\mu}$, where $\p_0$ and $\p_1,\dots,\p_\dimx$
are derivatives with respect to $\tau$ and $x$ respectively.
If $\alpha \in \N^{1+\dimx}$ then $\p^\alpha$ is the corresponding
higher partial derivative; this includes time derivatives.
If $\fld = \fld(\ind)$ then
$\|\fld\|_{L^p}$ is the $L^p$-norm over $\Omega$,
whereas
$\|\fld\|_{L^p_x}$ is the $L^p$-norm over $\mathbbm{T}^\dimx$
and the result still depends on $\tau$.
It is always understood that $\mu = 0\ldots \dimx$
whereas $\alpha \in \smash{\N^{1+\dimx}}$.

\newcommand{\vst}{\rule{0pt}{9pt}}
\subsection{A square system} \label{section:analysis1}

A square system is one with 
as many equations as unknowns.
The system discussed here, \eqref{pde1},
is chosen to suit the
non-square system in {\ptwo}.
\begin{theorem}[Square system on a semiglobal domain] \label{theorem:semia}
Given five real constants
\begin{equation}\label{csx1}
1 < q < Q
\qquad
0 < Qz < Z
\qquad
b > 0
\qquad 
\end{equation}
Given a spatial dimension $\dimx \geq 1$
and another integer $\dimy \geq 1$. Given
\begin{itemize}
\item[]
$A^\mu \in \Hom(\R^\dimy,\Sym_\dimy)$
and $B \in \Hom(\R^\dimy \otimes \R^\dimy, \R^\dimy)$ with $B$ symmetric\footnote{%
For all
$u,v \in \R^n$ we abbreviate $B(u,v) = B(u \otimes v) \in \R^n$.
Symmetry means $B(u,v) = B(v,u)$.}.
\end{itemize}
Then there exist $\eps > 0$ and $C > 0$\footnote{%
It is understood from the formulation of the theorem,
in particular the order in which logical quantifiers appear,
that $\eps$ and $C$ can only depend on the data already mentioned.
} such that for all
$0 < \delta \leq \eps$ and all
\begin{itemize}
\item[]
$a^\mu \in C^\infty(\Omega,\Sym_\dimy)$
and $L \in C^\infty(\Omega,\End(\R^{\dimy}))$
and $F \in C^\infty(\Omega,\R^\dimy)$
\end{itemize}
if\footnote{%
Statements \ref{acs1}, \ref{acs2}, \ref{afderyy}, \ref{aLder}, \ref{aFder}, \ref{affdec1}
require a component-wise norm. It is understood that
this must be fixed before the theorem.
The theorem is for all choices.
We are more specific in \ref{alop}.
}
\begin{enumerate}[({a}1),leftmargin=10mm]
\item\label{abdd} 
$a^\mu$, $L$, $F$
and all their derivatives of all orders are bounded on $\Omega$.
\item\label{ak2k}
$q^{-1} \one \leq a^0 \leq q \one$ on $\Omega$.
\item\label{alop}
$\|L\|_{L^\infty} \leq z$.
The pointwise norm is the operator norm on $\End(\R^\dimy)$.
\item\label{acs1}
$\|\p a\|_{L^\infty} \leq \eps$.
\item\label{acs2}
$\lim_{\tau \to \infty} \|\p a\|_{L^\infty_x} = 0$.
\item\label{afderyy}
$\|\p^\alpha a \|_{L^\infty} \leq b$
for $2 \leq |\alpha| \leq \dimx+3$.
\item\label{aLder}
$\|\p^\alpha L \|_{L^\infty} \leq b$
for $1 \leq |\alpha| \leq \dimx+3$.
\item\label{aFder}
$\|\p^\alpha F \|_{L^\infty_x} \leq \delta e^{-Z \tau}$
for $|\alpha| \leq \dimx+3$.
\item \label{affdec1}
$\| \p^\alpha F \|_{L^\infty_x} = \mathcal{O}(e^{-Z \tau})$ as $\tau \to \infty$ for all $\alpha$.
\end{enumerate}
then there exists a unique
$\fld \in C^\infty(\Omega,\R^\dimy)$ that satisfies,
abbreviating $\fld = \fld(\ind)$,
\begin{subequations} \label{sys1}
\begin{align}
\label{pde1}
(a^\mu(\ind) + A^\mu(\fld))\p_\mu \fld + L(\ind)\fld + 
\tfrac{1}{2} B(\fld,\fld) + F(\ind) & = 0\\
\label{acomp1}
\vst
Q^{-1} \one \leq a^0(\ind) + A^0(\fld) & \leq Q \one
\;\;\textnormal{on $\Omega$}\\ \displaybreak[0]
\label{adec11}
\vst
\textstyle\sup_{|\alpha| \leq 1} \|\p^\alpha \fld\|_{L^\infty_x} & \leq
\delta C e^{-Z \tau}
\\
\label{adec1}
\vst
\|\p^\alpha \fld \|_{L^\infty_x}
& = \mathcal{O}(e^{- Z \tau})
\;\;\text{as $\tau \to \infty$}
\end{align}
\end{subequations}
For uniqueness,
one does not need
$q$, $b$, $\delta$
and one can drop
\ref{ak2k},
\ref{acs1},
\ref{afderyy}, \ref{aLder}, \ref{aFder}, \ref{affdec1}
and \eqref{adec11}.
No constants $\eps$ and $C$ are produced in this case.
\end{theorem}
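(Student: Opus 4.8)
The plan is to treat \eqref{pde1} not as an initial value problem but as an asymptotic problem: a solution is singled out by the requirement that it decay at $\tau=\infty$, so heuristically one ``solves backwards from $\tau=+\infty$, where $\fld=0$''. Concretely: (i) for each finite $T$, solve the quasilinear system \eqref{pde1} on $[0,T]\times\mathbbm{T}^\dimx$ backwards from the zero datum at $\tau=T$ (the coefficient $a^0+A^0(0)=a^0$ is positive definite by \ref{ak2k}), using the classical local-in-time theory for quasilinear symmetric hyperbolic systems; (ii) prove a priori \emph{weighted} energy estimates, with weight $e^{Z\tau}$, uniform in $T$ and of size $\mathcal O(\delta)$, which keep the solution in the hyperbolicity region \eqref{acomp1}, keep it bounded in $H^{\dimx+3}_x$ --- hence it exists on all of $[0,T]$ --- and give \eqref{adec11}; (iii) send $T\to\infty$ by compactness, then bootstrap to $C^\infty$ and to the qualitative decay \eqref{adec1}; (iv) prove uniqueness by a separate backwards energy argument that needs none of the smallness hypotheses. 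Step (ii) is the heart, and $Qz<Z$ is exactly what makes it go.

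For step (ii), write $v^\alpha=e^{Z\tau}\p^\alpha\fld$ and $\widetilde{a}^\mu=a^\mu+A^\mu(\fld)$. Differentiating \eqref{pde1} and multiplying by $e^{Z\tau}$,
\[
\widetilde{a}^\mu\p_\mu v^\alpha+(L-Z\widetilde{a}^0)v^\alpha=\mathcal R^\alpha,
\]
where $\mathcal R^\alpha$ collects the commutator $-e^{Z\tau}[\p^\alpha,\widetilde{a}^\mu\p_\mu]\fld$, the lower-order pieces $-e^{Z\tau}\sum_{0\neq\beta\le\alpha}\binom{\alpha}{\beta}\p^\beta L\,\p^{\alpha-\beta}\fld$, the differentiated quadratic term $-\tfrac12e^{Z\tau}\p^\alpha B(\fld,\fld)$, and the forcing $-e^{Z\tau}\p^\alpha F$. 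The standard energy identity for $\|v^\alpha\|^2_{\widetilde{a}^0}:=\int_{\mathbbm{T}^\dimx}\langle\widetilde{a}^0v^\alpha,v^\alpha\rangle\,dx$ produces a term $+2Z\|v^\alpha\|^2_{\widetilde{a}^0}$ (the weight), a term $\le 2zQ\|v^\alpha\|^2_{\widetilde{a}^0}$ (from $L$, via \ref{alop} and \eqref{acomp1}), a term $\le\mathcal O(\eps+\delta)\|v^\alpha\|^2_{\widetilde{a}^0}$ (from $\p_\tau\widetilde{a}^0+\p_i\widetilde{a}^i=\p a+A(\p\fld)$, small by \ref{acs1} and the not-yet-closed \eqref{adec11}), and the pairing with $\mathcal R^\alpha$. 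Summing over $|\alpha|\le\dimx+3$ and putting $\mathcal E:=\sum\|v^\alpha\|^2_{\widetilde{a}^0}$ yields $\mathcal E'\ge\lambda\mathcal E-|\mathcal R|$ with $\lambda=2Z-2zQ-\mathcal O(\eps+\delta)>0$ for $\eps,\delta$ small, which is where $Qz<Z$ enters. One controls $\mathcal R$ by induction on $|\alpha|$: each term in $\mathcal R^\alpha$ carries either a factor $\p^{<\alpha}\fld$ (the inductive hypothesis), or a factor that is $L^\infty_x$-small (an order-$\le 1$ derivative of $\fld$, or $\p a$), with the high derivatives of $a$ and $L$ bounded by $b$ via \ref{afderyy}--\ref{aLder} and the high derivatives of $\fld$ moved into $L^2_x$ by $L^\infty_x$--$L^2_x$ splitting; the quadratic term is $\mathcal O(e^{-Z\tau}\mathcal E^{3/2})$, and $\|e^{Z\tau}\p^\alpha F\|_{L^2_x}=\mathcal O(\delta)$ by \ref{aFder}. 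Integrating $\tfrac{d}{d\tau}(e^{-\lambda\tau}\mathcal E)\ge-e^{-\lambda\tau}|\mathcal R|$ from $\tau$ to $T$ with $\mathcal E(T)=0$, run as a continuity/bootstrap (improve $2C\delta$ to $C\delta$), gives $\mathcal E(\tau)\le(C\delta)^2$ uniformly in $T$; Sobolev embedding on $\mathbbm{T}^\dimx$ (here $\dimx+3>\dimx/2+1$) then yields $\sup_{|\alpha|\le 1}\|e^{Z\tau}\p^\alpha\fld\|_{L^\infty_x}\le C\delta$, closing \eqref{adec11} and, since $\|A^0(\fld)\|_{L^\infty}\le C\delta$ and $q<Q$, securing \eqref{acomp1}.

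With these bounds the finite-$T$ solutions exist on all of $[0,T]$, and by Arzel\`a--Ascoli (equivalently weak-$*$ compactness in the weighted Sobolev spaces over each $[0,T']\times\mathbbm{T}^\dimx$) a subsequence converges to $\fld\in C^\infty$ solving \eqref{pde1} with \eqref{acomp1}, \eqref{adec11}. Re-running the weighted estimate at every order --- now using the full smoothness of $a,L,F$, the decay $\|\p^\alpha F\|_{L^\infty_x}=\mathcal O(e^{-Z\tau})$ from \ref{affdec1}, and $\|\p a\|_{L^\infty_x}\to0$ from \ref{acs2} --- bootstraps \eqref{adec1}. For uniqueness, let $\fld_1,\fld_2$ both satisfy \eqref{pde1}, \eqref{acomp1}, \eqref{adec1}; then $w=\fld_1-\fld_2$ solves the homogeneous linear symmetric hyperbolic equation $\widetilde{a}_1^\mu\p_\mu w+\mathcal M w=0$, $\mathcal M w=A^\mu(w)\p_\mu\fld_2+Lw+\tfrac12B(w,\fld_1+\fld_2)$, whose zeroth-order coefficient has $L^\infty$ norm $\le z+\mathcal O(e^{-Z\tau})$ by \eqref{adec1} and \ref{alop}, and whose leading-coefficient derivative $\p\widetilde{a}_1=\p a+A(\p\fld_1)$ is small for large $\tau$ by \ref{acs2} and \eqref{adec1}. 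Picking $\tau_0$ large, the backwards energy estimate on $[\tau_0,\infty)$ has net rate $2Z-2zQ-\mathcal O(e^{-Z\tau_0})>0$ (again $Qz<Z$); integrating from $T$, letting $T\to\infty$, and using $\|w(T)\|_{L^2_x}=\mathcal O(e^{-ZT})$ forces $w\equiv0$ on $[\tau_0,\infty)$, and finite-time backwards uniqueness for the linear equation (needing only \eqref{acomp1} and \ref{abdd}) gives $w\equiv0$ on $[0,\tau_0]$. This uses neither $q$, $b$, $\delta$ nor \ref{ak2k}, \ref{acs1}, \ref{afderyy}, \ref{aLder}, \ref{aFder}, \ref{affdec1}, as asserted.

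The main obstacle is the bookkeeping in step (ii): one must check that, apart from lower-triangular couplings and genuinely $L^\infty_x$-small factors, the only contributions to the energy inequality not dominated by the forcing are those carrying a factor $\p a$ (small by \ref{acs1}) or $\fld,\p\fld$ (small by the bootstrap), so that the effective exponential rate stays strictly below $Z$ via $Qz<Z$; if a $b$-sized coefficient multiplied a top-order derivative of $\fld$ without such a gain, the scheme would collapse. A secondary point is the usual quasilinear loss of a derivative, which is why the finite-$T$ problems are solved by the established local theory and one only passes to the limit afterwards, rather than iterating directly on the semiglobal domain.
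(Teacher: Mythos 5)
Your proposal is correct and takes essentially the same route as the paper's own proof: backward energy estimates closed at order $\dimx+3$ by a continuity argument that exploits the gap $Qz<Z$ and the smallness of $\p a$ to control the top-order commutator, a compactness limit from solutions vanishing at large $\tau$, higher-order estimates to get smoothness and \eqref{adec1}, and a separate backward energy argument for uniqueness driven by \ref{acs2} and the decay \eqref{adec1}. The only cosmetic differences are that the paper truncates the forcing $F$ to have compact support rather than truncating the domain at $\tau=T$, and works with the unweighted energies $e_\alpha$ and a hypercube bootstrap rather than with the weighted quantities $e^{Z\tau}\p^\alpha\fld$ and a Gronwall-type integration.
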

Equation \eqref{pde1} being quasilinear,
we make assumptions and draw conclusions
to keep the causal structure intact,
see \ref{ak2k} and \eqref{acomp1}.
The theorem says that, modulo all other assumptions,
the size of the solution $\fld$ is linear in the size of the inhomogeneity $F$,
see \ref{aFder} and \eqref{adec11},
and this is the purpose of $\delta$.
The uniqueness claim is not optimal;
one can state weaker conditions that single out this solution.
Another improvement would be to state smoothness properties
of the map $(a,A,L,B,F) \mapsto \fld$.

\begin{proof}[Uniqueness]
\newcommand{\diffu}{U}
This is an opportunity to recall  energy estimates.
Suppose $\fld_1$, $\fld_2$ are two solutions.
Then $\diffu = \fld_1-\fld_2$ satisfies the linear homogeneous
\begin{equation}\label{disceq1}
(a^\mu(\ind) + A^\mu(\fld_1))\p_\mu \diffu
= -L(\ind)\diffu - \tfrac12 B(\fld_1+\fld_2,\diffu)
- A^\mu(\diffu) \p_\mu \fld_2
\end{equation}
and $\|\diffu\|_{L^\infty_x} = \mathcal{O}(e^{-Z\tau})$.
We must show $\diffu = 0$.
Parse \eqref{disceq1} as
$w^\mu(\ind) \p_\mu \diffu = l(\ind) \diffu$
with $w^\mu \in C^\infty(\Omega,\Sym_\dimy)$,
$l \in C^\infty(\Omega,\End(\R^{\dimy}))$.
Set $j^\mu = \diffu^T w^\mu \diffu$. Then
$\p_\mu j^\mu = \diffu^T J \diffu$ where $J = \p_\mu w^\mu + l + l^T$,
and this crucially relies on the $w^\mu$ being symmetric.
Apply the divergence theorem
to $j^\mu$ on the portion of $\Omega$
that lies between $\tau \in [\tau_0,\tau_1]$ for some $0 \leq \tau_0 < \tau_1 < \infty$.
Abbreviating $E(\tau) = \int_{\mathbbm{T}^\dimx} dx\, j^0$ we obtain
\[
-E(\tau_0)
+ E(\tau_1)
 \;=\; \textstyle\int_{\tau_0}^{\tau_1} d\tau \int_{\mathbbm{T}^\dimx} dx\; \diffu^T J \diffu
\]
Differentiate this identity to get
$-\tfrac{\dd}{\dd \tau} E \leq \int_{\mathbbm{T}^\dimx} dx\, |\diffu^T J \diffu|$.
Therefore, using \eqref{acomp1},
\begin{equation}\label{agw1}
E(\tau_0) \leq E(\tau_1)\,
\exp(\textstyle\int_{\tau_0}^{\tau_1} \dd \tau\, Q \|J\|_{L^\infty_x})
\end{equation}
using the operator norm for $J$ as in \ref{alop}.
Tracking back the definitions of $J$, $w$, $l$ and using
\ref{alop} and $\|L\|_{L^\infty} = \|L^T\|_{L^\infty}$
and \ref{acs2} and $Qz < Z$ we get
$\limsup_{\tau \to \infty} Q \|J\|_{L^\infty_x} < 2Z$.
The decay of $\|\diffu\|_{L^\infty_x}$
implies $E(\tau) = \mathcal{O}(e^{-2Z\tau})$.
Letting $\tau_1 \to \infty$ in \eqref{agw1},
we get $E(\tau_0) = 0$ for all $\tau_0$, therefore $\diffu = 0$.
\qed\end{proof}
\begin{proof}[Existence]
Rather than providing a formula for $\eps$ upfront,
we make smallness assumptions as we go,
only finitely many, and only admissible ones, so depending only on things
that $\eps$ is allowed to depend on.

It suffices to prove the theorem under the additional hypothesis
that $F$ has compact support, meaning it vanishes for large $\tau$,
if we also prove in this case that, everything else being the same,
the map $F \mapsto \fld$
satisfies the following supplement:
For all series $(\delta_K)$ there exists a series $(C_K)$
such that for all $F$ if
$\|\p^\alpha F\|_{L^\infty_x} \leq \delta_{|\alpha|} \smash{e^{-Z \tau}}$
then $\|\p^\alpha \fld\|_{L^\infty_x} \leq C_{|\alpha|} \smash{e^{-Z \tau}}$.
These are variants of \ref{affdec1}, \eqref{adec1}.

The reduction to this case goes as follows.
Given $F$ without support condition,
set $F_s = \chi(\tau-s) F$, for some universally fixed $\chi \in C^\infty(\R)$
equal to one on $(-\infty,-1]$ and equal to zero on $[1,\infty)$.
The reduction requires making $\eps$ smaller once and $C$ bigger once,
by some fixed factor independent of $s$,
to absorb errors introduced by $\chi$.
(For example, $F_s$ satisfies \ref{aFder} for some $\delta$
if $F$ satisfies \ref{aFder} for some smaller $\delta$.)
The $F_s$ do have compact support and so we get solutions $\fld_s$.
There is a sequence $\lim_{i \to \infty} s_i =\infty$ such that
$\fld = \lim_{i \to \infty} \fld_{s_i}$ exists
with uniform convergence of all derivatives on all compact subsets of $\Omega$,
and then has all the desired properties.
The $(s_i)$ are constructed using a countable exhaustion of $\Omega$
by compact sets $X$
(obtained by truncating $\Omega$ at different times);
the compactness of $C^{k+1}(X) \hookrightarrow C^k(X)$
for all $k$;
the uniform boundedness of $(\fld_s)$ in $C^{k+1}(X)$ by \eqref{adec1}
and crucially the supplement;
and a diagonalization argument.
This concludes the reduction.

From now on, $F$ has compact support,
so vanishes for $\tau \geq T'$ for some $T' > 0$.
By local well-posedness for quasilinear symmetric hyperbolic systems \cite{tay},
with trivial data at time $\tau = T'$,
there exists a $0 \leq T < T'$
and a smooth $\fld$ on the $\tau \in (T,\infty)$ portion of $\Omega$,
satisfying \eqref{pde1}, \eqref{acomp1}, 
with $\fld$ identically zero for $\tau \geq T'$, such that:
\begin{quote}
If $\fld$ and all its derivatives of all 
orders are bounded on $\tau \in (T,\infty)$,
so $\fld$ extends smoothly to $\tau \in [T,\infty)$,
and if \eqref{acomp1} 
is not exhausted on $\tau \in [T,\infty)$,
then $T = 0$.
\end{quote}
This follows from symmetric hyperbolicity,
$a^\mu(\ind) + A^\mu(\fld) \in \Sym_\dimy$,
since \eqref{acomp1} 
keeps the causal structure intact.
One can conclude $T=0$ under these conditions,
using an open-closed argument (bootstrap),
since a solution can only
fail to admit continuation if a $C^1$-norm diverges \cite{tay}.

We must prove $T = 0$ and the supplement estimates.
We derive a-priori estimates,
using energy estimates for $\fld$ and for
its derivatives $\fld_\alpha = \p^\alpha \fld$.
Differentiating \eqref{pde1},
\[
	w^\mu(\ind,\fld)\p_\mu \fld_\alpha = l_{\alpha}(\ind,\fld_{\leq |\alpha|})
\]
where $\fld_{\leq K} = (\fld_{\beta})_{|\beta|\leq K}$ and where
\begin{align*}
w^\mu & = a^\mu(\ind) + A^\mu(\fld)\\
l_{\alpha} & = 
-L(\ind)\fld_\alpha
- \textstyle\sum_{|\beta|\leq|\alpha|} L_{\alpha\beta}(\ind) \fld_\beta
- \tfrac{1}{2} \textstyle\sum_{\substack{|\beta|+|\gamma|\leq |\alpha|+1\\
            |\beta|,|\gamma| \leq |\alpha|}}
  B_{\alpha\beta\gamma}(\fld_\beta,\fld_\gamma) - F_\alpha(\ind)
\end{align*}
By definition, and none of these objects depend on $\fld$:
\begin{itemize}
\item $F_\alpha = \p^\alpha F \in C^\infty(\Omega,\R^\dimy)$.
\item $L_{\alpha\beta} \in C^\infty(\Omega,\End(\R^{\dimy}))$
comes from $\p^\alpha$-differentiating 
$a^\mu(\ind) \p_\mu \fld + L(\ind) \fld$ and then subtracting
$a^\mu(\ind) \p_\mu \fld_\alpha + L(\ind) \fld_\alpha$.
By the Leibniz rule, 
$L_{\alpha\beta}$ is given by:
if $|\beta| = |\alpha|$, a sum of first derivatives of the $a^\mu$;
if $|\beta| < |\alpha|$, a sum of
derivatives of $a^\mu$ of order
in the interval $[2,|\alpha|]$
and of derivatives of $L$ of order in $[1,|\alpha|]$.
\item $B_{\alpha\beta\gamma} \in \Hom(\R^\dimy \otimes \R^\dimy, \R^\dimy)$
comes from $\p^\alpha$-differentiating $A^\mu(\fld)\p_\mu \fld + \tfrac{1}{2}B(\fld,\fld)$
and then subtracting $A^\mu(\fld)\p_\mu \fld_\alpha$.
\end{itemize}
Set $j_\alpha^\mu = \fld_\alpha^T w^\mu \fld_\alpha$. Then
$\p_\mu j^\mu_\alpha = \fld_\alpha^T (\p_\mu w^\mu)
\fld_\alpha + \fld_\alpha^T l_\alpha + l_\alpha^T \fld_\alpha$
by the symmetry of $w^\mu$.
Set $E_\alpha(\tau) = \int_{\mathbbm{T}^\dimx} \dd x\, j^0_\alpha$
which is comparable to the square of the $L^2_x$-norm of
$\fld_\alpha$ by \eqref{acomp1}.
By the divergence theorem, 
\[
-\tfrac{\dd}{\dd \tau} E_\alpha
\;\leq\;
\textstyle\int_{\mathbbm{T}^\dimx} \dd x\, |\p_{\mu} j_{\alpha}^\mu|
\;\leq\;
\big(
\| (\p_\mu w^\mu) \fld_\alpha\|_{L^2_x}
+ 2\|l_\alpha\|_{L^2_x}
\big)
Q^{1/2} E_\alpha^{1/2}
\]
using the $L^2_x$-Cauchy-Schwarz inequality. This implies,
denoting $e_\alpha = Q^{-1/2} E_\alpha^{1/2}$,
\begin{subequations} \label{eq:bigineq}
\begin{equation} \label{iq1}
-\tfrac{\dd}{\dd \tau} e_\alpha 
\leq 
\tfrac12 \| (\p_\mu w^\mu) \fld_\alpha\|_{L^2_x}
+ \|l_\alpha\|_{L^2_x}
\end{equation}
Abbreviating $K = |\alpha|$,
and denoting by $c_K>0$ a constant
that in addition to $K$ depends only
on the things that also $\eps$, $C$ are allowed to depend on
(which includes $Q$, $A$, $B$):
\begin{align}
\notag
\|(\p_\mu w^\mu)\fld_\alpha\|_{L^2_x}
& \leq
  c_K \|\p a\|_{L^\infty}e_\alpha
+ c_K S_1 e_\alpha
\\
\notag
\|l_\alpha\|_{L^2_x}
& \leq
  Q \|L\|_{L^\infty} e_\alpha
+ c_K \|\p a\|_{L^\infty} e_K
+ c_K U_K e_{<K}
+ c_K S_{\lfloor \frac{K+1}{2}\rfloor} e_{\leq K}
+ c_K \|F_\alpha\|_{L^\infty_x}\\
\notag 
S_1 & \leq c_K e_{\leq N+1}\\
\label{iq2}
S_{\lfloor \frac{K+1}{2}\rfloor}
& \leq c_K e_{\leq \lceil \frac{K+\dimx}{2} \rceil +1}
\end{align}
where $e_K = \max_{|\beta| = K} e_\beta$
and similarly $e_{<K}$ and $e_{\leq K}$;
and $S_K = \max_{|\beta| \leq K} \|\fld_\beta\|_{L^\infty_x}$;
and $U_K = \max\{\max_{2\leq|\beta|\leq K}\|\p^\beta a\|_{L^\infty},
                 \max_{1\leq |\beta| \leq K}\|\p^\beta L\|_{L^\infty}\}$.
In $\|F_\alpha\|_{L^2_x} \lesssim \|F_\alpha\|_{L^\infty_x}$
and the Sobolev inequalities in $x$
we have exploited boundedness of $\mathbbm{T}^\dimx$.
Note that
\[
\textstyle
\lceil\frac{K+\dimx}{2}\rceil+1 > \lfloor \frac{K+1}{2} \rfloor + \frac{\dimx}{2}
\]

The assumptions \ref{abdd},
\ref{alop},
\ref{acs1},
\ref{acs2},
\ref{afderyy},
\ref{aLder},
\ref{aFder},
\ref{affdec1}
imply:
\begin{align}
\label{iq3}
&\begin{aligned}
\|L\|_{L^\infty} & \leq z&
\|\p a\|_{L^\infty} & \leq \eps&
U_{\dimx + 3} & \leq b&
\textstyle\sup_{|\alpha| \leq \dimx+3}
\|F_\alpha\|_{L^\infty_x} & \leq \delta e^{-Z\tau}
\end{aligned}\\
&\begin{aligned}\label{iq4}
\textstyle\lim_{\tau \to \infty} \|\p a\|_{L^\infty_x} & = 0 &
U_K & < \infty &
\|F_\alpha\|_{L^\infty_x} & = \mathcal{O}(e^{-Z\tau})
\end{aligned}
\end{align}
\end{subequations}

\renewcommand{\ff}{\smash{f}}
View \eqref{eq:bigineq} as a system of differential inequalities for
the vector $(e_{\alpha})_{|\alpha| \leq \dimx + 3}$.
Namely insert \eqref{iq2}, \eqref{iq3} into the right hand side of \eqref{iq1}.
We claim that it implies
\begin{equation}\label{eq:eest}
e_\alpha \leq \delta R^{|\alpha|+1} e^{-Z\tau}
\qquad
\text{for
$|\alpha| \leq \dimx + 3$}
\end{equation}
for $\tau \in (T,\infty)$ provided the constant $R \geq 1$ satisfies,
for $K = 0\ldots \dimx + 3$,
\begin{equation}\label{eq:kin}
-ZR^{K+1}
+
\tfrac32 c (\eps + c \delta R^{\dimx + 4} e^{-Z\tau} ) R^{K+1}
+
Qz R^{K+1}
+
c b R^K
+
c
\leq
0
\end{equation}
where $c = \max\{c_0,\ldots,c_{\dimx+3}\} > 0$.
To see this, define $\ff_\alpha$ by 
$e_\alpha = \ff_\alpha e^{-Z\tau}$.
The claim is clearly that the $\tau$-dependent
vector $\ff = \smash{(\ff_{\alpha})_{|\alpha| \leq \dimx+3}}$
is always in the hypercube
$\smash{\prod_{|\alpha| \leq \dimx+3} [0,\delta R^{|\alpha|+1}]}$.
Note that $\ff = 0$ for large $\tau$,
and the system \eqref{eq:bigineq}
implies that $\ff$ cannot leave the hypercube
going backwards in $\tau$,
since by \eqref{eq:kin},
if $f$ ever reaches the boundary, then 
$-\tfrac{\dd}{\dd \tau}f$ cannot point out of the hypercube.

It suffices that \eqref{eq:kin} holds
in the special case when
$\tau=0$ and $K=0$,
and with $\delta$ replaced by $\eps \geq \delta$.
Therefore
$\tfrac32 \eps (1 + c R^{\dimx + 4} )
+
\tfrac1R (b+1)
\leq
B$ suffices, where $B = \tfrac1c (Z-Qz) > 0$.
This holds with
$R = \max\{1,\tfrac2B (b+1)\}$
and
$\eps \leq \tfrac13 B/(1+c R^{\dimx + 4})$.
They depend only on things that also
$\eps$ and $C$ may depend on.

Estimates for $(e_\alpha)_{|\alpha| = K}$ for $K > \dimx +3$
are likewise derived using \eqref{eq:bigineq} but now using \eqref{iq4}.
The estimates are obtained by induction on $K$.
For every $K$, the system of differential inequalities is linear\footnote{%
For instance $S_{\lfloor \frac{K+1}{2}\rfloor}$
is bounded in terms of $e_{< K}$, since we have $K > \dimx + 3$.},
of the schematic form
\[
-\tfrac{d}{d \tau} e_\alpha
\leq Qz e_\alpha
+ o(1) e_K
+ \mathcal{O}(e^{-Z\tau})
\]
Here $o(1)$ are terms decaying as $\tau \to \infty$;
they come from both $\|\p a\|_{L^\infty_x}$ and $e_{<K}$.
The inhomogeneity comes from $\|F_\alpha\|_{L^\infty_x}$ 
and terms controlled inductively
and it actually has compact support in $\tau$,
however we have written $\mathcal{O}(e^{-Z\tau})$
because one has control over the constant,
in the context of the supplement.
Using $Qz < Z$ once again, one obtains $e_K = \mathcal{O}(e^{-Z\tau})$
with control over the constant,
as required by the supplement.
One concludes that all derivatives of $\fld$ of all orders
are bounded when $\tau \in (T,\infty)$.
In particular, $\fld$ extends smoothly to $\tau \in [T,\infty)$.

We have established \eqref{adec11} on $\tau \in [T,\infty)$, using \eqref{iq2}, \eqref{eq:eest}
and the admissible $C = c R^{\dimx + 2}$.
Now \ref{ak2k}, $1 < q < Q$, \eqref{adec11},
 $\delta \leq \eps$ and an admissible smallness condition on $\eps$
imply \eqref{acomp1} without exhaustion on $\tau \in [T,\infty)$.
Hence $T=0$ as desired.
\qed\end{proof}
\subsection{A non-square system, in Maurer-Cartan form}\label{section:analysis2}

A non-square
is one with not as many equations as unknowns.
The system \eqref{pde2}
studied here is the MC equation in a graded Lie algebra,
that encodes symmetries and differential identities.
The Einstein equations admit such a formulation (Section \ref{section:gla}).
\begin{definition}[Axioms for a gLa $\gx$ with symmetric hyperbolic gauge I] \label{def:gwfobag}
Here $\Omega \subset \R^{1+\dimx}$
with partial derivatives $\p_\mu$, $\mu = 0\ldots \dimx$.
Suppose $\gx = \bigoplus_{i \in \Z} \gx^i$
is a gLa\footnote{%
The assumption that $\gx$ is a graded Lie algebra
entails that the graded Jacobi identities \eqref{eq:jacobi} must hold.
Here they amount to quadratic identities among the $A$ and $B$.}
with $\gx^i = C^\infty(\Omega,\R^{\dimy_i})$
where $\dimy_i \in \N$ and $\dimy_1 \geq 1$,
and the bracket of $u \in \gx^i$ and $v \in \gx^j$ is
\[
[u,v] \;=\;
   A_{ij}^\mu(u) \p_\mu v
   - (-1)^{ij} A_{ji}^\mu(v) \p_\mu u + B_{ij}(u,v)
\]
where
\begin{itemize}
\item $A_{ij}^\mu \in \Hom(\R^{\dimy_i},\Hom(\R^{\dimy_j},\R^{\dimy_{i+j}}))$.
\item $B_{ij} \in \Hom(\R^{\dimy_i} \otimes \R^{\dimy_j},\R^{\dimy_{i+j}})$
with $B_{ij}(u,v) = -(-1)^{ij} B_{ji}(v,u)$.
\end{itemize}
Additionally, as distinguished structure,
we require $\dimz_i \in \N$ and
$\ginj_i \in \Hom(\R^{\dimz_i},\R^{\dimy_i})$
and $S_i \in \Hom(\R^{\dimy_{i+1}},\R^{\dimz_i})$
such that the following complexes are exact\footnote{%
In particular
$S_i \ginj_{i+1} = 0$ and $\dimy_{i+1} = \dimz_i + \dimz_{i+1}$.}
\begin{equation}\label{exGSseq}
0 \to \R^{\dimz_{i+1}} \xrightarrow{\ginj_{i+1}} \R^{\dimy_{i+1}}
                        \xrightarrow{S_i} \R^{\dimz_i} \to 0
\end{equation}
and such that $A_i^\mu(-) = S_i A_{1i}^\mu(-) \ginj_i$
satisfies
$A_i^\mu \in \Hom(\R^{\dimy_1},\Sym_{\dimz_i})$.
That is, this map must take values in the vector space of symmetric
$\dimz_i \times \dimz_i$ matrices.
\end{definition}
The injection $\ginj_i$ will be used to fix a gauge,
and the surjection $S_i$ will be used to construct square subsystems.
It is a slight misnomer to call this a symmetric hyperbolic gauge,
since Definition \ref{def:gwfobag} contains no trace
of the positivity condition for a symmetric hyperbolic system;
this is an additional assumption in Theorem \ref{theorem:asyts}.

The next theorem shows that given an approximate MC element,
$[\amc,\amc] \approx 0$, then nearby there is a unique true solution.
Nothing is said about the construction
of the approximate MC element $\amc$,
which is the topic of later sections.
\begin{theorem}[Non-square system on a semiglobal domain] \label{theorem:asyts}
Given five constants
\begin{equation}\label{csx2}
1 < q < Q
\qquad
0 < Qz < Z
\qquad
b > 0
\qquad 
\end{equation}
Given a graded Lie algebra $\gx$ as in Definition \ref{def:gwfobag}
on $\Omega = [0,\infty) \times \mathbbm{T}^\dimx$.
Then there exist $\eps > 0$ and $C>0$\footnote{%
They depend only on the data already mentioned.
For example, they can depend on all the data in $\gx$.}
such that for all $0 < \delta \leq \eps$
and all
\begin{itemize}
\item[] $\amc \in \gx^1 = C^\infty(\Omega,\R^{\dimy_1})$
\end{itemize}
if
\begin{enumerate}[({b}1),leftmargin=10mm]
\item \label{bbounded} $\amc$
and all its derivatives of all orders are bounded on $\Omega$.
\item \label{bk2k}
$q^{-1} \one \leq A_i^0(\amc) \leq q \one$ on $\Omega$ for $i=1,2$.
Notation as in Definition \ref{def:gwfobag}.
\item \label{blop}
$\|L_i(\amc)\|_{L^\infty} \leq z$ for $i=1,2$ where, by definition,
\begin{equation}\label{eq:defli}
L_i(\amc) = S_i A_{i1}^\mu(\ginj_i -) \p_\mu \amc + S_i B_{i1}(\ginj_i -,\amc)
\quad
\in
\quad
C^\infty(\Omega,\End(\R^{\dimz_i}))
\end{equation}
The pointwise norm is the operator norm on $\End(\R^{\dimz_i})$.
\item \label{bcs1}
$\|\p (A_i(\amc))\|_{L^\infty} \leq \eps$ for $i=1,2$
where $A_i = (A_i^\mu)_{\mu = 0\ldots \dimx}$.
\item \label{bcs2}
$\lim_{\tau \to \infty} \|\p (A_i(\amc))\|_{L^\infty_x} = 0$
for $i=1,2$.
\item \label{jhgjdffd1}
$\|\p^\alpha (A_1(\amc))\|_{L^\infty} \leq b$ for $2 \leq |\alpha| \leq \dimx + 3$.
\item \label{jhgjdffd2}
$\|\p^\alpha (L_1(\amc))\|_{L^\infty} \leq b$ for $1 \leq |\alpha| \leq \dimx + 3$.
\item \label{hifhufn1}
$\| \p^\alpha [\amc,\amc] \|_{L^\infty_x} \leq \delta e^{-Z\tau}$ for $|\alpha| \leq \dimx + 3$.
\item \label{hifhufn2}
$\| \p^\alpha [\amc,\amc] \|_{L^\infty_x} = \mathcal{O}(e^{-Z\tau})$
as $\tau \to \infty$ for all $\alpha$.
\end{enumerate}
then there exists a unique
$\fld \in C^\infty(\Omega, \R^{\dimz_1})$
that satisfies,
abbreviating $\fld = \fld(\ind)$,
\begin{subequations} \label{sys2}
\begin{align}
\label{pde2}
\vst
[\amc+\ginj_1\fld,\amc+\ginj_1\fld] & = 0\\
\label{acomp2}
\vst
Q^{-1} \one \leq A^0_1(\amc + \ginj_1\fld) & \leq Q \one
\;\;\textnormal{on $\Omega$}\\\displaybreak[0]
\label{adec22}
\vst
\textstyle\sup_{|\alpha| \leq 1}
\|\p^\alpha \fld\|_{L^\infty_x} & \leq \delta C e^{-Z\tau}
\\
\label{adec2}
\vst
\|\p^\alpha \fld \|_{L^\infty_x}
& = \mathcal{O}(e^{-Z \tau})
\;\;\text{as $\tau \to \infty$}
\end{align}
\end{subequations}
\end{theorem}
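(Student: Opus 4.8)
The plan is to reduce the non-square system \eqref{sys2} to the square system of Theorem \ref{theorem:semia} by applying the surjection $S_1$, solve that, and then upgrade the reduced solution to a genuine Maurer--Cartan element by a constraint-propagation argument based on the graded Jacobi identity.

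First I would set $y=\amc+\ginj_1\fld$, so that $\fld\in C^\infty(\Omega,\R^{\dimz_1})$ parametrizes a perturbation of $\amc$ in the gauge direction $\image\ginj_1$. Since $\amc,\ginj_1\fld\in\gx^1$ one has $[y,y]=[\amc,\amc]+2[\amc,\ginj_1\fld]+[\ginj_1\fld,\ginj_1\fld]$, and applying $S_1$, using the bracket formula of Definition \ref{def:gwfobag} with $i=j=1$, the symmetry $B_{11}(u,v)=B_{11}(v,u)$, the linearity of $A_{11}^\mu$, the identity $A_1^\mu(-)=S_1A_{11}^\mu(-)\ginj_1$, and the definition \eqref{eq:defli} of $L_1$, one checks that $S_1[y,y]=0$ becomes precisely \eqref{pde1} for the unknown $\fld$, with $\dimy:=\dimz_1$ and the $\ind$-independent structure maps
\[
a^\mu=A_1^\mu(\amc),\qquad A^\mu(-)=S_1A_{11}^\mu(\ginj_1-)\ginj_1,\qquad L=L_1(\amc),\qquad B(-,-)=S_1B_{11}(\ginj_1-,\ginj_1-),\qquad F=\tfrac12 S_1[\amc,\amc].
\]
Here $A^\mu$ is a fixed linear map into $\Sym_{\dimz_1}$ by the symmetry axiom of Definition \ref{def:gwfobag} (applied to $v=\ginj_1\fld$), $B$ is symmetric, and $a^0+A^0(\fld)=A_1^0(y)$ by linearity, so \eqref{acomp1} coincides with \eqref{acomp2}. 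I would then verify termwise that \ref{bbounded}--\ref{hifhufn2} imply \ref{abdd}--\ref{affdec1} for this data: \ref{bbounded} gives \ref{abdd} because $a^\mu,L,F$ are polynomial in $\amc$ and its first derivatives; \ref{bk2k},\ref{blop} (case $i=1$) give \ref{ak2k},\ref{alop}; \ref{bcs1},\ref{bcs2},\ref{jhgjdffd1},\ref{jhgjdffd2} (case $i=1$) give \ref{acs1},\ref{acs2},\ref{afderyy},\ref{aLder}; and \ref{hifhufn1},\ref{hifhufn2} give \ref{aFder},\ref{affdec1} after absorbing the operator-norm factor $\tfrac12\|S_1\|$ into an admissibly smaller $\eps$ and a larger $C$. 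Theorem \ref{theorem:semia} then produces a unique smooth $\fld$ with $S_1[y,y]=0$, \eqref{acomp2}, \eqref{adec22}, \eqref{adec2}.

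The heart of the argument is to show that $S_1[y,y]=0$, together with the decay just obtained, forces the full equation $[y,y]=0$. Put $e=[y,y]\in\gx^2$. Since $S_1e=0$, exactness of \eqref{exGSseq} for $i=1$ gives a unique smooth $w\in C^\infty(\Omega,\R^{\dimz_2})$ with $e=\ginj_2 w$ (take $w=Pe$ for a fixed left inverse $P$ of $\ginj_2$). The graded Jacobi identity \eqref{eq:jacobi} applied to $(y,y,y)$ gives $3\,[y,[y,y]]=0$, hence $[y,e]=0$. Expanding $[y,e]$ in degrees $(1,2)$, substituting $e=\ginj_2 w$, and applying $S_2$ --- now using $A_2^\mu(-)=S_2A_{12}^\mu(-)\ginj_2$, the relation $B_{12}(u,v)=-B_{21}(v,u)$, and the definition of $L_2$ --- collapses this into the linear homogeneous symmetric hyperbolic equation $A_2^\mu(y)\,\p_\mu w=L_2(y)\,w$, with $A_2^\mu(y)\in\Sym_{\dimz_2}$. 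For an admissibly small $\eps$, hypothesis \ref{bk2k} for $i=2$ and \eqref{adec22} make $A_2^0(y)=A_2^0(\amc)+A_2^0(\ginj_1\fld)$ bounded both below by a positive multiple of $\one$ and above by $Q\one$, while \ref{blop},\ref{bcs2} for $i=2$ and the decay of $\fld,\p\fld$ give $\|L_2(y)\|_{L^\infty}\le z+\mathcal{O}(\delta)$ and $\|\p(A_2(y))\|_{L^\infty_x}\to0$. Moreover $\|e\|_{L^\infty_x}=\mathcal{O}(e^{-Z\tau})$ --- by \ref{hifhufn2} for the $[\amc,\amc]$ piece and by the decay of $\fld$ and $\p\fld$ for the cross terms --- so $\|w\|_{L^\infty_x}=\mathcal{O}(e^{-Z\tau})$. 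Running the backward energy estimate from the uniqueness part of Theorem \ref{theorem:semia} (with $A_2^\mu(y)$ in the role of $w^\mu$ and $L_2(y)$ in the role of $l$), and using $Qz<Z$ once more, forces $w\equiv0$; hence $e=[y,y]=0$, which is \eqref{pde2}. Uniqueness is then immediate: if $\fld_1,\fld_2$ both satisfy \eqref{sys2}, applying $S_1$ shows both satisfy \eqref{pde1}, \eqref{acomp1}, \eqref{adec1} for the data above, and the uniqueness clause of Theorem \ref{theorem:semia} (which retains only \ref{abdd}, \ref{alop}, \ref{acs2}, all valid here) gives $\fld_1=\fld_2$.

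I expect the main obstacle to be the constraint-propagation step: one must track the sign and symmetry conventions of Definition \ref{def:gwfobag} carefully enough that $S_2[y,[y,y]]=0$ reduces to \emph{exactly} $A_2^\mu(y)\p_\mu w-L_2(y)w=0$ with no uncontrolled residual, and one must check that the $\ginj_1\fld$-perturbations of $A_2^0$ and $L_2$ are small enough --- an admissible smallness condition on $\eps$ --- to keep the causal structure intact and to preserve the margin in $Qz<Z$ that closes the energy estimate. Everything else is either routine bookkeeping or a direct appeal to Theorem \ref{theorem:semia}.
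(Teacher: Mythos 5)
Your proposal is correct and follows essentially the same route as the paper's proof: apply $S_1$ to reduce \eqref{pde2} to the square system \eqref{pde1} and invoke Theorem \ref{theorem:semia} for existence; then use exactness of \eqref{exGSseq} to write $[y,y]=\ginj_2 w$, apply the Jacobi identity to deduce $S_2[y,\ginj_2 w]=0$, and run the energy-estimate uniqueness of Theorem \ref{theorem:semia} (with the $i=2$ gauge data and a small-$\eps$ argument to preserve the $Qz<Z$ margin) to force $w\equiv 0$. The paper organizes the two invocations of Theorem \ref{theorem:semia} in Table \ref{table:parthm} and splits $L_2(y)$ into $L_2(\amc)$ plus a perturbation $L_2'$, but the logical content is identical to yours.
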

\begin{proof}
We use Theorem \ref{theorem:semia} twice,
with the parameters in Table \ref{table:parthm}.
The constants produced by the $i=1$ invocation are denoted
$\eps_1$ and $C_1$
(and a look at Table \ref{table:parthm}
 shows that they are allowed to influence our choice of $\eps$ 
 and $C$ in Theorem \ref{theorem:asyts});
no such constants are produced by the $i=2$ invocation
because it only uses uniqueness in Theorem \ref{theorem:semia}.

\begin{table}[h]
\centering
\begin{tabular}{c||c|c}
Theorem \ref{theorem:semia} & $i=1$ & $i=2$  (uniqueness only)\\
\hline\hline
$q$, $Q$ & $q$, $Q$ & not needed, $Q$\\
$z$, $Z$
    & $z$, $Z$
    & $\tfrac12 (z+Z/Q)$, $Z$\\
$b$ & $b$ & not needed\\
\hline
$\dimy$ & $\dimz_1$ & $\dimz_2$ \\
$\Omega$ & $\Omega$ & $\Omega$ \\
$A^\mu$ & $A_1^\mu(\ginj_1-)$ & $0$ \\
$B$ & $S_1B_{11}(\ginj_1-,\ginj_1-)$ & $0$ \\
$\delta$ & $\delta_1$ & not needed \\
\hline
$a^\mu$ & $A_1^\mu(\amc)$ & $A_2^\mu(\amc + \ginj_1\fld)$\\
$L$ & $L_1(\amc)$ &
$-L_2(\amc) - L_2'$, see \eqref{eql2st}\\
$F$ & $\tfrac12 S_1[\amc,\amc]$ & $0$\\
\hline
\hline
$\fld$ & $\fld \in C^\infty(\Omega,\R^{\dimz_1})$
 & $\fld' \in C^\infty(\Omega,\R^{\dimz_2})$
defined by \eqref{eq:defup}
\\
\eqref{pde1}
& 
$S_1 [\amc + \ginj_1\fld,\amc + \ginj_1\fld] = 0$
&
$S_2 [\amc + \ginj_1\fld,\ginj_2\fld'] = 0$
\end{tabular}
\caption{The rows up to and including $F$
are parameters used to invoke Theorem \ref{theorem:semia}.
As part of this proof one must
check that the partial differential equation \eqref{pde1}
is indeed that in the last row of the table.
Of course, this is by design.
}
\label{table:parthm}
\end{table}

Uniqueness follows from uniqueness in Theorem \ref{theorem:semia}
using column $i=1$ of Table \ref{table:parthm},
since then \eqref{pde1} is necessary for \eqref{pde2}.

Existence is shown in steps.
Apply Theorem \ref{theorem:semia}
using column $i=1$ of Table \ref{table:parthm}.
This yields a $\fld$ solving \eqref{sys2}
except that only the $S_1$-part of \eqref{pde2} is known to hold.
To make sure the assumptions of Theorem \ref{theorem:semia} hold,
one makes the following admissible choices.
Require $\eps \leq \eps_1$, so \ref{acs1} holds.
Require $\eps \leq \eps_1/J$,
and set $C = J C_1$ and $\delta_1 = J\delta$
for some $J \geq 1$ that depends only on a norm of $S_1$,
chosen such that \ref{hifhufn1} implies \ref{aFder}.
Since $0 < \delta \leq \eps$ we have
$0 < \delta_1 \leq \eps_1$ as required.
And \eqref{adec11} now implies \eqref{adec22}.

The exactness of \eqref{exGSseq}, for $i=1$, implies
\begin{equation}\label{eq:defup}
[\amc + \ginj_1\fld,\amc + \ginj_1\fld] = \ginj_2 \fld'
\end{equation}
for a unique $\fld' \in C^\infty(\Omega,\R^{\dimz_2})$.
The only thing left to show is that $\fld' = 0$.
By the
$(\gx^1)^{\otimes 3} \to \gx^3$
Jacobi identity, $[\amc + \ginj_1\fld,\ginj_2\fld'] = 0$.
In particular $S_2 [\amc + \ginj_1\fld,\ginj_2\fld'] = 0$.
Uniqueness in Theorem \ref{theorem:semia},
using column $i=2$ of Table \ref{table:parthm},
implies $\fld' = 0$ since $0$ is a solution of this linear homogeneous system
for $u'$.

But we must check that all assumptions of the $i=2$ invocation hold:
\ref{abdd}, \ref{alop}, \ref{acs2}
but also \eqref{acomp1}, 
\eqref{adec1} since,
logically, they are also assumptions for uniqueness.
Here \ref{abdd}, \ref{acs2} and \eqref{adec1} are clear.
For the others we use the
$i=2$ versions of \ref{bk2k}, 
\ref{blop}.
For \ref{alop} it suffices that
$\|L_2'\|_{L^\infty} \leq \tfrac12 (Z/Q-z)$ where, by definition,
\begin{equation}\label{eql2st}
\textstyle
L_2' = S_2 A_{21}^\mu(\ginj_2-) \p_\mu \ginj_1\fld +S_2 B_{21}(\ginj_2-,\ginj_1\fld)
\end{equation}
This and \eqref{acomp1} 
follow from the already established
\eqref{adec22} and
$\delta C \leq \eps C$,
by making another admissible smallness assumption about $\eps$.
\qed\end{proof}
\section{The Maurer-Cartan formulation of the Einstein equations} \label{section:gla}
This section contains a reformulation
of \eqref{eq:esf} and \eqref{eq:vc}
that are in no way specific to the BKL problem.
We define two real gLa
with symmetric hyperbolic gauges:
\begingroup
\renewcommand{\arraystretch}{1.25}
\[
\begin{tabular}{r|l}
graded Lie algebra & Maurer-Cartan equations\\
\hline
$\E$ & Einstein vacuum equations\\
$\E_\Phi = \E \oplus \Phi$ & Einstein equations coupled to a massless scalar field
\end{tabular}
\]
\endgroup
They satisfy the axioms in Definition \ref{def:gwfobag},
but it is convenient to first
introduce more algebraic axioms in Definition \ref{def:axg}.
The translation back to Definition \ref{def:gwfobag} is in Lemma \ref{lemma:translation}.
The construction of $\E$ is
given conceptually in \cite{rtgla}, some of whose results we use,
whereas the extension to $\E_\Phi$ is new.
Conceptually, these constructions are functors
from a groupoid of rank four vector bundles
with conformal inner product to the category of real gLa,
but our presentation here will not make this clear and is practical.
In particular, we equip all objects with a distinguished basis
for later convenience.
\step
For concreteness, in this section
\[
\Omega = [0,\infty) \times M^3
\]
where $M^3$ is a parallelizable 3-dimensional manifold.
We denote by $\p_0$ the derivative on the first factor,
and by $L_1,L_2,L_3 \in \Gamma(TM^3)$ a fixed frame.
The presentation in \cite{rtgla} assumes a base manifold
diffeomorphic to $\R^4$ for simplicity.
The same construction is valid on $\Omega$
with minor modifications when invoking the Poincare lemma.
We will point out where such modifications are necessary.

\subsection{The module $W$ and its conformal inner product}

Set $\cinf = \cinf(\Omega,\R)$.
Note that $\Der(\cinf)$ is a free $\cinf$-module,
as distinguished basis take $\p_0,L_1,L_2,L_3$.
Let $\theta_0 \ldots \theta_3$
be four symbols and define the free $\cinf$-module
$W = \cinf \theta_0 \oplus \ldots \oplus \cinf \theta_3$,
the module of sections of a trivial rank four vector bundle.

Endow $W$ with a conformal inner product
by declaring that $\theta_0 \ldots \theta_3$
is a conformally orthonormal basis with signature ${-}{+}{+}{+}$. 
This means that the symmetric $\cinf$-bilinear map $\langle-,-\rangle : W \times W \to \cinf$
defined by
\begin{equation}\label{eq:nijm}
\langle \theta_i,\theta_j \rangle = \eta_{ij}
\qquad
i,j = 0 \ldots 3
\end{equation}
is a representative of the conformal inner product, where $\eta_{ij} = \diag(-1,1,1,1)_{ij}$.
Any other representative is given by $f \langle -, - \rangle$
for some $f \in \cinf$ with $f>0$.
Define the `future timelike elements' to be
\[
W_+ \;=\; \big\{\,
a^i \theta_i \in W
\mid a^i \in \cinf,
\;\; a^0 > 0,
\;\; \eta_{ij}a^ia^j < 0
\;\;
\text{on $\Omega$}
\big\}
\]

The $\cinf$-exterior algebra $\wedge W$
has rank 16. As distinguished basis we take
$\theta_{i_1} \cdots \theta_{i_m}$
where $0 \leq i_1 < \ldots < i_m \leq 3$ and $0 \leq m \leq 4$
(we omit $\wedge$ in the notation).
Let $\Der(\wedge W)$ be the graded Lie algebra of $\R$-linear graded derivations.
In particular, the degree zero derivations
comprise a Lie algebra $\Der^0(\wedge W)$, an element of which 
is determined by how it acts on the basis elements $\theta_0\ldots \theta_3$
and by how it acts as a derivation on $\cinf$, and this data is not constrained further,
hence
\begin{equation}\label{eq:derw0}
\Der^0(\wedge W) \simeq \End_\cinf(W) \oplus \Der(\cinf)
\end{equation}
More generally\footnote{This isomorphism,
which depends on the basis for $W$, arises as follows.
Let $A$ be any unital $\cinf$ graded commutative algebra.
Let $\Der(A)$ be the gLa of $\R$-linear graded derivations,
meaning the Leibniz rule is graded.
If $A$ is free over $\cinf$,
then there is a split short exact sequence
$0 \to \Der_\cinf(A) \to \Der(A) \to A \otimes_\cinf \Der(\cinf) \to 0$.
Here $\Der_\cinf(A)$ are the $\cinf$-linear graded derivations.
If $A$ is freely generated as a unital $\cinf$-gca by
a graded submodule $S \hookrightarrow A$,
then $\Der_\cinf(A) \simeq \Hom_\cinf(S,A)$ as graded $\cinf$-modules.}
\begin{equation}\label{eq:derww}
\Der^i(\wedge W) \simeq \Hom_\cinf(W,\wedge^{i+1}W)
\oplus (\wedge^i W \otimes_\cinf \Der(\cinf))
\end{equation}
and this is a free $\cinf$-module of rank $4 {4 \choose i+1} + 4 {4 \choose i}$.
The various bases chosen previously
canonically determine a distinguished basis of $\Der(\wedge W)$
via \eqref{eq:derww}.
\subsection{Algebroid structure and symmetric hyperbolic gauges}
The following will serve as a blueprint
for the definitions of $\E$ and $\E_\Phi$.
\begin{definition}[Axioms for a gLa $\gx$ with symmetric hyperbolic gauge II] \label{def:axg}
Let $\gx$ be a gLa with the following additional properties
and distinguished structures:
\begin{itemize}
\item \emph{Module structure.}
The gLa $\gx$ is a graded $\wedge W$-module,
in particular a $\cinf$-module through $\cinf \hookrightarrow \wedge W$.
As a $\cinf$-module it is free.
\item \emph{Algebroid structure.}
A $\wedge W$-linear map and gLa morphism $\anc: \gx \to \Der(\wedge W)$,
called the anchor,
such that
for all homogeneous $x,y\in \gx$ and $\omega \in \wedge W$\footnote{%
As usual,
a symbol in the exponent of $(-1)$
is a shorthand for the degree of that symbol.
}:
\[
[x,\omega y]
\;=\;
\anc(x)(\omega)y + (-1)^{x\omega} \omega[x,y]
\]
\item \emph{Symmetric hyperbolic gauge structure.}
A free graded $\cinf$-submodule $\gx_\GS \subset \gx$
such that scalar multiplication by
any $w \in W_+$ is an injective map $\gx_\GS \to \gx$
and
\[
\gx = \gx_\GS \oplus w\gx_\GS
\]
Further, $\cinf$-bilinear maps
$b^i : \gx_\GS^i \times \gx^{i+1} \to \cinf$
such that $b^i(-,w-)|_{\gx_\GS^i \times \gx_\GS^i}$ is symmetric for all $w \in W$,
positive definite if $w \in W_+$. Furthermore,
\begin{equation}\label{eq:exk}
\gx_\GS^{i+1} = \ker b^i(\gx_\GS^i,-)
\end{equation}
\item \emph{Bases and constant coefficients\footnote{%
This condition
is useful in the present application,
but one may want to drop it in other applications.}.}
Both $\gx$ and $\gx_\GS$
are equipped with an ordered $\cinf$-basis of homogeneous elements
such that the bracket
$\gx \times \gx \to \gx$,
module multiplication $\wedge W \times \gx \to \gx$,
the inclusion $\gx_\GS \hookrightarrow \gx$,
the anchor $\anc$, and the $b^i$, all have constant coefficients.
That is, applying them to $\cinf$-basis elements (or pairs of basis elements)
returns an $\R$-linear combination of basis elements.
\end{itemize}
\end{definition}
Note that $\Der(\cinf)$ and its dual
$\Der(\cinf)^\ast$ are respectively the free $\cinf$-modules of sections of the 
tangent and cotangent bundles of $\Omega$.
Consider the composition:
\begin{equation}\label{eq:sss1}
\gx^1 
\;\xrightarrow{\;\anc\;}
\Der^1(\wedge W)
\;\to\;
 W \otimes_\cinf \Der(\cinf) \\
\simeq \Hom_\cinf(\Der(\cinf)^\ast,W)
\end{equation}
where the second map from the left is the restriction
of a $\Der^1(\wedge W)$ element to a map $\cinf \to W$,
which is an element of 
$ W \otimes_\cinf \Der(\cinf)$.
(This is also a special case of \eqref{eq:derww}.)
By definition, \eqref{eq:sss1} associates to every element of $\gx^1$ its frame.
\begin{definition}[Nondegenerate frame and conformal metric] \label{def:nondeg}
An element of $\gx^1$ is said to be nondegenerate iff its frame
(via \eqref{eq:sss1}) is
an isomorphism $\Der(\cinf)^\ast \simeq W$.
Every such element of $\gx^1$ induces a conformal metric on the manifold $\Omega$
via \eqref{eq:nijm}.
\end{definition}
\begin{lemma}[%
Translation lemma] \label{lemma:translation}
Here $\Omega = [0,\infty) \times \mathbbm{T}^3$
and $L_i = \p_i$ with $i=1,2,3$
for compatibility with Definition \ref{def:gwfobag}.
Suppose $\gx$ is a gLa as in Definition \ref{def:axg}, then:
\begin{itemize}
\item $\gx$ satisfies Definition \ref{def:gwfobag},
where $\dimx = 3$ and
$\dimy_i = \rank_\cinf \gx^i$ and
$\dimz_i = \rank_\cinf \gx^i_\GS$ and where,
using the distinguished bases,
the matrix $\ginj_i$ is the injection $\gx^i_\GS \hookrightarrow \gx^i$;
the matrix $S_i$ is the bilinear pairing $b^i$
viewed as a map $\gx^{i+1} \to (\gx_\GS^i)^\ast$;
and
\begin{equation}\label{eq:ttt1}
  A_{ij}^\mu(u) \p_\mu f
 = \anc(u)(f)
\end{equation}
for all $u \in \gx^i$, $f \in \cinf$
where the right hand side, nominally in $\wedge^i W$,
is here viewed as an element of $\Hom_\cinf(\gx^j,\gx^{i+j})$
via module multiplication.
The $B_{ij}$ are given by the bracket $\gx^i \times \gx^j \to \gx^{i+j}$
on basis elements and are extended $\cinf$-bilinearly.
\item
For all $u \in \gx^1$, $f \in \cinf$,
if $\anc(u)(f) \in W_+$
then $A^\mu_i(u) \p_\mu f > 0$,
pointwise as a symmetric matrix.
If $u \in \gx^1$ is nondegenerate
then $\keroneform_\mu A_i^\mu(u) > 0$
for all $\keroneform \in \Der(\cinf)^\ast$ that are future timelike
in the sense that they are in $W_+$ via the frame of $u$.
\end{itemize}
\end{lemma}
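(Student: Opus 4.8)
The plan is to read Lemma~\ref{lemma:translation} as a dictionary: the first bullet matches the algebraic axioms of Definition~\ref{def:axg} against the coordinate axioms of Definition~\ref{def:gwfobag}, and the second bullet is a short consequence once the dictionary is fixed. For the dictionary, I would use the distinguished $\cinf$-bases of $\gx^i$ and $\gx^i_\GS$ to identify $\gx^i = \cinf(\Omega,\R^{\dimy_i})$ with $\dimy_i = \rank_\cinf \gx^i$ and $\gx^i_\GS = \cinf(\Omega,\R^{\dimz_i})$ with $\dimz_i = \rank_\cinf \gx^i_\GS$; then the inclusion $\gx^i_\GS \hookrightarrow \gx^i$ becomes a constant matrix $\ginj_i$, and $b^i$ read as a map $\gx^{i+1}\to(\gx^i_\GS)^\ast$ and written in the dual basis becomes a constant matrix $S_i$, both constant by the constant-coefficients axiom. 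To extract $A_{ij}^\mu$ and $B_{ij}$, I would feed the algebroid Leibniz rule $[x,fy]=\anc(x)(f)y+f[x,y]$ (and its mirror coming from graded antisymmetry) into the expansion of $[u,v]$ for $u=\sum_a u^a e^{(i)}_a$, $v=\sum_b v^b e^{(j)}_b$. Because $\anc$ is $\wedge W$-linear and the constant basis sections $e^{(j)}_b$ have $\p_\mu e^{(j)}_b=0$, the $v$-derivative part collapses to $\anc(u)(v^b)\,e^{(j)}_b = A_{ij}^\mu(u)\p_\mu v$, where $A_{ij}^\mu(u)$ is module multiplication by the $\p_\mu$-component of the vector-field part of $\anc(u)\in\wedge^i W\otimes_\cinf\Der(\cinf)$, a constant $\R$-combination of $\theta$-monomials; hence $A_{ij}^\mu$ is a constant linear map and \eqref{eq:ttt1} holds by construction. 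The $u$-derivative part gives $-(-1)^{ij}A_{ji}^\mu(v)\p_\mu u$, and the residual constant-coefficient bilinear term $\sum_{a,b}u^a v^b[e^{(i)}_a,e^{(j)}_b]$ defines $B_{ij}$, with $B_{ij}(u,v)=-(-1)^{ij}B_{ji}(v,u)$ from antisymmetry on basis elements. The graded Jacobi identity for $\gx$, which is given, then automatically supplies the quadratic relations among the $A$'s and $B$'s that Definition~\ref{def:gwfobag} tacitly requires ($\dimy_1\geq1$ for $\E$ and $\E_\Phi$, which carry nondegenerate elements).

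Next I would check exactness of \eqref{exGSseq}. Injectivity of $\ginj_{i+1}$ and $S_i\ginj_{i+1}=0$ are immediate from the inclusion $\gx^{i+1}_\GS\hookrightarrow\gx^{i+1}$; the identity $\ker S_i = \image\ginj_{i+1}$, that is $\ker b^i(\gx^i_\GS,-)=\gx^{i+1}_\GS$, is literally axiom \eqref{eq:exk}; the dimension count $\dimy_{i+1}=\dimz_{i+1}+\dimz_i$ follows from the direct sum $\gx^{i+1}=\gx^{i+1}_\GS\oplus w\gx^i_\GS$ (since $w$ has degree one) together with injectivity of $w\cdot$ on $\gx^i_\GS$; and surjectivity of $S_i$ follows by noting that the constant element $\theta_0$ lies in $W_+$ (as $\langle\theta_0,\theta_0\rangle=-1<0$ and its $\theta_0$-coefficient is $1>0$), so $S_i$ precomposed with $\gx^i_\GS\xrightarrow{\theta_0\cdot}\gx^{i+1}$ is the bilinear form $b^i(-,\theta_0-)|_{\gx^i_\GS\times\gx^i_\GS}$, which is positive definite and therefore already an isomorphism onto $(\gx^i_\GS)^\ast$. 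Finally $A_i^\mu(u)=S_iA_{1i}^\mu(u)\ginj_i$, evaluated on $\xi,\eta\in\gx^i_\GS$, equals $b^i(\eta,\,\omega^\mu(u)\,\xi)$, where $\omega^\mu(u)\in W$ is the $\p_\mu$-component of the frame of $u$; symmetry in $\xi,\eta$ is exactly the symmetry clause of the $b^i$-axiom applied with $w=\omega^\mu(u)$, so $A_i^\mu(u)\in\Hom(\R^{\dimy_1},\Sym_{\dimz_i})$. This finishes the first bullet.

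For the second bullet I would reuse \eqref{eq:ttt1}: it gives $A_i^\mu(u)\p_\mu f = S_i(\anc(u)(f)\cdot)\ginj_i$, which as a bilinear form on $\gx^i_\GS$ is $(\xi,\eta)\mapsto b^i(\eta,\,\anc(u)(f)\,\xi)$; if $\anc(u)(f)\in W_+$ this is positive definite by the $b^i$-axiom, so $A_i^\mu(u)\p_\mu f>0$. If $u$ is nondegenerate, its frame $\phi_u:\Der(\cinf)^\ast\to W$ is an isomorphism, and a future timelike $\keroneform$ means $\phi_u(\keroneform)=\keroneform_\mu\omega^\mu(u)\in W_+$; then $\keroneform_\mu A_i^\mu(u)$ is the bilinear form $(\xi,\eta)\mapsto b^i(\eta,\,\phi_u(\keroneform)\,\xi)$ on $\gx^i_\GS$, again positive definite by the same axiom.

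I do not expect a genuine obstacle here: the statement is a translation. The only place that demands real care is the Leibniz-rule bookkeeping in the first bullet --- confirming that the bracket takes precisely the bidifferential shape of Definition~\ref{def:gwfobag} and that every coefficient one reads off ($A_{ij}^\mu$, $B_{ij}$, $\ginj_i$, $S_i$) is genuinely constant --- while the homological part stays clean once one recognizes the kernel identity as \eqref{eq:exk} and uses $\theta_0\in W_+$ to get surjectivity of $S_i$.
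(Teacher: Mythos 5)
Your proposal is correct and follows the same approach the paper sketches: exactness of \eqref{exGSseq} from axiom \eqref{eq:exk}, symmetry of $A_i^\mu$ from the symmetry clause of $b^i$, and the second bullet by observing via \eqref{eq:sss1} and \eqref{eq:ttt1} that $\keroneform_\mu A_{1i}^\mu(u)$ is module multiplication by an element of $W_+$ and then invoking positive definiteness of $b^i$. You fill in several steps that the paper compresses into ``the rest is clear'' --- in particular the Leibniz-rule derivation that the bracket has the bidifferential shape with constant coefficients, and the observation that $\theta_0\in W_+$ yields surjectivity of $S_i$ --- and these fillings are accurate.
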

\begin{proof}
In the first claim, the exactness of \eqref{exGSseq} is by \eqref{eq:exk}
and the rest is clear.
In the second claim, 
note that \eqref{eq:sss1} and \eqref{eq:ttt1}
imply that $\keroneform_\mu A_{1i}^\mu(u) : \gx^i \to \gx^{i+1}$
corresponds to module multiplication
by an element $w \in W_+$, then use the definition of
$S_i$ and $\ginj_i$ and the positivity property of $b^i$. 
\qed\end{proof}
\subsection{Einstein vacuum}
Let $\CDer(W) \subset \Der^0(\wedge W)$ be the sub Lie algebra of all $\delta$
that preserve the conformal inner product, meaning
there exists an $f \in \cinf$ such that
$\delta \langle -,- \rangle
= \langle \delta -,- \rangle + \langle -, \delta -\rangle + f \langle -,-\rangle$
in $W \times W \to \cinf$; this definition does not depend on the
choice of a representative of the conformal inner product.
Then $\CDer(W)$ is free of rank $7 + 4$,
and we use the following distinguished basis
(closely related to \eqref{eq:derw0}):
\begin{itemize}
\item $\sigma_0,\sigma_1,\sigma_2,\sigma_3,\sigma_{23},\sigma_{31},\sigma_{12}$ annihilate $\cinf$
and $\sigma_0(\theta_a) = \theta_a$
and $\sigma_{ab}(\theta_c) = \eta_{bc}\theta_a - \eta_{ac}\theta_b$.
We abbreviate $\sigma_1 = \sigma_{01}$ and $\sigma_2 = \sigma_{02}$
and $\sigma_3 = \sigma_{03}$.
\item
$\p_0,L_1,L_2,L_3$ which are derivations on $\cinf$
are extended to annihilate $\theta_0\ldots\theta_3$.
This is a basis-dependent injection
$\Der(\cinf) \hookrightarrow \CDer(W)$.
\end{itemize}

Set
$\L = \wedge W \otimes_\cinf \CDer(W)$
and define the bracket
\begin{equation}\label{eq:brak}
[\omega\delta,\omega'\delta'] = \omega\omega'[\delta,\delta']
+ \omega\delta(\omega')\delta' - \delta'(\omega)\omega'\delta
\end{equation}
for all $\omega,\omega' \in \wedge W$ and
$\delta,\delta' \in \CDer(W)$.
It is well-defined and a gLa.
It is a $\wedge W$ graded Lie algebroid with
anchor $\L \to \Der(\wedge W)$, $\omega \delta \mapsto (\omega' \mapsto \omega\delta(\omega'))$.

Let $\I^2 \subset \L^2$ be the $\cinf$-submodule generated by
  \begin{equation}\label{eq:i2}
  \RE
  \left[\begin{pmatrix}
    \theta_0\theta_1 + i \theta_2\theta_3\\
    \theta_0\theta_2 + i \theta_3\theta_1\\
    \theta_0\theta_3 + i \theta_1\theta_2
  \end{pmatrix}^T
  S
  \begin{pmatrix}
    \sigma_1 + i\sigma_{23}\\
    \sigma_2 + i\sigma_{31}\\
    \sigma_3 + i\sigma_{12}
  \end{pmatrix}
\right]
  \end{equation}
where $S$ runs over all symmetric traceless complex $3\times 3$ matrices.
Set $\I = (\wedge W) \I^2$.
By direct calculation \cite{rtgla}, $\I \subset \L$ is a gLa ideal,
contained in the kernel of the anchor $\L \to \Der(\wedge W)$.
So $\E = \L/\I$ is a gLa,
with induced anchor $\anc: \E \to \Der(\wedge W)$.

We must choose a distinguished basis for $\E$.
The distinguished bases for $\wedge W$ and $\CDer(W)$ yield a basis for $\L$,
and one can take a subset of this basis to be the distinguished basis for $\E$.
Alternatively
the $\E$-column of Table \ref{table:EPhiG}
yields a concrete basis for $\E$.
All these choices are equivalent for the present purpose.
\begin{definition}[Associated one-form] \label{def:aof1}
To every nondegenerate $u \in \E^1$
we associate a one-form $\alpha_u$ as follows.
Decompose $u = w \sigma_0 + u'$
where $w \in W$
whereas $u'$ does not contain $\sigma_0$.
Then $\alpha_u$ is the one-form corresponding to $w$ via the frame of $u$.
\end{definition}
An example is in Remark \ref{remark:kasner2}.
\begin{lemma}[gLa Einstein vacuum] \label{lemma:glavac}
\begin{itemize}
\item 
The gLa $\E = \L/\I$
satisfies the axioms in Definition \ref{def:axg}.
A symmetric hyperbolic gauge $\E_\GS \subset \E$
along with bilinear forms exists, by a construction in \cite{rtgla},
and we assume that one is fixed.
\item
If $u \in \MC(\E)$ is nondegenerate in the sense of Definition \ref{def:nondeg},
then the corresponding conformal metric
has a unique, up to multiplication by a positive constant,
representative metric $g$ on the universal
cover of $\Omega$ that is Ricci-flat, $\Ric_g = 0$.
Furthermore $\alpha_u$ is closed,
and $g$ descends to a metric on $\Omega$ iff $\alpha_u$ is exact.
\end{itemize}
\end{lemma}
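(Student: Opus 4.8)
The plan is to verify the two bullets separately. For the first bullet, I would check each axiom in Definition \ref{def:axg}. The module structure over $\wedge W$ on $\E = \L/\I$ is inherited from that on $\L = \wedge W \otimes_\cinf \CDer(W)$, which is free over $\cinf$; one needs that $\I$ is a $\wedge W$-submodule (stated: $\I = (\wedge W)\I^2$ and $\I$ is an ideal), and that the quotient is still free over $\cinf$ — this follows because $\I^2$ is a direct summand as a $\cinf$-module (the defining generators in \eqref{eq:i2} span a complemented submodule, since the coefficients are constant). The algebroid structure descends because $\I$ lies in the kernel of the anchor $\L \to \Der(\wedge W)$ and because the Leibniz-type identity $[x,\omega y] = \anc(x)(\omega)y + (-1)^{x\omega}\omega[x,y]$ holds on $\L$ by \eqref{eq:brak} and hence on the quotient. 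The symmetric hyperbolic gauge structure $\E_\GS \subset \E$ together with the bilinear forms $b^i$ I would simply import from \cite{rtgla}, as the statement explicitly permits; the same for the distinguished bases and the constant-coefficient property, which hold for the basis read off from Table \ref{table:EPhiG}.

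For the second bullet, let $u \in \MC(\E)$ be nondegenerate. The frame of $u$ is an isomorphism $\Der(\cinf)^\ast \xrightarrow{\sim} W$ by Definition \ref{def:nondeg}, so \eqref{eq:nijm} transports the conformal inner product on $W$ to a conformal metric on $\Omega$. I would then invoke the corresponding result in \cite{rtgla}: that the Maurer–Cartan equation $[u,u]=0$, for a nondegenerate $u$, is equivalent to the vanishing of a Cartan-structure-equations package whose content is exactly that the associated Levi-Civita-type connection is torsion-free, metric (in the conformal sense), and has the curvature of a conformally Ricci-flat metric — i.e. the trace-free part of the Ricci tensor vanishes. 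The ideal $\I$ is precisely engineered (see the form of the generators in \eqref{eq:i2}, built from the self-dual/anti-self-dual two-forms and the corresponding generators $\sigma_a \pm i\sigma_{bc}$ of $\CDer(W)$ paired by a symmetric traceless $S$) so that quotienting by it imposes the trace-free Einstein equation and nothing more. Passing to the universal cover $\widetilde\Omega$ to kill $H^1$, the closed one-form encoding the conformal factor becomes exact, so one can rescale within the conformal class to obtain a genuine metric $g$ with $\Ric_g = 0$; uniqueness up to a positive constant is the residual freedom of a constant rescaling, which preserves Ricci-flatness. That $\alpha_u$ (Definition \ref{def:aof1}) is closed should fall out of the $\sigma_0$-component of the structure equations contained in $[u,u]=0$ — concretely, the part of the MC equation not absorbed into $\I$ that involves $\sigma_0$ says $d\alpha_u = 0$. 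Finally $g$ descends to $\Omega$ exactly when the rescaling factor is globally defined on $\Omega$, i.e. when $\alpha_u$ is exact on $\Omega$ and not merely on the cover.

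The main obstacle I expect is the translation between the algebraic statement ``$[u,u]=0$ in $\E$'' and the geometric statement ``trace-free Ricci vanishes for the associated metric'': one must unwind the definition of the bracket \eqref{eq:brak}, the quotient by $\I$, and the frame map \eqref{eq:sss1} to see that the $\L$-valued expression $[u,u]$ decomposes into (i) a torsion/first-structure-equation piece, (ii) a piece that is killed identically modulo $\I$, and (iii) a Riemann-tensor piece whose vanishing modulo $\I$ is the Einstein condition. Rather than redo this, I would attribute the core identification to \cite{rtgla} and limit the proof here to: quoting that result, checking nondegeneracy makes the frame invertible so the metric exists, handling the conformal-factor / universal-cover bookkeeping, and reading off closedness of $\alpha_u$ from the relevant component of the MC equation.
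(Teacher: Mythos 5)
Your proposal is correct and takes essentially the same approach as the paper: the paper's own proof is simply ``See \cite{rtgla}'' plus a one-line remark that the conformal factor is $\exp(-\int \alpha_u)$, which explains the exactness requirement for descending to $\Omega$. You fill in considerably more detail about what the citation must cover (checking the Definition \ref{def:axg} axioms on the quotient, the conformal/universal-cover bookkeeping, reading off closedness of $\alpha_u$), all of which is consistent with the paper's intent even though the paper leaves it entirely to \cite{rtgla}.
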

\begin{proof}
See \cite{rtgla}.
The conformal factor 
is the exponential of the integral of $\alpha_u$,
hence the exactness requirement to get a metric on $\Omega$.
An example is in Remark \ref{remark:kasner2}.
\qed\end{proof}
\subsection{Einstein coupled to massless scalar field}

Here we define $\E_\Phi = \E \oplus \Phi$.
This extension is motivated by algebraic considerations,
and we only make the connection to \eqref{eq:esf} in Lemma \ref{lemma:gsf}.
It is determined by the axioms in Definition \ref{def:axg}
and the following additional stipulations
(it is likely that one can add
other matter fields to the formalism of \cite{rtgla}
by similar reasoning):
\begin{enumerate}[({s}1),leftmargin=10mm]
\item \label{stip:br}
$[\E,\E] \subset \E$
and $[\E,\Phi] \subset \Phi$ and $[\Phi,\Phi] \subset \E$
where the first is the bracket of $\E$.
\item \label{stip:phi}
$\Phi = \Phi^0 \oplus \cdots \oplus \Phi^4$
where\footnote{We set $\Phi^0 = 0$ since there is
no gauge freedom attached to a scalar field.
To motivate the rest, consider for simplicity the linear homogeneous wave equation
$\Box f = 0$ on Minkowski space $\R^{1+3}$.
It is equivalent to the first order system
$d \omega = 0 \in \Omega^2$ and $d(\ast \omega) = 0 \in \Omega^4$ for
a one-form $\omega \in \Omega^1$. The equivalence is up to an irrelevant
additive constant, and given by $\omega = df$ respectively $f = \int \omega$.
This is the first homology of the complex
$0 \to 0 \to \Omega^1 \to \Omega^2\otimes \Omega^4 \to \Omega^3 \to \Omega^4 \to 0$,
whose tail is the de Rham tail.
}
\begin{align*}
\Phi^0 & = 0&
\Phi^1 & = W&
\Phi^2 & = \wedge^2 W \oplus \wedge^4 W&
\Phi^3 & = \wedge^3 W&
\Phi^4 & = \wedge^4 W
\end{align*}
with distinguished $\cinf$-basis inherited from the ring $\wedge W$.
To distinguish this basis from the basis of the ring, we add underlines.
So $\inphi{\theta_0\theta_1\theta_2}$ is an element of $\Phi^3$,
rather then of the ring summand $\wedge^3 W$.
We never explicitly refer to the bases of $\wedge^4 W \subset \Phi^2$
and $\wedge^4 W \subset \Phi^4$, so do not introduce notation for them.
\item \label{stip:mod} The $\wedge W$ module structure is defined as follows.
The map $W \times \Phi^1 \to \Phi^2$ is
$(w,\phiel) \mapsto w\phiel \oplus w\phiel^\ast$
where $\phiel^\ast \in \wedge^3W$ is
the basis Hodge dual\footnote{%
Explicitly,
$\theta_0^\ast = \theta_1\theta_2\theta_3$
and $\theta_1^\ast = \theta_0\theta_2\theta_3$
and $\theta_2^\ast = \theta_0\theta_3\theta_1$
and $\theta_3^\ast = \theta_0\theta_1\theta_2$.
The construction actually does not depend
on the choice of a conformally orthonormal basis,
if the summand $\wedge^4 W \subset \Phi^2$
is understood to be multiplied by a suitable fractional density bundle associated to $W$.
Somewhat informally, this would be like defining
$\theta_0^\ast  = \theta_1\theta_2\theta_3
/ (\theta_0 \theta_1\theta_2\theta_3)^{1/2}$ etc.
}.
The map $W \times \Phi^2 \to \Phi^3$ is $(w,\phiel_1 \oplus \phiel_2) \mapsto w \phiel_1$.
And $W \times \Phi^3 \to \Phi^4$ is $(w,\phiel) \mapsto w\phiel$.
\item \label{stip:epp}
$[\E^0,\Phi^1] \subset \Phi^1$
is the canonical action of $\E^0 \simeq \CDer(W)$ on $\Phi^1 \simeq W$.
\item \label{stip:anc}
The anchor map $\anc : \E_\Phi \to \Der(\wedge W)$ coincides with the anchor for $\E$,
that is, it factors through $\E_\Phi \to \E$.
\item \label{stip:sign}
A choice of sign remains, see below. It will be motivated by energy conditions.
\end{enumerate}

By \ref{stip:br},
the full bracket is uniquely determined by
\begin{subequations}
\begin{align}
\label{b011}
\E^0 \times\Phi^1 & \to \Phi^1\\
\label{b112}
\Phi^1 \times \Phi^1 & \to \E^2
\end{align}
\end{subequations}
since $\E$ and $\Phi$ are generated by respectively
$\E^0$ and $\Phi^1$ over $\wedge W$, see \ref{stip:phi}, \ref{stip:mod}. 
\begin{itemize}
\item
\eqref{b011} is determined by \ref{stip:epp}.
This then determines $\E \times \Phi \to \Phi$ uniquely
and the Jacobi identity
$\E \times \E \times \Phi \to \Phi$ holds\footnote{%
So $\E \times \Phi \to \Phi$ is a representation of $\E$ on $\Phi$.
The action of $\E^0 \simeq \CDer(W)$ on
$\wedge^i W \subset \Phi^i$ 
is the standard one, but not the one on
$\wedge^4 W \subset \Phi^2$.
For example $[\sigma_0,-]$ acts like multiplication by $i$
on $\wedge^i W \subset \Phi^i$,
but like multiplication by $2$ on
$\wedge^4 W \subset \Phi^2$,
due to the density bundle mentioned in another footnote.
}.
\item
\eqref{b112}
must be $\cinf$-bilinear by \ref{stip:anc}; symmetric;
satisfy $\anc([\Phi^1,\Phi^1]) = 0$\footnote{%
By \ref{stip:anc} we have $\anc(\Phi) = 0$,
and 
since $\anc$ is a gLa map,
$\anc([\Phi,\Phi]) = 0$.
};
the Jacobi identity $\E^0 \times \Phi^1 \times \Phi^1 \to \E^2$ must hold\footnote{%
In particular,
\eqref{b112} must have constant coefficients relative to the distinguished bases.
};
and for all $v^i \in \cinf$ with
$\eta_{ij}v^iv^j = 0$
one needs $v^i v^j \theta_i [\inphi{\theta_j},\Phi^1] = 0$\footnote{%
Necessary for the existence of a bracket
$\Phi^2 \times \Phi^1 \to \E^3$ with
$[w-,-] = w[-,-] : \Phi^1 \times \Phi^1 \to \E^3$ for all $w \in W$.
In fact, the
$v^i v^j \theta_i \otimes \inphi{\theta_j}$
are in the kernel of 
the module multiplication
$W \otimes \Phi^1 \to \Phi^2$
in \ref{stip:mod}.}.
This implies that for all $\phiel = a^i \inphi{\theta_i} \in \Phi^1$,
\[
[\phiel,\phiel]
\;=\;
(-6 a^p a^q
+
\eta_{bc} a^ba^c \eta^{pq}
) \eta^{ij} \theta_i \theta_p \sigma_{jq}\;\in\; \E^2
\]
for all $a^i \in \cinf$.
The multiplicative constant is irrelevant,
except for the sign which, as announced in \ref{stip:sign},
was chosen consistent with the sign in \eqref{eq:esf}.
\end{itemize}
One checks that the bracket exists and defines a gLa,
and $[\Phi^1,\Phi^3] = [\Phi^2,\Phi^2] = 0$.
\begin{definition}[Associated one-form] \label{def:aof2}
For every $\eel \oplus \phiel \in \E_\Phi^1$
with nondegenerate $\eel$,
denote by $\beta_{\eel \oplus \phiel}$ the one-form
associated to $\phiel$ via the frame of $\eel$.
\end{definition}
\begin{lemma}[gLa Einstein coupled to massless scalar field] \label{lemma:gsf}
Let $\E_\GS$ be any gauge for $\E$, with bilinear forms, as in Lemma \ref{lemma:glavac}.
Then
\begin{itemize}
\item The gLa $\E_\Phi$
satisfies the axioms in Definition \ref{def:axg},
with symmetric hyperbolic gauge
$\E_{\Phi,\GS} = \E_\GS \oplus \Phi_\GS$ where $\Phi_\GS$ is given by,
with notation from \ref{stip:phi},
\begin{align*}
\Phi^1_\GS & = \Phi^1 &
\Phi^2_\GS & = \cinf \inphi{\theta_2\theta_3} \oplus
               \cinf \inphi{\theta_3\theta_1} \oplus
               \cinf \inphi{\theta_1\theta_2} \\
\Phi^3_\GS & = \cinf \inphi{\theta_1\theta_2\theta_3} &
\Phi^4_\GS & = 0
\end{align*}
and we omit obvious compatible bilinear forms as required in Definition \ref{def:axg}.
\item If $\eel \oplus \phiel \in \MC(\E_\Phi)$
and if $\eel$ is nondegenerate in the sense of Definition \ref{def:nondeg},
then the one-form $\beta_{\eel \oplus \phiel}$ is closed,
and \eqref{eq:esf} holds on the universal cover of $\Omega$
where $g$ is determined by $\eel$ as in Lemma \ref{lemma:glavac}
and $\phi$ is a nonzero constant times the integral of $\phiel$.
Furthermore $\alpha_{\eel}$ and $\beta_{\eel \oplus \phiel}$ are closed,
and we get a solution to \eqref{eq:esf} on $\Omega$
if and only if
$\alpha_{\eel}$ and $\beta_{\eel \oplus \phiel}$ are exact.
\end{itemize}
\end{lemma}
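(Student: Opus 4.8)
The plan is to check the two bullet points of Lemma~\ref{lemma:gsf} in turn, reducing the work in each case to facts about $\E$ supplied by Lemma~\ref{lemma:glavac} and \cite{rtgla} together with the explicit description of $\Phi$ in \ref{stip:br}--\ref{stip:sign}.

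\emph{The axioms of Definition~\ref{def:axg} for $\E_\Phi$.} The graded $\wedge W$-module structure on $\E_\Phi = \E\oplus\Phi$ is the one of $\E$ in the first summand and the one of \ref{stip:mod} in the second, and it is free over $\cinf$ because $\E$ and $\Phi$ are, with ranks read off from \ref{stip:phi}. The anchor factors through $\E_\Phi \to \E$ by \ref{stip:anc}, so the algebroid Leibniz identity $[x,\omega y] = \anc(x)(\omega)y + (-1)^{x\omega}\omega[x,y]$ on mixed arguments reduces to $\wedge W$-linearity of every bracket having a factor in $\Phi$, which holds by \ref{stip:epp} and the construction of \eqref{b112}; that $\E_\Phi$ is a gLa was already established in the construction preceding the lemma. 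For the symmetric hyperbolic gauge one takes $\E_{\Phi,\GS} = \E_\GS\oplus\Phi_\GS$; the $\E_\GS$ part is Lemma~\ref{lemma:glavac}, and it remains to check for $\Phi_\GS$ that multiplication by any $w\in W_+$ is injective on $\Phi_\GS$ and that $\Phi^i = \Phi^i_\GS\oplus w\,\Phi^{i-1}_\GS$ in each degree. This is a rank count ($7=3+4$ in degree $2$, $4=1+3$ in degree $3$, $1=0+1$ in degree $4$) together with the elementary observation that $\Phi^i_\GS$ consists of purely spatial basis forms whereas the image of multiplication by a timelike $w$ always carries a $\theta_0$-leg (in degree $2$ one first splits off the $\wedge^4 W$ summand using the footnote identity $w\wedge\psi^\ast = \langle w,\psi\rangle\,\mathrm{vol}$). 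The bilinear forms $b^i$ on $\Phi_\GS$ are the natural pairings exhibiting the de Rham tail $\Phi^1 = W \to \wedge^2 W\oplus\wedge^4 W \to \wedge^3 W \to \wedge^4 W$ as a symmetric hyperbolic system; the kernel condition \eqref{eq:exk} and positivity on $W_+$ are then immediate, as is the constant-coefficient requirement relative to the chosen bases.

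\emph{Recovering \eqref{eq:esf}.} Write $x = \eel\oplus\phiel$. By \ref{stip:br}, $[x,x] = \bigl([\eel,\eel]+[\phiel,\phiel]\bigr)\oplus 2[\eel,\phiel]$ with the first summand in $\E^2$ and the second in $\Phi^2$, so $x\in\MC(\E_\Phi)$ is equivalent to $[\eel,\phiel] = 0$ in $\Phi^2$ together with $[\eel,\eel] = -[\phiel,\phiel]$ in $\E^2$. Since $\eel$ is nondegenerate, the bracket with $\eel$ supplies, through the frame of $\eel$, the exterior derivative on the shifted de Rham complex $\Phi^\bullet$ (the footnote to \ref{stip:phi}); thus the $\wedge^2 W$-component of $[\eel,\phiel] = 0$ says that $\beta := \beta_{\eel\oplus\phiel}$ is closed, and the $\wedge^4 W$-component says $\dd(\ast_g\beta) = 0$. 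Closedness of $\alpha_{\eel}$ is part of Lemma~\ref{lemma:glavac}; the Poincar\'e lemma (in the form valid on $\Omega$ already used for $\E$) then produces, on the universal cover, a function $\phi$ with $\dd\phi = \mathrm{const}\cdot\beta$, and $g$ together with $\phi$ descends to $\Omega$ precisely when both $\alpha_{\eel}$ and $\beta$ are exact. Reading $\dd(\ast_g\beta)=0$ as $\Box_g\phi = 0$ gives the second equation of \eqref{eq:esf}. For the first, the analysis of \cite{rtgla} underlying Lemma~\ref{lemma:glavac} writes $[\eel,\eel]$, for nondegenerate $\eel$, as a ``torsion'' part whose vanishing forces the connection component of $\eel$ to be the Levi--Civita connection of $g$, plus a ``Ricci'' part; since the formula $[\phiel,\phiel] = (-6 a^p a^q + \eta_{bc}a^b a^c\,\eta^{pq})\,\eta^{ij}\theta_i\theta_p\sigma_{jq}$ involves only the $\sigma_{ab}$-generators, it lies entirely in the Ricci part, so the Levi--Civita conclusion is unchanged, and matching the Ricci parts of the two sides against $2\,\dd\phi\otimes\dd\phi$ expressed in the frame $\theta_0,\dots,\theta_3$ (with $a^i$ the frame components of $\beta = \dd\phi$, and the sign fixed in \ref{stip:sign}) identifies $[\eel,\eel] = -[\phiel,\phiel]$ with $\Ric_g = 2\,\dd\phi\otimes\dd\phi$.

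I expect the main obstacle to be the conformal- and density-bundle bookkeeping in the last step: confirming that for the \emph{particular} representative $g$ pinned down through $\alpha_{\eel}$ as in Lemma~\ref{lemma:glavac} the $\wedge^4 W$-equation is exactly $\Box_g\phi = 0$ with no residual curvature term, and that after passing to the quotient $\E = \L/\I$ both the trace and the trace-free parts of $[\phiel,\phiel]$ survive and assemble into the full tensor $2\,\dd\phi\otimes\dd\phi$ and not merely its trace-free part. Both are statements about the internal structure of $\E$ from \cite{rtgla}, so the effort lies in extracting them in the precise normalization required rather than in any new analysis.
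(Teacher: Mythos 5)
Your proposal takes the same approach as the paper's own proof, which is a two-sentence remark: it decomposes the MC equation into $[\eel,\eel]+[\phiel,\phiel]=0$ and $[\eel,\phiel]=0$ via \ref{stip:br} and asserts this implies \eqref{eq:esf}, and then notes that multiplication by $w\in W_+$ is injective $\Phi^i_\GS\to\Phi^{i+1}$, flagging the Hodge dual in \ref{stip:mod} for the $i=1$ case — exactly the two points you elaborate. The bookkeeping concern you raise at the end (matching normalizations against the Ricci decomposition in \cite{rtgla}) is precisely what the paper also leaves implicit, so it is not a gap relative to the paper, though you are right that a fully self-contained verification would need to unpack it.
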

\begin{proof}
The MC equations $[\eel,\eel] + [\phiel,\phiel] = [\eel,\phiel] = 0$ imply \eqref{eq:esf}.
Module multiplication by $w \in W_+$ is injective
as a map $\Phi_\GS^i \to \Phi^{i+1}$; for $i=1$ use the Hodge dual in \ref{stip:mod}.
\qed\end{proof}

\begin{remark}[Kasner metric] \label{remark:kasner2}
On
$\Omega = [0,\infty) \times \mathbbm{T}^3$
consider the element of $\E_\Phi^1$ given by
\[
\begin{aligned}
& \theta_0 \p_0
+ e^{-(p_2+p_3)\tau} \theta_1 \p_1
+ e^{-(p_3+p_1)\tau} \theta_2 \p_2
+ e^{-(p_1+p_2)\tau} \theta_3 \p_3 &\quad& \textnormal{conformal frame}\\
& + (p_1+p_2+p_3)\theta_0\sigma_0
+ p_1 \theta_1\sigma_1
+ p_2 \theta_2\sigma_2
+ p_3 \theta_3\sigma_3 && \textnormal{connection}\\
& + \bklq \inphi{\theta_0} && \textnormal{scalar field}
\end{aligned}
\]
By direct calculation,
it is in $\MC(\E_\Phi)$ if and only if $p_2p_3 + p_3p_1 + p_1p_2 = 3 \bklq^2$.
It is nondegenerate in the sense of Definition \ref{def:nondeg},
with associated conformal metric
\[
g = A^2 \cdot
\big(
-(\dd \tau)^2
+ e^{2(p_2 + p_3)\tau} (\dd x^1)^2
+ e^{2(p_3 + p_1)\tau} (\dd x^2)^2
+ e^{2(p_1 + p_2)\tau} (\dd x^3)^2
\big)
\]
The conformal factor $A>0$ that yields a solution to \eqref{eq:esf}
is $A = \exp(-\int \alpha)$ using the one-form
$\alpha = (p_1+p_2+p_3)\dd \tau$ in Definition \ref{def:aof1}.
We get the metric in Remark \ref{remark:kasner1}.
\end{remark}
\begin{remark}[Nondegeneracy is preserved] \label{remark:nondegpreserved}
For simplicity assume $\Omega = [0,\infty) \times \mathbbm{T}^3$ here
and let $\p_\mu$ be the standard partial derivatives.
Consider an element of the form
\begin{equation}\label{eq:t0d0}
     x = \underbrace{\theta_0\p_0 + \textstyle\sum_{i=1}^3\sum_{\mu=0}^3 e_i^\mu \theta_i \p_\mu}_{\text{frame}}
           + \underbrace{
                  \textstyle\sum_{i=0}^3 a_i \theta_i \sigma_i + \text{rest}
                }_{\text{connection}}
\end{equation}
in $\E^1$ or $\E_\Phi^1$. Here `rest' is a linear combination
of the other basis elements\footnote{These are $\theta_i \sigma_j$ with $i \neq j$
and $\theta_i \sigma_{jk}$ and perhaps a scalar field piece.}.
Note that $x$ is nondegenerate iff $m = \det((e_i^\mu)_{i=1,2,3,\mu=1,2,3})$
is nonzero. We have
\[
[x,x] = 0
\qquad
\Longrightarrow
\qquad
      \p_0 m = (-3a_0 + a_1+a_2+a_3)m
\]
This can be checked for the Kasner element in Remark
\ref{remark:kasner2},
and in general follows from the vanishing of the coefficients
of $\theta_0 \theta_i \p_\mu$ in $[x,x] = 0$.
It follows that under these conditions, nondegeneracy is preserved
along the integral curves of $\p_0$.
\end{remark}
\section{The BKL filtration and its Rees algebra} \label{section:filtration}
In \cite{rtfil} a filtration of $\E$ is defined
that we call the BKL filtration.
For the reader's convenience, a complete but ad-hoc
definition is included here,
together with a new but straightforward extension to $\E_\Phi$.
For a conceptual treatment, see \cite{rtfil}.
\step
We first define an $\N^3$-grading of $\E_\Phi$.
The bracket does not respect this,
but it respects the corresponding non-decreasing filtration \eqref{eq:fusingg}.
The $\N^3$-grading is compatible with the native $\N$-grading of the gLa,
giving an $\N \times \N^3 \simeq \N^4$ grading.
\begin{definition}[The BKL filtration]\label{def:filrees}
Define an $\N^3$-grading $\E_\Phi = \bigoplus_\alpha G_\alpha \E_\Phi$
by free $\cinf$-modules $G_\alpha \E_\Phi = G_\alpha \E \oplus G_\alpha \Phi$
defined in Table \ref{table:EPhiG}. Set
\begin{align}
\label{eq:fusingg}
 \text{BKL filtration}:&&
  F_\alpha \E_\Phi & =\textstyle\bigoplus_{\beta \leq \alpha} G_\beta \E_\Phi
\intertext{%
It is non-decreasing,
and $F_\alpha \E_\Phi = 0$ if $\alpha \not\geq 0$.
Let $\spar$ be a vector of three symbols
and set $\spar^\alpha = \spar_1^{\alpha_1} \spar_2^{\alpha_2} \spar_3^{\alpha_3}$,
$\alpha \in \N^3$.
Set, with summation as formal power series,
}
\text{Rees algebra}:&&
\P & =
\{ \textstyle\sum_\alpha x_\alpha \spar^\alpha \mid x_\alpha \in F_\alpha \E_\Phi
\}\\
\text{Associated graded algebra}:&&
\A & = \P/\spar
\end{align}
They are graded Lie algebras, by Lemma \ref{lemma:fil} below.
\end{definition}
 \begin{table}
   \footnotesize{
   \[
   \begin{array}{|l|l|l|}
     \hline
     \alpha & \text{Basis for $G_\alpha \E_\GB$.}
     & \text{Basis for $G_\alpha \Phi_\GB$. By definition} \\
     & \text{By definition, $G_\alpha\E = G_\alpha \E_\GB \oplus \theta_0 G_\alpha \E_\GB$.}
     & G_\alpha\Phi = G_\alpha \Phi_\GB \oplus \theta_0 G_\alpha \Phi_\GB.\\
     \hline
     \hline
     000 & \Der(\cinf),\spx\sigma_0,\spx\theta_0 \sigma_0 + \theta_1\sigma_1,\spx
             \theta_0\sigma_0 + \theta_2\sigma_2,\spx \theta_0\sigma_0 + \theta_3\sigma_3,
         & \inphi{\theta_0} \\
             & 
             \theta_2\theta_3\sigma_{23}+\theta_3\theta_1\sigma_{31}+\theta_1\theta_2\sigma_{12}
             + 2\theta_0\theta_1\sigma_1+2\theta_0\theta_2\sigma_2+2\theta_0\theta_3\sigma_3 & \\
     \hline
     200 & -\theta_1\sigma_{23} + \theta_2\sigma_{31} + \theta_3\sigma_{12}
         & \text{none}\\
     \hline
     020 & +\theta_1\sigma_{23} - \theta_2\sigma_{31} + \theta_3\sigma_{12}
         & \text{none}\\     
     \hline
     002 & +\theta_1\sigma_{23} + \theta_2\sigma_{31} - \theta_3\sigma_{12}
         & \text{none}\\
     \hline
     011 & \sigma_1,\spx
             \sigma_{23},\spx
             \theta_1 \Der(\cinf),\spx
             \theta_0\sigma_1+\theta_1\sigma_0,\spx
             \theta_2\sigma_3+\theta_3\sigma_2,
         & \inphi{\theta_1} \\
           & \theta_3\sigma_{31},\spx
             \theta_2\sigma_{12},\spx
             \theta_0\theta_2\sigma_{12}+\theta_1\theta_2\sigma_2 & \\
             \hline
     101 & \sigma_2,\spx
             \sigma_{31},\spx
             \theta_2 \Der(\cinf),\spx
             \theta_0\sigma_2+\theta_2\sigma_0,\spx
             \theta_3\sigma_1+\theta_1\sigma_3,
         & \inphi{\theta_2} \\
           & \theta_1\sigma_{12},\spx
             \theta_3\sigma_{23},\spx
             \theta_0\theta_3\sigma_{23}+\theta_2\theta_3\sigma_3 & \\
             \hline
     110 & \sigma_3,\spx
             \sigma_{12},\spx
             \theta_3 \Der(\cinf),\spx
             \theta_0\sigma_3+\theta_3\sigma_0,\spx
             \theta_1\sigma_2+\theta_2\sigma_1,
         & \inphi{\theta_3}\\
           & \theta_2\sigma_{23},\spx
             \theta_1\sigma_{31},\spx
             \theta_0\theta_1\sigma_{31}+\theta_3\theta_1\sigma_1 & \\
             \hline
     211 & \theta_2\sigma_3 - \theta_3\sigma_2,\spx
             \theta_2\theta_3 \Der(\cinf),
         & \inphi{\theta_2\theta_3} \\
           & \theta_0\theta_2\sigma_3-\theta_0\theta_3\sigma_2-2\theta_2\theta_3\sigma_0,\spx
             \theta_0\theta_2\sigma_3+\theta_0\theta_3\sigma_2-2\theta_1\theta_2\sigma_{31} & \\
             \hline
     121 & \theta_3\sigma_1 - \theta_1\sigma_3,\spx
             \theta_3\theta_1 \Der(\cinf),
         & \inphi{\theta_3\theta_1}\\
           & \theta_0\theta_3\sigma_1-\theta_0\theta_1\sigma_3-2\theta_3\theta_1\sigma_0,\spx
             \theta_0\theta_3\sigma_1+\theta_0\theta_1\sigma_3-2\theta_2\theta_3\sigma_{12} & \\
             \hline
     112 & \theta_1\sigma_2 - \theta_2\sigma_1,\spx
             \theta_1\theta_2 \Der(\cinf),
         & \inphi{\theta_1\theta_2}\\
           & \theta_0\theta_1\sigma_2-\theta_0\theta_2\sigma_1-2\theta_1\theta_2\sigma_0,\spx
             \theta_0\theta_1\sigma_2+\theta_0\theta_2\sigma_1-2\theta_3\theta_1\sigma_{23} & \\
             \hline
     222 & \theta_0\theta_1\sigma_{23}+\theta_0\theta_2\sigma_{31}
             + \theta_0\theta_3\sigma_{12}
             - 2 \theta_2\theta_3\sigma_1
             - 2 \theta_3\theta_1\sigma_2
             - 2 \theta_1\theta_2\sigma_3,
         & \inphi{\theta_1\theta_2\theta_3} \\
           & \theta_1\theta_2\theta_3 \Der(\cinf),\spx
            \theta_0\theta_2\theta_3\sigma_1
            +\theta_0\theta_3\theta_1\sigma_2
            +\theta_0\theta_1\theta_2\sigma_3
            +3\theta_1\theta_2\theta_3\sigma_0 &
     \\
     \hline
     \text{else} & \text{none} & \text{none}\\
     \hline
 \end{array}
 \]}
 \caption{
 Definition of $G_\alpha\E_\Phi = G_\alpha \E \oplus G_\alpha \Phi$.
 All elements are in $\E$ via $\mathcal{L} \to \E$.
 In this paper we often refer to the elements in this table and
 this often implicitly includes $\theta_0$ times these elements; see the table heading.
 This is always clear from context. In particular, the elements in the $\E$-column
 (resp.~$\Phi$-column)
 and $\theta_0$ times these elements is a basis for $\E$ (resp.~$\Phi$).
 Note that the definition of $G_\alpha \E_\Phi$ is such that permuting the components of $\alpha$
 is equivalent to applying an automorphism of $W$
 that permutes the  $\theta_1$, $\theta_2$, $\theta_3$.
 }\label{table:EPhiG}
 \end{table}
\begin{lemma}\label{lemma:fil}
The filtration
$F_\alpha \E_\Phi$
is a graded Lie algebra filtration,
\[
[F_\alpha\E_\Phi,F_\beta\E_\Phi] \subset F_{\alpha+\beta}\E_\Phi
\]
Therefore
$\P$ is a gLa over $\R[[\spar]]$, namely a sub gLa of
$\E_\Phi[[\spar]] = \{ \sum_\alpha x_\alpha \spar^\alpha \mid x_\alpha \in \E_\Phi\}$.
Accordingly its associated graded $\A$ is a gLa over $\R$.
\end{lemma}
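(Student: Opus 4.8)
The plan is to reduce the filtration inclusion to a finite computation and then deduce the remaining assertions formally. Since $F_\bullet\E_\Phi$ is by construction the smallest non-decreasing $\N^3$-filtration containing all the $G_\alpha\E_\Phi$, the inclusion $[F_\alpha\E_\Phi,F_\beta\E_\Phi]\subset F_{\alpha+\beta}\E_\Phi$ is equivalent to
\[
[G_\gamma\E_\Phi,\,G_\delta\E_\Phi]\;\subset\;F_{\gamma+\delta}\E_\Phi
\qquad\text{for all }\gamma,\delta\in\N^3,
\]
because $F_{\gamma+\delta}\E_\Phi\subset F_{\alpha+\beta}\E_\Phi$ whenever $\gamma\le\alpha$ and $\delta\le\beta$. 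Each $G_\gamma\E_\Phi$ is a finitely generated free $\cinf$-module with the basis listed in Table \ref{table:EPhiG} (including the $\theta_0$-multiples, per the table heading), and the bracket is a $\wedge W$-graded algebroid bracket, so that $[x,\omega y]=\anc(x)(\omega)\,y+(-1)^{x\omega}\omega[x,y]$ for a basis element $x$ and $\omega\in\wedge W$. Hence it suffices to verify: (i) $[x,y]\in F_{\gamma+\delta}\E_\Phi$ for $x,y$ ranging over the distinguished bases of $G_\gamma\E_\Phi$ and $G_\delta\E_\Phi$; and (ii) that the anchor term is harmless, i.e.\ — since the $\sigma$-type generators annihilate $\cinf$ — that multiplication by the $\theta$-monomial $m$ attached to any generator of the form $m\,\Der(\cinf)\subset G_\gamma\E_\Phi$ maps $G_\delta\E_\Phi$ into $F_{\gamma+\delta}\E_\Phi$. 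Both are finitely many statements.

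To cut the work down I would use two reductions. By the remark in the caption of Table \ref{table:EPhiG}, permuting the components of a multi-index is induced by an automorphism of $W$ permuting $\theta_1,\theta_2,\theta_3$, which extends to a graded Lie algebra automorphism of $\E_\Phi$ carrying $G_\alpha\E_\Phi$ to $G_{\pi\alpha}\E_\Phi$; so only one representative of each $S_3$-orbit of pairs $(\gamma,\delta)$ needs attention, and crude bounds on $\N$-degree and on $\theta$-degree dispose of most of them. The residual cases split along $\E_\Phi=\E\oplus\Phi$ and stipulation \ref{stip:br}. The purely Einsteinian brackets $[F_\alpha\E,F_\beta\E]\subset F_{\alpha+\beta}\E$ are the content of \cite{rtfil}; alternatively they can be rechecked directly by lifting elements to $\L$ and computing with \eqref{eq:brak} modulo the ideal generated by \eqref{eq:i2}. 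New are the cross terms $[\E,\Phi]\subset\Phi$ and the terms $[\Phi,\Phi]\subset\E$. Because $\Phi$ is generated over $\wedge W$ by $\Phi^1=W$ with $\inphi{\theta_0},\inphi{\theta_1},\inphi{\theta_2},\inphi{\theta_3}$ lying in $G_{000},G_{011},G_{101},G_{110}$ respectively, these reduce — via the module structure \ref{stip:mod}, the action \ref{stip:epp}, and the displayed formula for $[\phiel,\phiel]$ preceding Definition \ref{def:aof2} — to the statement that the $\CDer(W)$-action on $\Phi^1\simeq W$ is graded (immediate from the table) together with the finitely many brackets $[\inphi{\theta_i},\inphi{\theta_j}]\in\E^2$. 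For instance, a direct calculation gives
\[
[\inphi{\theta_0},\inphi{\theta_0}]
=-2\bigl(\theta_2\theta_3\sigma_{23}+\theta_3\theta_1\sigma_{31}+\theta_1\theta_2\sigma_{12}
+2\theta_0\theta_1\sigma_1+2\theta_0\theta_2\sigma_2+2\theta_0\theta_3\sigma_3\bigr),
\]
a multiple of the last listed generator of $G_{000}\E$, hence in $G_{000}=F_{000+000}$; the other $[\inphi{\theta_i},\inphi{\theta_j}]$ and the module-multiplication facts in (ii) are analogous. As in the $\E$-case, one occasionally has to re-express a product through the chosen bases — e.g.\ $\theta_1\sigma_{23}=\tfrac12\bigl[(\theta_1\sigma_{23}-\theta_2\sigma_{31}+\theta_3\sigma_{12})+(\theta_1\sigma_{23}+\theta_2\sigma_{31}-\theta_3\sigma_{12})\bigr]\in G_{020}\E+G_{002}\E\subset F_{022}\E$ — to read off which part of the filtration it occupies. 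This verification, routine but lengthy and delicate precisely in the bookkeeping of $\theta_0$-multiples and of such re-expressions, is the main obstacle.

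Granting the filtration property, the remaining claims are formal. The bracket on $\E_\Phi[[\spar]]$ is $\R[[\spar]]$-bilinear, and for $x=\sum_\alpha x_\alpha\spar^\alpha$ and $y=\sum_\beta y_\beta\spar^\beta$ in $\P$ (so $x_\alpha\in F_\alpha\E_\Phi$, $y_\beta\in F_\beta\E_\Phi$) one has $[x,y]=\sum_\gamma\bigl(\sum_{\alpha+\beta=\gamma}[x_\alpha,y_\beta]\bigr)\spar^\gamma$ with each $[x_\alpha,y_\beta]\in F_{\alpha+\beta}\E_\Phi=F_\gamma\E_\Phi$, so $[x,y]\in\P$; thus $\P$ is an $\R[[\spar]]$-submodule of $\E_\Phi[[\spar]]$ closed under the bracket, inheriting graded antisymmetry and the Jacobi identity \eqref{eq:jacobi}, hence a graded Lie algebra over $\R[[\spar]]$. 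Finally $\spar\P=\spar_1\P+\spar_2\P+\spar_3\P$ is an $\R[[\spar]]$-submodule with $[\spar_i\P,\P]=\spar_i[\P,\P]\subset\spar_i\P$, so it is a graded ideal, and therefore $\A=\P/\spar\P$ is a graded Lie algebra; being a module over $\R[[\spar]]/(\spar_1,\spar_2,\spar_3)=\R$, it is a graded Lie algebra over $\R$.
\qed
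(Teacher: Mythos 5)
Your argument coincides with the paper's essentially one-line proof: cite \cite{rtfil} for the purely $\E$-block and reduce the scalar-field extension to a finite direct check on the $G_\alpha$-generators, which is exactly what ``by direct calculation'' refers to. You unpack the bookkeeping of that check --- reduction to basis brackets via $\wedge W$-generation and the algebroid Leibniz rule, the $S_3$-symmetry, the placement of the $\inphi{\theta_i}$, and the sample bracket $[\inphi{\theta_0},\inphi{\theta_0}]$ (which I verified equals $-2$ times the last $G_{000}\E$ generator) --- and your formal deduction for $\P$ and $\A$ from the filtration inclusion is standard and correct.
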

\begin{proof}
For $\E$ see \cite{rtfil}. The extension to $\E_\Phi$ is by direct calculation.
\qed\end{proof}
We use the notation
\[
		\A^i_\alpha \subset \A
\]
for elements of homological degree $i \in \N$ and filtration degree $\alpha \in \N^3$,
therefore $\A=\bigoplus_{\alpha,i}\A_{\alpha}^{i}$ 
and $[\A_{\alpha}^i,\A_{\beta}^j] \subset \A_{\alpha+\beta}^{i+j}$.
The additional subscript in $\A^i_{\alpha,\GB}$
refers to the span of the elements in Table \ref{table:EPhiG}.
We write
\[
\textstyle\A_{>0} = \bigoplus_{\alpha \neq 0} \A_{\alpha}
\]
which is a nilpotent subalgebra.
Notation such as $\A^i_{>0}$ and $\A^i_{\GB}$ is analogous.
\begin{remark}\label{remark:abbs}
The surjection $\P \to \A$ has a unique right-inverse
$\A \hookrightarrow \P$
with image 
\[
\{ \textstyle\sum_\alpha x_\alpha \spar^\alpha \mid
x_\alpha \in G_{\alpha} \E_{\Phi}\}
\]
This is not a gLa morphism.
But it induces an isomorphism 
$\P \simeq \A[[\spar]]$
of finite free $\cinf[[\spar]]$-modules
through which
the bracket on $\P$ is the $\R[[\spar]]$-bilinear extension of a map
$\A \times \A \to \A[\spar]$\footnote{%
Roughly, in $\E_\Phi$ the bracket of something in \smash{$G_{\alpha}$}
with something in \smash{$G_{\beta}$} is a sum of terms in
\smash{$G_{\gamma}$} with $\gamma \leq \alpha + \beta$,
and each summand is given a factor \smash{$\spar^{\alpha+\beta - \gamma}$} 
for the purpose of $\A \times \A \to \A[\spar]$.
}; the bracket on $\A$
is obtained by setting $\spar = 0$.
\end{remark}

\section{$\MC(\A)$ elements, leading term} \label{sec:mca}

In this section $M^3$ is a parallelizable closed 3-manifold and
\[
\Omega = [0,\infty) \times M^3
\]
The coordinate on the first factor is denoted $\tau$,
and we implicitly use the injection $C^\infty(M^3) \hookrightarrow C^\infty(\Omega)$
that corresponds to functions that are independent of $\tau$.
In Section \ref{sec:mcp} we construct $\MC(\P)$ elements.
The leading term of such an element is naturally
its image under the canonical $\MC(\P) \to \MC(\A)$.
Hence it makes sense to begin by constructing elements in $\MC(\A)$.
We begin with examples.

\begin{lemma}[Spatially homogeneous elements] \label{lemma:she}
Suppose $D_1,D_2,D_3 \in \Gamma(TM^3)$ are vector fields that satisfy
$[D_{i+1},D_{i+2}] = c_i D_i$ for all $i \in \Z/3\Z$ for some $c_i \in \R$.
Suppose further that $\bklq,p_1,p_2,p_3 \in \R$ satisfy
$3 \bklq^2 =  p_2p_3 + p_3p_1 + p_1p_2$.
Then
\begin{equation*}
\begin{aligned}
  \theta_0 \p_0 \;+\;
   \bklq\inphi{\theta_0} \;+\;
\textstyle\sum_{(i,j,k) \in C}
(\;\;
  +\,& p_i (\theta_0 \sigma_0 + \theta_i\sigma_i)\\
  +\,& \smash{\tfrac12 \sv{i}^2 c_i (\theta_i \sigma_{jk} - \theta_j \sigma_{ki} - \theta_k \sigma_{ij})}\\
  +\,& \sv{j}\sv{k} \theta_i D_i
  )
\end{aligned}
\end{equation*}
is in $\MC(\A)$.
Here we abbreviate $\sv{i} = \spar_i e^{-p_i \tau }$,
and see \eqref{eq:cycl} for the definition of $C$.
\end{lemma}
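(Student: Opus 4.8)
The plan is to verify directly that the displayed element, call it $x$, satisfies $[x,x]=0$ in the graded Lie algebra $\A$. First I would decompose $x$ by filtration degree. Writing $x=\sum_\alpha x_\alpha\spar^\alpha$ with $x_\alpha\in G_\alpha\E_\Phi$ (the right inverse of Remark \ref{remark:abbs}), and reading off Table \ref{table:EPhiG}, the nonzero $\spar$-homogeneous components sit at filtration degree $000$, where $x_{000}=\theta_0\p_0+\bklq\inphi{\theta_0}+\sum_{i=1}^3 p_i(\theta_0\sigma_0+\theta_i\sigma_i)$; at $(2,0,0),(0,2,0),(0,0,2)$, where e.g.\ $x_{(2,0,0)}=\tfrac12 c_1 e^{-2p_1\tau}(\theta_1\sigma_{23}-\theta_2\sigma_{31}-\theta_3\sigma_{12})$ is a $\cinf$-multiple of the $G_{(2,0,0)}$ generator; and at $(0,1,1),(1,0,1),(1,1,0)$, where $x_{(0,1,1)}=e^{-(p_2+p_3)\tau}\theta_1 D_1$ and cyclically. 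By Remark \ref{remark:abbs} the bracket in $\A$ keeps, for each pair $(\alpha,\beta)$, only the top graded piece of the $\E_\Phi$-bracket:
\[ [x,x]_{\A}=\sum_{\gamma}\Big(\sum_{\alpha+\beta=\gamma}\mathrm{proj}_{G_\gamma}[x_\alpha,x_\beta]_{\E_\Phi}\Big), \]
where $\mathrm{proj}_{G_\gamma}\colon F_\gamma\E_\Phi\to G_\gamma\E_\Phi$ is the projection along the other summands. So it suffices to check the vanishing of the $G_\gamma$-projected sum for every $\gamma$ realized as $\alpha+\beta$ with $G_\gamma\neq 0$. Going through the pairwise sums, these are $000$; the three degrees $(2,0,0),(0,2,0),(0,0,2)$; the three degrees $(0,1,1),(1,0,1),(1,1,0)$; and the three degrees $(2,1,1),(1,2,1),(1,1,2)$. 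No binary sum of the nonzero components reaches a higher degree with a nonzero $G_\gamma$ (in particular not $(2,2,2)$), and self-brackets land in degrees like $(0,2,2)$ where $G=0$, so they contribute nothing.

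For $\gamma=000$ only $[x_{000},x_{000}]_{\E_\Phi}$ contributes, and since $F_{000}\E_\Phi=G_{000}\E_\Phi$ this bracket already lies in $G_{000}\E_\Phi$, so the projection is trivial. As $x_{000}$ has constant coefficients the bracket is purely algebraic in $\L/\I$: the connection part is evaluated with the $\CDer(W)$-brackets \eqref{eq:brak} and the scalar-field part with the formula $[\phiel,\phiel]=(-6a^pa^q+\eta_{bc}a^ba^c\eta^{pq})\eta^{ij}\theta_i\theta_p\sigma_{jq}$ of \ref{stip:sign} (with $a=(\bklq,0,0,0)$). This is exactly the computation underlying Remark \ref{remark:kasner2}, and it vanishes precisely under the hypothesis $3\bklq^2=p_2p_3+p_3p_1+p_1p_2$; this is the only equation that uses that hypothesis.

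For $\gamma\in\{(2,0,0),(0,2,0),(0,0,2)\}$ and $\gamma\in\{(0,1,1),(1,0,1),(1,1,0)\}$ the only contribution is $2\,\mathrm{proj}_{G_\gamma}[x_{000},x_{\gamma}]_{\E_\Phi}$, and I would show it vanishes identically, i.e.\ using only the algebra of $\E_\Phi$ and the filtration of \cite{rtfil} (the velocity-term-dominated kinematics). The mechanism is a cancellation: the derivation $\theta_0\p_0\subset x_{000}$ differentiates the coefficient ($e^{-2p_1\tau}$, resp.\ $e^{-(p_2+p_3)\tau}$), producing that coefficient times $-$(weight) times $\theta_0$ times the generator, while the Kasner connection $\sum_l p_l(\theta_0\sigma_0+\theta_l\sigma_l)$ brackets with the generator to give the same term with opposite sign — for $\gamma=(0,1,1)$ one gets exactly $(\sum_l p_l)-p_1=p_2+p_3$ against the $-(p_2+p_3)$ from $\p_0$ (and $[\theta_0\p_0,\theta_1 D_1]=0$). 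The scalar-field bracket $[\bklq\inphi{\theta_0},x_\gamma]$ either vanishes, because $\bklq$ is constant and $D_i$ annihilates $\theta_0\in W\simeq\Phi^1$ under the action of \ref{stip:epp}, or lands in $\Phi^3$, whose relevant $G_\gamma$-component is zero by Table \ref{table:EPhiG}.

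Finally, the degrees $(2,1,1),(1,2,1),(1,1,2)$ are where the structure-constant relation enters. For $\gamma=(2,1,1)$ the contributions are $2\,\mathrm{proj}_{G_{(2,1,1)}}\big([x_{(2,0,0)},x_{(0,1,1)}]_{\E_\Phi}+[x_{(1,0,1)},x_{(1,1,0)}]_{\E_\Phi}\big)$; the two $\cinf$-coefficients coincide, both equal to $e^{-(2p_1+p_2+p_3)\tau}$ (and the $D_i$ do not see $\tau$-dependent factors, nor do the $\sigma$'s differentiate functions), so only the algebraic parts need matching. One computes $[\theta_1\sigma_{23}-\theta_2\sigma_{31}-\theta_3\sigma_{12},\theta_1 D_1]=-2\,\theta_2\theta_3 D_1$, whence the first bracket is $-c_1\,e^{-(2p_1+p_2+p_3)\tau}\theta_2\theta_3 D_1$; and $[\theta_2 D_2,\theta_3 D_3]=\theta_2\theta_3[D_2,D_3]=c_1\,\theta_2\theta_3 D_1$ by $[D_{i+1},D_{i+2}]=c_iD_i$ with $i=1$, whence the second bracket is $+c_1\,e^{-(2p_1+p_2+p_3)\tau}\theta_2\theta_3 D_1$, with $\theta_2\theta_3 D_1\in G_{(2,1,1)}$; the two cancel, and $(1,2,1),(1,1,2)$ follow by cyclic permutation. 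This exhausts all $\gamma$, so $[x,x]_{\A}=0$ and $x\in\MC(\A)$. I expect the only real friction to be bookkeeping — tracking the $\spar$-degrees, the $e^{-p_i\tau}$ factors, and the sign and index conventions of the $\CDer(W)$- and $\Phi^1$-brackets, and confirming that the $\p_0$-derivative terms land in exactly the graded piece where the adjoint action of the connection is poised to cancel them; the degree-$000$ computation, the conceptual core, is short once the scalar-field bracket of \ref{stip:sign} is in hand.
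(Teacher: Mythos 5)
Your argument is a correct spelling-out of the paper's terse ``by direct calculation'': you organize $[x,x]$ by $\spar$-degree, correctly identify the finitely many degrees $\gamma$ at which $G_\gamma\neq 0$ and a binary sum lands, reduce the degree-$000$ check to the Kasner constraint as in Remark \ref{remark:kasner2}, and verify the $(0,1,1)$ and $(2,1,1)$ cancellations explicitly, with the remaining $(2,0,0)$ case working by the same $\p_0$-versus-connection mechanism. One small slip in the explanation: $[\Phi^1,\E^1]\subset\Phi^2$, not $\Phi^3$; this does not affect the argument, since your first reason (constancy of $\bklq$, together with $\sigma_{jk}$ resp.\ $D_i$ annihilating $\theta_0$) already kills the scalar-field brackets, and that reason is essential at $\gamma=(0,1,1)$ where $G_{(0,1,1)}\Phi^2=\cinf\,\theta_0\inphi{\theta_1}\neq 0$ so the degree-counting alternative would not have saved you.
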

\begin{proof}
By direct calculation.
\qed\end{proof}
The elements in Lemma \ref{lemma:she}
are similar to the Kasner elements in Remark \ref{remark:kasner2},
but they are in general not in $\MC(\E_\Phi)$.
The elements in Lemma \ref{lemma:she} with $p_1,p_2,p_3 > 0$
will play the role of the leading term at $\tau \to +\infty$
of solutions as in Figure \ref{fig:anisotropic2}.
\step
We now study MC elements that are not necessarily
homogeneous, in particular $p_1,p_2,p_3$ and the structure coefficients become functions of the spatial variables.
We first construct the degree zero part,
$\A_0 \simeq F_{000}\E_\Phi$.

\begin{lemma}[Naive leading term]\label{lemma:nlt}
The elements in
\[
\MC(\A_0) \cap (\theta_0\p_0 + \A^1_{0,\GB})
\]
are precisely the elements
\begin{equation}\label{eq:naive}
\usol_0 =
  \theta_0 \p_0 +
   \bklq\inphi{\theta_0} + \textstyle\sum_{i=1}^3 p_i (\theta_0 \sigma_0 + \theta_i\sigma_i)
\end{equation}
with $(\bklq,p_1,p_2,p_3) \in C^\infty(M^3)$ and $3 \bklq^2 = 
p_2p_3 + p_3p_1 + p_1p_2$.
Below we continue to abbreviate
$\sv{i} = \spar_i e^{-p_i \tau }$ which are now functions of all variables.
\end{lemma}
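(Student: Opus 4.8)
The plan is to turn this into one explicit Maurer--Cartan bracket computation in $\E_\Phi$. First note that $\A_0=F_{000}\E_\Phi=G_{000}\E_\Phi$ is a sub-gLa of $\E_\Phi$, its bracket being simply the restriction of the $\E_\Phi$-bracket (by Remark~\ref{remark:abbs} no $\spar$-correction arises at the bottom filtration level), so $[u,u]_{\A}=[u,u]_{\E_\Phi}$ for $u\in\A_0^1$. Reading off Table~\ref{table:EPhiG}, the degree-one part of the $\GB$-basis of $G_{000}$ is $\{\theta_0\sigma_0+\theta_i\sigma_i\}_{i=1,2,3}$ together with $\inphi{\theta_0}$; hence a general element of $\theta_0\p_0+\A^1_{0,\GB}$ is precisely a $\usol_0$ of the displayed form with $\bklq,p_1,p_2,p_3\in\cinf$ arbitrary, and it remains to show that $[\usol_0,\usol_0]=0$ forces, and is forced by, $\p_0 p_i=\p_0\bklq=0$ and $3\bklq^2=p_2p_3+p_3p_1+p_1p_2$.

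Writing $\usol_0=\theta_0\p_0+r$ with $r=(\textstyle\sum_i p_i)\theta_0\sigma_0+\sum_i p_i\theta_i\sigma_i+\bklq\inphi{\theta_0}$ the connection-plus-scalar part, I would expand $[\usol_0,\usol_0]=2[\theta_0\p_0,r]+[r,r]$ using \eqref{eq:brak}, the $\E_\Phi$-action on $\Phi$, and the formula for $[\phiel,\phiel]$ from Section~\ref{section:gla}. The first bracket produces only the $\p_0$ of the coefficients: a term $2\sum_i(\p_0 p_i)\,\theta_0\theta_i\sigma_i$ in the $\E$-part and $2(\p_0\bklq)\,\inphi{\theta_0\theta_1\theta_2\theta_3}$ in the $\wedge^4 W\subset\Phi^2$ part (the $\theta_0\sigma_0$ contribution drops because $\theta_0\theta_0=0$). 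The second bracket is algebraic: the connection--connection bracket lands in the span of $\{\theta_0\theta_i\sigma_i,\ \theta_j\theta_k\sigma_{jk}\}\subset\E^2$ with coefficients bilinear in $p$, the scalar--scalar bracket contributes multiples of $\bklq^2$ in the same span, and the connection--scalar cross-terms cancel summand by summand (this is the point of choosing $\theta_0\sigma_0+\theta_i\sigma_i$ as basis elements). This is the computation of Remark~\ref{remark:kasner2} with the spatial frame $\sum_{i,\mu}e_i^\mu\theta_i\p_\mu$ deleted, hence has strictly fewer terms.

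The decisive step is to pass from $\L$ to $\E=\L/\I$: taking for $S$ the real diagonal traceless matrices, the generators~\eqref{eq:i2} of $\I$ impose in $\E^2$
\[
\theta_0\theta_1\sigma_1-\theta_2\theta_3\sigma_{23}\;\equiv\;\theta_0\theta_2\sigma_2-\theta_3\theta_1\sigma_{31}\;\equiv\;\theta_0\theta_3\sigma_3-\theta_1\theta_2\sigma_{12},
\]
so of the six coefficients of $\theta_0\theta_i\sigma_i$ and $\theta_j\theta_k\sigma_{jk}$ in $[\usol_0,\usol_0]$ only four independent linear combinations need to vanish. The $\Phi^2$-part vanishes iff $\p_0\bklq=0$. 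The $\E^2$-part vanishes iff the three conditions $\p_0 p_i+(p_1p_2+p_2p_3+p_3p_1)-3\bklq^2=0$ hold together with one further linear condition; since the first three already give $\p_0 p_i=3\bklq^2-(p_1p_2+p_2p_3+p_3p_1)$ independently of $i$, combining with the fourth forces $p_1p_2+p_2p_3+p_3p_1=3\bklq^2$ and hence $\p_0 p_i=0$ for each $i$. Conversely, when $\p_0\bklq=\p_0 p_i=0$ and $3\bklq^2=p_1p_2+p_2p_3+p_3p_1$ all brackets vanish --- which can also be read off from Lemma~\ref{lemma:she} with $c_i=0$ and $D_i=0$. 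Since $\p_0\bklq=\p_0 p_i=0$ is precisely the statement that $\bklq,p_i\in C^\infty(M^3)$, this is the asserted description~\eqref{eq:naive}.

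I expect the only real work to be bookkeeping: carrying the $\wedge^4 W\subset\Phi^2$ contributions with the density-twist conventions of Section~\ref{section:gla} (e.g. $[\sigma_0,-]$ acting by $2$ there), and performing the reduction modulo $\I$ carefully so as to land on the minimal four-condition system above rather than an overdetermined-looking one. Given the parallel with Lemma~\ref{lemma:she} and Remark~\ref{remark:kasner2}, I would ultimately present this as a direct calculation.
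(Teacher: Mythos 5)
Your proposal is correct and follows the same route as the paper's proof, which is stated as ``direct evaluation of the bracket in $\A_0$'' with one sample bracket displayed. You spell out the full computation — in particular identifying $\A_0 = G_{000}\E_\Phi$ as a sub-gLa of $\E_\Phi$, the cancellation $[\theta_0\sigma_0+\theta_i\sigma_i,\inphi{\theta_0}]=0$, and the reduction modulo $\I^2$ to four independent conditions that force $\p_0 p_i = \p_0 \bklq = 0$ and $3\bklq^2 = p_1p_2+p_2p_3+p_3p_1$ — and these details check out.
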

\begin{proof}
Every element in $\theta_0\p_0 + \A^1_{0,\GB}$ is of the from \eqref{eq:naive}
but with the four coefficient functions taken in $C^\infty(\Omega)$.
We must check that the Maurer Cartan equation
holds iff the four functions are in $C^\infty(M^3)$ and satisfy the quadratic equation.
This follows by direct evaluation of the bracket in $\A_0$, for instance
\[
[\theta_0\p_0,f (\theta_0 \sigma_0 + \theta_i\sigma_i)]
 = (\p_0 f) \theta_0\theta_i \sigma_i
\]
for all $f \in C^\infty(\Omega)$, which is zero iff $f \in C^\infty(M^3)$.
\qed\end{proof}
\begin{remark}\label{rem:framerank0}
The restriction to $\theta_0\p_0 + \A^1_{0,\GB}$ is a gauge choice.
It is not difficult to see
that every element of $\MC(\A_0)$
whose frame has rank one 
(this is the maximal possible in $\A_0$ 
because the first row of Table \ref{table:EPhiG} only allows 
$\theta_0\Der(\cinf)$ frame components)
is locally of the form \eqref{eq:naive}
after a gauge transformation, in particular by choosing coordinates
such that $\p_0$ corresponds to that frame direction.
Below we prove such a statement at the formal power series level,
for the affine gauge subspace defined next.
\end{remark}
\begin{definition}[Affine gauge subspace] \label{def:agsub}
Define $\Aspec \subset \A^1_\GB \subset \A^1$ by the basis in Table \ref{table:aspecial}.
In particular $\rank \Aspec = 28$ versus $\rank \A^1_\GB = 37$. Set
\[
\Aspecaffine \;=\; \theta_0 \p_0 + \Aspec
\]
\end{definition}
Note that $\A^1_\GB$ has corank $11$ which is equal to the rank of the gauge Lie algebra $\A^0$,
so informally, membership in an affine shift of $\A^1_\GB$ ought to be
a gauge condition that can always be realized.
By contrast, membership in $\Aspecaffine$ is more restrictive and seems too much.
We will nevertheless see that there are enough residual gauge transformations
to realize this gauge for anisotropic MC elements.

 \begin{table}
   \footnotesize{
   \[
   \begingroup
   \renewcommand{\arraystretch}{1.1}
   \begin{array}{|c|l|}
     \hline
     \text{factor} & \text{$\Aspec \subset \A^1_{\GB} \subset \A^1$ 
                     basis elements over $\cinf(\Omega)$}\\
     \hline
     1 &   \theta_0 \sigma_0 + \theta_1\sigma_1,\spx
             \theta_0\sigma_0 + \theta_2\sigma_2,\spx
             \theta_0\sigma_0 + \theta_3\sigma_3,\spx
             \inphi{\theta_0} \\ 
     \spar_1^2 & -\theta_1\sigma_{23} + \theta_2\sigma_{31} + \theta_3\sigma_{12} \\ 
     \spar_2^2 & +\theta_1\sigma_{23} - \theta_2\sigma_{31} + \theta_3\sigma_{12} \\ 
     \spar_3^2 & +\theta_1\sigma_{23} + \theta_2\sigma_{31} - \theta_3\sigma_{12} \\ 
     \spar_2\spar_3 &   \theta_1 L_{1,2,3},\spx
             \theta_0\sigma_1+\theta_1\sigma_0,\spx
             \theta_3\sigma_{31},\spx
             \theta_2\sigma_{12},\spx
             \inphi{\theta_1} \\ 
     \spar_3\spar_1 &   \theta_2 L_{1,2,3},\spx
             \theta_0\sigma_2+\theta_2\sigma_0,\spx
             \theta_1\sigma_{12},\spx
             \theta_3\sigma_{23},\spx
             \inphi{\theta_2} \\ 
     \spar_1\spar_2 &   \theta_3 L_{1,2,3},\spx
             \theta_0\sigma_3+\theta_3\sigma_0,\spx
             \theta_2\sigma_{23},\spx
             \theta_1\sigma_{31},\spx
             \inphi{\theta_3} \\
\hline
 \end{array}
 \endgroup
 \]}
 \caption{%
   Basis for $\Aspec$.
   For $L_1,L_2,L_3$ see Section \ref{section:gla}.
 }\label{table:aspecial}
 \end{table}
\begin{definition}[Anisotropic element]\label{def:anisotropic}
We say that $\usol_0$ is anisotropic if and only if the functions $p_1,p_2,p_3$
are pairwise different at every point of $M^3$,
\[
p_1 \neq p_2 \neq p_3 \neq p_1
\]
\end{definition}
\begin{lemma}[Gauging of anisotropic elements via nilpotent automorphisms]\label{lemma:gae}
Set
$\A[[t]]^\times=\A_{>0} + t \A[[t]]$ 
and 
$\Aspec[[t]]^\times
= \A[[t]]^\times \cap \Aspec[[t]]$
and $I_n = t^{n+1} \A[[t]]$.
Given an anisotropic MC element $\usol_0$ as in \eqref{eq:naive}, 
then for all $n\ge0$
the canonical map
\begin{equation}\label{eq:surjaXXX}
\MC(\A[[t]]/I_n)\cap ((\usol_0+\Aspec[[t]]^\times)/I_n)
\to
\frac{\MC(\A[[t]]/I_n)\cap ((\usol_0 +\A[[t]]^\times)/I_n)}{\exp(\A^0[[t]]^\times/I_n)}
\end{equation}
is surjective, where the denominator is a group using Baker-Campbell-Hausdorff.
Furthermore for $n\ge1$, if
$x_0+tx_1+\dots+t^nx_n\in \MC(\A[[t]]/I_n)\cap ((\usol_0 +\A[[t]]^\times)/I_n)$
with $x_0+\dots+t^{n-1}x_{n-1}\in  \usol_0 +\Aspec[[t]]^\times$,
then the surjection \eqref{eq:surjaXXX} is realized by
an element in
 $\exp( t^n\A^0[[t]]/I_n)$.
Analogous if $t$ stands for several formal parameters.
\end{lemma}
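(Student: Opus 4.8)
The statement is a standard obstruction-theory argument for realizing a gauge inductively, order by order in the formal parameter $t$. The plan is to prove the two claims simultaneously by induction on $n$, the key being to understand, at each order, how a residual nilpotent gauge transformation $\exp(t^n\xi)$ with $\xi \in \A^0$ acts on the $t^n$-coefficient of an MC element, and to show that the relevant infinitesimal action surjects onto the complement of $\Aspec$ inside $\A^1_\GB$ (modulo the image of the MC differential). The anisotropy hypothesis enters precisely here: it is what makes a certain linear map between finite-dimensional spaces (built from the bracket with $\usol_0$ and the spatial frame directions) surjective, in analogy with the Jacobian computation mentioned in Remark \ref{remark:anisotropic}.

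First I would set up the induction. For $n=0$ the map \eqref{eq:surjaXXX} is $\MC(\A_{>0}^{(0)}) \cap (\usol_0 + \Aspec_{>0}) \to \{\usol_0\}/\{\mathrm{pt}\}$, which is trivially handled (the target mod $I_0$ collapses; alternatively one uses Lemma \ref{lemma:nlt} to pin down $\usol_0$ itself). Assume the result through order $n-1$: given $x = x_0 + \dots + t^n x_n$ an MC element in $\usol_0 + \A[[t]]^\times$, by the inductive hypothesis we may assume after a gauge transformation in $\exp(\A^0[[t]]^\times/I_{n-1})$, lifted arbitrarily to $\exp(\A^0[[t]]^\times/I_n)$, that $x_0 + \dots + t^{n-1}x_{n-1} \in \usol_0 + \Aspec[[t]]^\times$; this lifted transformation only changes $x_n$, not the lower-order terms, so we are reduced to the situation described in the "furthermore" clause. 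Now I want a further transformation $\exp(t^n\xi)$, $\xi \in \A^0$, with $\A^0$ here meaning the appropriate $F_\alpha$-graded pieces (note $t^n \xi \in t^n\A^0[[t]]/I_n$ is automatically in $\A^0[[t]]^\times$), that moves $x_n$ into $\Aspec$. The action is $x_n \mapsto x_n + [\,\usol_0,\xi\,] + (\text{terms from lower orders that are already fixed})$; more precisely, since $\exp(t^n\xi)$ acts trivially below order $n$, the only change at order $n$ is $x_n \mapsto x_n - d_{\usol_0}\xi$ where $d_{\usol_0} = [\usol_0,-]$, plus possibly a correction involving $[x_{>0\text{-part of }\usol_0 \text{ and lower }x_i}, \xi]$ which I must track carefully but which lands in strictly higher filtration degree and is handled by a secondary induction on filtration degree $\alpha \in \N^3$.

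The core linear-algebra step, which I expect to be the main obstacle, is this: show that for anisotropic $\usol_0$, the composite
\[
\A^0_{>0} \xrightarrow{\;d_{\usol_0}\;} \A^1_{>0} \twoheadrightarrow \A^1_{\GB,>0}/\Aspec_{>0}
\]
is surjective, filtration-degree by filtration-degree. One reads off from Table \ref{table:EPhiG} and Table \ref{table:aspecial} that $\A^1_\GB$ has corank $9$ relative to... — rather, $\Aspec$ has corank $9$ in $\A^1_\GB$ (rank $37$ versus $28$), and $\A^0$ has rank $11$; the kernel of $d_{\usol_0}$ on $\A^0$ contains at least the $2$ "isotropic" directions when $\usol_0$ is generic-anisotropic, matching the count. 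Concretely this reduces to a handful of $2\times 2$ or $3\times 3$ systems in each filtration slot $\alpha = 011, 101, 110, 200, 020, 002$, whose determinants are (up to sign and nonzero scalars) differences $p_i - p_j$ and hence nonvanishing exactly under Definition \ref{def:anisotropic}. I would verify this by a direct but organized bracket computation using the explicit basis, exploiting the $\mathbbm{Z}/3$ symmetry (permuting $\theta_1,\theta_2,\theta_3$ and the components of $\alpha$) to reduce to one representative slot. Once surjectivity is established, choosing $\xi$ to cancel the $\Aspec$-complement of $x_n$ gives the "furthermore" claim (the transformation lies in $\exp(t^n\A^0[[t]]/I_n)$ by construction), and iterating over $n$ gives the surjectivity of \eqref{eq:surjaXXX}. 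The extension to several formal parameters is purely notational: replace the single grading by an $I$-adic filtration for the ideal $I$ generated by all parameters, and run the same degree-by-degree induction, noting that everything in sight is polynomial/formal and the nilpotency of $\A_{>0}$ guarantees convergence of the exponentials.
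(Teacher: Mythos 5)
Your overall picture — inductive gauge fixing order by order, with anisotropy making a small linear system invertible — matches the paper's strategy. But two concrete things go wrong.

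\textbf{The base case is not trivial; it is the main content.} With $I_0 = t\A[[t]]$ one has $\A[[t]]/I_0 = \A$ and $\A^0[[t]]^\times/I_0 = \A^0_{>0}$, so for $n=0$ the map \eqref{eq:surjaXXX} reads
\[
\MC(\A)\cap(\usol_0+\Aspec_{>0}) \;\to\; \frac{\MC(\A)\cap(\usol_0+\A_{>0})}{\exp(\A^0_{>0})}.
\]
The target does not collapse, and Lemma \ref{lemma:nlt} only pins down the filtration-degree-zero piece $\usol_0$; it says nothing about the components in $\A_{>0}$. All the real work of the lemma is already contained in this $n=0$ case, which your plan discards. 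The $n\geq 1$ inductive step in the ``furthermore'' part is, by comparison, routine.

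\textbf{The dimension count blocks your ``core linear-algebra step'' as stated.} You propose surjectivity of $\A^0_{>0}\xrightarrow{d_{\usol_0}}\A^1_{>0}\twoheadrightarrow\A^1_{\GB,>0}/\Aspec_{>0}$ filtration degree by filtration degree. But the complement of $\Aspec$ in $\A^1_\GB$ has rank $9$: it is spanned by two basis elements at each of $\alpha=011,101,110$ (namely $\theta_j\sigma_k+\theta_k\sigma_j$ and $\theta_i\p_0$), plus one at each of $\alpha=211,121,112$ (namely $\theta_j\sigma_k - \theta_k\sigma_j$). Meanwhile $\A^0_{>0}$ has rank only $6$, concentrated at $011,101,110$ (two elements each: $\sigma_i$ and $\sigma_{jk}$); there are \emph{no} $\A^0$ elements at $211,121,112$ at all, as one reads off Table~\ref{table:EPhiG}. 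So the composite cannot be surjective. The paper closes this gap by using the MC equation twice, in two qualitatively different ways: (i) at $011,101,110$, $[\usol_0,x_\alpha]=0$ forces the offending coefficients to lie in $C^\infty(M^3)$, i.e.\ to be $\tau$-independent, and these are then killed by residual gauge in $\ker K^0_\alpha$ using anisotropy; (ii) at $211,121,112$, there is no gauge freedom, and instead the MC equations are shown by a direct calculation to \emph{force} $x_{211}=x_{121}=x_{112}=0$. Your parenthetical ``modulo the image of the MC differential'' gestures at this, but the explicit step you write down does not implement it and cannot, for the rank reasons above. Relatedly, the paper's argument is genuinely a two-stage reduction (first into $\A^1_\GB$ by a universal surjectivity argument for $K^i_\alpha\sim\one\p_0+\cdots$, then into $\Aspec$ using only residual gauge plus MC), and collapsing these into a single surjectivity claim, as you do, is what produces the mismatch.
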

\begin{proof}
We first prove the simplified statement with
$\Aspec$ replaced by $\A^1_{\GB}$.
This will not require anisotropy or membership in $\MC$.
For every $\alpha \in \N^3$ define a differential
\[
    d_\alpha \in \End^1(\A_\alpha)
\]
by $d_\alpha = [\usol_0,-]$. 
Define the composition (this is adapted from \cite{rtgla})
\[
K_\alpha^i: \A_{\alpha,\GB}^i \to \A_\alpha^i \xrightarrow{\;d_\alpha\;}
            \A_\alpha^{i+1} \to \A_\alpha^{i+1}/\A_{\alpha,\GB}^{i+1}
\]
The space on the right is $\simeq \A_{\alpha,\GB}^i$
using $\A_{\alpha}^{i+1} = \A_{\alpha,\GB}^{i+1} \oplus \theta_0 \A_{\alpha,\GB}^i$.
Then $K_\alpha^i$ is of the form $\mathbbm{1} \p_0$ plus terms without derivatives,
and therefore it is surjective.
This implies the simplified statement by standard arguments.

We now prove the lemma for $n=0$,
imitating
the section on nilpotent automorphisms in \cite{rtfil}.
(The case $n>0$ is analogous.)

Given $x \in \MC(\A) \cap (\usol_0 + \A_{>0})$.
By the simplified statement, we may assume $x \in \theta_0\p_0 + \A^1_\GB$. 
Decompose by degree, $x = \bigoplus_{\alpha} x_{\alpha}$.
So $x_{000} = \usol_0$ and $x_{\alpha} \in \A^1_{\alpha,\GB}$ when $\alpha \neq 0$. 
Consider $x_{011}$.
This may contain two terms
\begin{equation}\label{eq:xxx}
f\spar_2\spar_3(\theta_2 \sigma_3 + \theta_3\sigma_2) + g\sv{2}\sv{3}\theta_1 \p_0
\;\in\;\A^1_{011,\GB}
\end{equation}
for some coefficients $f,g \in C^\infty(\Omega)$.
These are the terms that fail to be in $\Aspec$.
However $x \in \MC(\A)$ implies $[\usol_0,x_{011}] = 0$ and this implies,
by evaluating the bracket, that $f,g \in C^\infty(M^3)$\footnote{%
Actually a weaker observation suffices for the following argument:
$f$ and $g$ satisfy homogeneous equations and are zero for all $\tau$ iff
they are zero at $\tau = 0$.}.
We want to remove such term by exploiting elements in the kernel of $K_{011}^0$,
which correspond to residual gauge freedom.
Concretely,  
\[
F \spar_2\spar_3 \sigma_{23} +
G \sv{2}\sv{3} \sigma_1
\;\in\;
\A^0_{011,\GB}
\]
is in the kernel of $K_{011}^0$ for all $F,G \in C^\infty(M^3)$.
By anisotropy one can choose $F,G$ such that $d_{011} = [\usol_0,-]$ applied to this element
yields the negative of \eqref{eq:xxx}\footnote{%
Actually it suffices
to do this at $\tau = 0$,
since then it will automatically be true for all $\tau$.}\textsuperscript{,}\footnote{%
That the anisotropy is needed can be seen from the following brackets in $\A$:
\begin{align*}
[\spar_j\spar_k \sigma_{jk},
 p_j(\theta_0\sigma_0 + \theta_j\sigma_j)]
 & = -\spar_j\spar_k f_ip_j (\theta_j\sigma_k + \theta_k\sigma_j)\\
[\spar_j\spar_k \sigma_{jk},
 p_k(\theta_0\sigma_0 + \theta_k\sigma_k)]
 & = +\spar_j\spar_k f_ip_k (\theta_j\sigma_k + \theta_k\sigma_j)
\end{align*}}.
By an analogous argument for $x_{101}$ and $x_{110}$,
there is an automorphism in $\exp(\A^0_{>0})$ that applied to $x$
yields a new element $x$ with
$x_{011},\, x_{101},\, x_{110} \in \Aspec$.
The terms $x_{200},\, x_{020},\, x_{002} \in \A^1_\GB$
are automatically in $\Aspec$.
Finally, $x_{211} = x_{121} = x_{112} = 0$
follows by direct calculation using $x \in \MC(\A)$\footnote{
See the lemma about synchronous frames in
\cite{rtfil}.
Considering
the coefficients of $\spar_j\spar_k\spar_i^2 \theta_j\theta_k \p_0$.
}.
So \eqref{eq:surjaXXX} is surjective for $n=0$.
\qed\end{proof}

\begin{definition} \label{def:mcsnew}
\begin{align*}
\mcs \;& =\;
  \MC(\A)
     \;\cap\; \Aspecaffine \;\cap\; \textnormal{nondegenerate} \;\cap\; \textnormal{exact}
\end{align*}
Here we use the following terminology for elements of $\A^1$:
\begin{itemize}
\item Nondegenerate:
Direct analog of nondegeneracy in Definition \ref{def:nondeg}.
Equivalently,
nondegenerate as an element of $\E^1_\Phi$
by setting $\spar_1 = \spar_2 = \spar_3=1$.
\item Exact:
The direct analogs of the one-forms in Definition \ref{def:aof1} and \ref{def:aof2}
are exact;
they are closed so this means that their integrals over the $1$-cycles on $M^3$
are zero.
\end{itemize}
\end{definition}
\begin{definition}[Structure functions]\label{def:scoeff}
If $D_1,D_2,D_3 \in \Gamma(TM^3)$ constitute a frame,
their structure functions are the $c_{ij}^k \in C^\infty(M^3)$
defined by
\[
[D_i,D_j] = \textstyle\sum_k c_{ij}^k D_k
\]
Here the bracket is the ordinary commutator of vector fields (derivations).
\end{definition}
\begin{definition}[Constraint equation]
For a tuple
\begin{equation}\label{eq:ttuu}
   (D_1,D_2,D_3,p_1,p_2,p_3,\bklq,\scl,\phiconstraint)
\end{equation}
with $D_1,D_2,D_3 \in \Gamma(TM^3)$ that constitute a frame,
and $p_1,p_2,p_3,\bklq,\scl,\phiconstraint \in \cinf(M^3)$,
the constraint equations are, by definition, the equations
\begin{equation}\label{eq:consmca}
\begin{aligned}
0 & = 3 \bklq^2 - p_2p_3 - p_3p_1 - p_1p_2\\
0 & = -\tfrac12 D_i(p_j+p_k) 
- \tfrac12 c_{ij}^j (p_i-p_j)
+ \tfrac12 c_{ki}^k (p_i-p_k)
+ p_i D_i(\scl)
+ 3\bklq D_i(\phiconstraint)
\end{aligned}
\end{equation}
where $(i,j,k) \in C$.
Beware that we do not use the summation convention.
\end{definition}
See Appendix \ref{app:constraints}
for the constraint equations, in the case 
$M^3 =  \mathbbm{T}^3$.\footnote{%
The first equation in \eqref{eq:consmca} is used
to eliminate $\bklq$.
The remaining three equations are
analyzed perturbatively 
near anisotropic spatially homogeneous elements
and give an elliptic system.
Appealing to an implicit function theorem the solution
space is a good intersection of a graph with three quadrics.}
\begin{lemma}[All $\mcs$ elements]\label{lemma:mcael}
For every tuple \eqref{eq:ttuu},
the element
\begin{equation}\label{eq:mcysw}
\begin{aligned}
  \theta_0 \p_0 \;+\;
   \bklq\inphi{\theta_0} \;+\;
\textstyle\sum_{(i,j,k) \in C}
(\;\;
  +\,& p_i (\theta_0 \sigma_0 + \theta_i\sigma_i)\\
  +\,& \tfrac12 \sv{i}^2 c_{jk}^i (\theta_i \sigma_{jk} - \theta_j \sigma_{ki} - \theta_k \sigma_{ij})\\
  +\,& \sv{j}\sv{k} \theta_i D_i\\
  -\,& \sv{j}\sv{k} (D_i(\scl) - \tau D_i(p_1+p_2+p_3))(\theta_0\sigma_i + \theta_i\sigma_0)\\
  +\,& \sv{j}\sv{k} (D_i(\scl) + c_{ki}^k
                     -\tau D_i(p_k)) \theta_k \sigma_{ki}\\
  -\,& \sv{j}\sv{k} ( D_i(\scl) - c_{ij}^j
                     -\tau D_i(p_j)) \theta_j \sigma_{ij}\\
  +\,& \sv{j}\sv{k} (D_i(\phiconstraint) + \tau D_i(\bklq)) \inphi{\theta_i}
  )
\end{aligned}
\end{equation}
is in $\mcs$ if and only if the constraints \eqref{eq:consmca} hold.
Here $\sv{i} = \spar_i e^{-p_i \tau }$,
and see \eqref{eq:cycl} for the definition of $C$.
Every element $\uzero \in \mcs$ is of this form
for a unique tuple \eqref{eq:ttuu} up to
additive constants for $\scl$ and $\phiconstraint$.
\end{lemma}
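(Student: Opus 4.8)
The plan is to verify the claimed formula by direct computation of the bracket in $\A$, and then to prove the converse (every $\mcs$ element has this form) by a normal-form argument. First I would observe that the element \eqref{eq:mcysw} lies in $\Aspecaffine = \theta_0\p_0 + \Aspec$ by inspection: each term is, up to the $\tau$-dependent scalar coefficients $\sv{j}\sv{k}$ and $\sv{i}^2$, one of the basis elements listed in Table \ref{table:aspecial}. (The coefficient functions such as $D_i(\scl) - \tau D_i(p_1+p_2+p_3)$ are in $C^\infty(\Omega)$, which is all that is required for membership in the $\cinf(\Omega)$-span defining $\Aspec$.) Then I would compute $[x,x]$ in $\A$ for $x$ equal to \eqref{eq:mcysw}, organizing the computation by $\N^3$-degree: the bracket lands in $\bigoplus_\alpha \A_\alpha$, and one checks degree by degree that all components vanish iff \eqref{eq:consmca} holds. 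The degree-$000$ part reproduces the first constraint $3\bklq^2 = p_2p_3+p_3p_1+p_1p_2$ exactly as in Lemma \ref{lemma:nlt}. The mixed degrees $011$, $101$, $110$ produce the three first-order constraint equations; here the crucial cancellations are engineered by the specific coefficients in the last four lines of \eqref{eq:mcysw} (the factors of $\tau D_i(p_\bullet)$ are precisely what is needed to absorb the $\p_0$ acting on $\sv{i} = \spar_i e^{-p_i\tau}$, since $\p_0 \sv{i} = -p_i \sv{i}$). The remaining degrees, in particular $200,020,002$ and $211,121,112$ and $222$, must be checked to vanish identically given the constraints — the vanishing of the $211,121,112$ components is the ``synchronous frame'' phenomenon already used in the proof of Lemma \ref{lemma:gae}.

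For nondegeneracy and exactness: the frame part of \eqref{eq:mcysw} is $\theta_0\p_0 + \sum_i \sv{j}\sv{k}\,\theta_i D_i$, which (setting $\spar_1=\spar_2=\spar_3=1$) has frame matrix with nonzero determinant precisely because $D_1,D_2,D_3$ constitute a frame and the exponentials are nowhere zero; this gives nondegeneracy in the sense of Definition \ref{def:nondeg}. For exactness one computes the associated one-forms of Definitions \ref{def:aof1} and \ref{def:aof2}: the $\sigma_0$-component of \eqref{eq:mcysw} is $(p_1+p_2+p_3)\theta_0\sigma_0 + \dots$, and the relevant one-form turns out to be $d$ of a function on $\Omega$ (a combination of $\tau$ and $\scl$), hence exact; similarly $\beta$ comes from the $\inphi{\theta_0}$ and $\inphi{\theta_i}$ components and is $d$ of (a multiple of) $\bklq\tau + \phiconstraint$. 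Since these are globally defined functions on $\Omega = [0,\infty)\times M^3$, the periods over $1$-cycles vanish, so the element is exact in the sense of Definition \ref{def:mcsnew}.

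For the converse, the strategy is to take an arbitrary $\uzero \in \mcs$ and use its degree-$000$ part together with the structure of the MC equation to pin down the tuple. By Lemma \ref{lemma:nlt}, the degree-$000$ component of any element of $\MC(\A) \cap \Aspecaffine$ is of the form \eqref{eq:naive}, which extracts $p_1,p_2,p_3,\bklq$ as functions on $M^3$ (the first constraint is automatic), and determines the vector fields $D_i$ from the frame part in degrees $011,101,110$ via the coefficients $\sv{j}\sv{k}\theta_i D_i$ — here one uses that being in $\MC(\A)$ forces the relevant coefficient functions to be $\tau$-independent, as in the footnoted remark in Lemma \ref{lemma:gae}. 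The structure functions $c_{jk}^i$ are then read off from the $\theta_i\sigma_{jk}$ terms in degrees $200,020,002$ (matching Definition \ref{def:scoeff}), and the remaining coefficients in \eqref{eq:mcysw} involving $\scl$ and $\phiconstraint$ are determined by the $\sigma_i$, $\sigma_{ki}$, $\sigma_{ij}$ and $\inphi{\theta_i}$ components of $\uzero$ in degrees $011,101,110$; the MC equation forces these to have exactly the $\tau$-affine form shown, with $\scl$ and $\phiconstraint$ being antiderivatives (in the $D_i$ directions) that are determined up to additive constants — which is the stated ambiguity. Finally one notes that the exactness condition is what forces $\scl$ and $\phiconstraint$ to exist as honest functions on $M^3$ (not just locally), so no further freedom appears. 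The main obstacle I expect is the bookkeeping in the degree-by-degree bracket computation for the converse direction: one must be careful that the general $\mcs$ element cannot contain the ``forbidden'' terms \eqref{eq:xxx} (the analogues of $f\spar_2\spar_3(\theta_2\sigma_3+\theta_3\sigma_2)$ and $g\sv{2}\sv{3}\theta_1\p_0$), and that membership in $\Aspecaffine$ rules these out by definition, so the argument reduces cleanly — but verifying that the $\tau$-dependence is forced to be affine (rather than, say, involving higher powers of $\tau$ or other functions) requires tracking how $\p_0$ interacts with the $\sv{i}$ factors through all the relevant brackets, which is the most delicate part of the direct calculation.
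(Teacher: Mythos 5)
Your proposal is correct and follows essentially the same route as the paper's proof: a degree-by-degree direct calculation of the Maurer--Cartan equation, exploiting that $\p_0$ acts as $-p_i$ on the $\sv{i}$ factors so the $\theta_0$-components of $[x,x]=0$ reduce to ODEs in $\tau$ that force the $\tau$-affine form of the coefficients, with the remaining components yielding \eqref{eq:consmca} and the closedness condition that makes $\scl,\phiconstraint$ well-defined up to constants once exactness is imposed. The paper phrases this as a single ``general ansatz in $\Aspecaffine$, then solve'' argument rather than your ``verify the formula, then prove normal form'' split, but the computation is the same; the only minor slip is attributing the vanishing of the degree-$211$ part of $[x,x]$ to the synchronous-frame phenomenon, which in the paper refers instead to the vanishing of the degree-$211$ part of $x$ itself.
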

\begin{proof}
The degree zero part is by Lemma \ref{lemma:nlt}.
Use Table \ref{table:aspecial} to
make a general ansatz for the remaining degrees.
We only write out some of the remaining terms:
\[
\uzero =
\usol_0 
+ \ldots + \textstyle\sum_{i=1}^3 \spar_2\spar_3 c_i\theta_1  L_i
+ \ldots + \spar_2\spar_3 d_1\theta_2 \sigma_{12} + \ldots
\]
with $\ldots,c_1,c_2,c_3,\ldots, d_1, \ldots \in  \cinf(\Omega)$.
Evaluate
the bracket $\A^1 \times \A^1 \to \A^2$
using \eqref{eq:brak};
the ideal $\I^2$ in \eqref{eq:i2};
the filtration in Definition \ref{def:filrees}.
 For example,
\begin{align*}
[\theta_0\p_0, \spar_2\spar_3 c_i \theta_1 L_i]
& = \spar_2\spar_3 (\p_0 c_i) \theta_0\theta_1 L_i\\
[p_2 (\theta_2\sigma_2 + \theta_0\sigma_0),
 \spar_2\spar_3 c_i \theta_1 L_i] & = 
\spar_2\spar_3 p_2 c_i \theta_0\theta_1 L_i +
\spar_2\spar_3 c_i  L_i(p_2) (\theta_1\theta_2\sigma_2 + \theta_1\theta_0\sigma_0)
\end{align*}
Actually other brackets contribute to  $\spar_2\spar_3\theta_0\theta_1 L_i$
as well, and one has to write out all.
\begin{itemize}
\item Consider
$[\uzero,\uzero] = 0 \bmod \A^2_\GB$.
The vanishing of the coefficients of $\spar_2\spar_3\theta_0\theta_1 L_i$
require $\p_0 c_i + (p_2 + p_3) c_i = 0$,
so $\sum_i c_i L_i = e^{-(p_2+p_3)\tau} D_1$ for some $D_1 \in \Gamma(TM^3)$.
The vanishing of the coefficient of
$\spar_2\spar_3\theta_0\theta_2\sigma_{12} =
-\spar_2\spar_3 \theta_1\theta_2 \sigma_2 \bmod \A^2_\GB$
requires
$\p_0 d_1 + (p_2+p_3)d_1 = D_1(p_2)$
so $d_1 = \smash{e^{-(p_2+p_3)\tau}} (e_1 + \tau D_1(p_2))$
for some function $e_1 \in \cinf(M^3)$.
Continuing,
all coefficients in $\uzero$ are expressed in terms of
smooth functions on $M^3$.
Nondegeneracy requires that $D_1,D_2,D_3$ be a frame.
\item The remaining equations in $[\uzero,\uzero] = 0$
imply that
$\smash{(e_1-c_{12}^2,e_2-c_{23}^3,e_3-c_{31}^1)}$
are the components, relative to $D_1,D_2,D_3$,
of a closed one-form,
so $e_i = -D_i(\scl) + \smash{c_{ij}^j}$
for some function $\scl$ on the universal cover of $M^3$,
unique up to a constant.
Exactness in $\mcs$ requires
$\scl \in C^\infty(M^3)$.
Similarly, \eqref{eq:consmca} are forced.
\end{itemize}
All these steps are forced.
\qed\end{proof}

\section{$\MC(\P)$ elements, formal solutions} \label{sec:mcp}

The setting is that of Section \ref{sec:mca}.
Denote by $\econst \subset \E_\Phi$ 
the subspace of elements that are linear combinations
of the basis elements of $\E_\Phi$
(the entries in Table \ref{table:EPhiG}
and $\theta_0$ times these entries)
but with coefficients restricted to $\cinf(M^3)$.
It is not a subalgebra.
Accordingly $\epoly \subset \E_\Phi$
are the polynomials in $\tau$ with coefficients in $\econst$.
\begin{lemma}[A subalgebra of $\P$] \label{lemma:pp}
Given $p_1,p_2,p_3 \in \cinf(M^3)$ abbreviate $\sv{i} = \spar_i e^{-p_i \tau}$.
Then the following is a subalgebra of $\P$ over $\R$ (but not over $\R[[\spar]]$):
\begin{align*}
\P_p & = \{ \textstyle\sum_\alpha x_\alpha \sv{}^\alpha
\mid x_\alpha \in F_\alpha \E_\Phi \cap \epoly
\}
\end{align*}
\end{lemma}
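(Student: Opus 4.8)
The plan is to verify the three properties making $\P_p$ a sub-graded Lie algebra of $\P$ over $\R$: that $\P_p \subset \P$, that $\P_p$ is an $\R$-subspace, and that $\P_p$ is closed under the bracket; only the last is substantial. Writing $p\cdot\alpha = p_1\alpha_1 + p_2\alpha_2 + p_3\alpha_3 \in \cinf(M^3)$, we have $\sv{}^\alpha = \spar^\alpha e^{-(p\cdot\alpha)\tau}$, so a typical element $\sum_\alpha x_\alpha \sv{}^\alpha$ of $\P_p$ equals $\sum_\alpha \bigl(x_\alpha e^{-(p\cdot\alpha)\tau}\bigr)\spar^\alpha$; since $e^{-(p\cdot\alpha)\tau} \in \cinf(\Omega)$, each $F_\alpha \E_\Phi$ is a $\cinf(\Omega)$-module (a sum of the free modules $G_\beta\E_\Phi$ of Definition~\ref{def:filrees}), and $x_\alpha \in F_\alpha\E_\Phi$, this lies in $\P$. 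That $\P_p$ is an $\R$-subspace (indeed a $\cinf(M^3)$-submodule) is immediate, $\epoly$ being a $\cinf(M^3)$-module.

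For closure under the bracket, recall from Lemma~\ref{lemma:fil} that the bracket on $\P$ is the $\R[[\spar]]$-bilinear bracket it inherits from $\E_\Phi[[\spar]]$; hence for $x \in F_\alpha\E_\Phi \cap \epoly$ and $y \in F_\beta\E_\Phi \cap \epoly$ one has $[x\sv{}^\alpha, y\sv{}^\beta] = \spar^{\alpha+\beta}\,[x\,e^{-(p\cdot\alpha)\tau},\, y\,e^{-(p\cdot\beta)\tau}]$, so the $\spar$-degrees add exactly and, in the doubly-infinite sum computing the bracket of two general elements of $\P_p$, only finitely many pairs $(\alpha,\beta)$ contribute to each $\spar^\gamma$. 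So it suffices to show that
\[
z \;:=\; e^{(p\cdot(\alpha+\beta))\tau}\,[\,x\,e^{-(p\cdot\alpha)\tau},\;y\,e^{-(p\cdot\beta)\tau}\,] \;\in\; \epoly
\]
for all such $x,y$; that $z \in F_{\alpha+\beta}\E_\Phi$ is automatic from Lemma~\ref{lemma:fil}, and then $[x\sv{}^\alpha,y\sv{}^\beta] = z\,\sv{}^{\alpha+\beta}$ with $z \in F_{\alpha+\beta}\E_\Phi \cap \epoly$, as needed.

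To prove $z \in \epoly$, insert the bracket formula $[u,v] = A^\mu(u)\p_\mu v - (-1)^{uv}A^\mu(v)\p_\mu u + B(u,v)$ (Definition~\ref{def:gwfobag}), in which, by the constant-coefficient clause of Definition~\ref{def:axg}, $A^\mu$ and $B$ are $\cinf$-(bi)linear with $\R$-valued structure constants on the distinguished basis, and the derivations $\p_\mu$ are $\p_0$ (the $[0,\infty)$-derivative) together with $L_1, L_2, L_3 \in \Gamma(TM^3)$ (Section~\ref{section:gla}). The $B$-term is clear: $\cinf$-bilinearity gives $e^{(p\cdot(\alpha+\beta))\tau} B(x\,e^{-(p\cdot\alpha)\tau}, y\,e^{-(p\cdot\beta)\tau}) = B(x,y) \in \epoly$, since $\epoly$ is closed under multiplication and $B$ has constant coefficients. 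For an $A^\mu\p_\mu$-term, pull the scalar $e^{-(p\cdot\alpha)\tau}$ out of $A^\mu$ and Leibniz-differentiate $y\,e^{-(p\cdot\beta)\tau}$: for $\mu = 0$ one gets $\p_0(y\,e^{-(p\cdot\beta)\tau}) = (\p_0 y - (p\cdot\beta)\,y)\,e^{-(p\cdot\beta)\tau}$ with $\p_0 y - (p\cdot\beta)\,y \in \epoly$ (time differentiation lowers the $\tau$-degree), and for $\mu = i$ one gets $L_i(y\,e^{-(p\cdot\beta)\tau}) = (L_i y - \tau\,L_i(p\cdot\beta)\,y)\,e^{-(p\cdot\beta)\tau}$ with $L_i y - \tau\,L_i(p\cdot\beta)\,y \in \epoly$ (as $L_i$ maps $\cinf(M^3)$ into itself and raises the $\tau$-degree by at most one). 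In every case $\p_\mu(y\,e^{-(p\cdot\beta)\tau}) = e^{-(p\cdot\beta)\tau}\cdot(\text{an element of }\epoly)$, so $e^{(p\cdot(\alpha+\beta))\tau}A^\mu(x\,e^{-(p\cdot\alpha)\tau})\p_\mu(y\,e^{-(p\cdot\beta)\tau}) = A^\mu(x)\cdot(\text{an element of }\epoly) \in \epoly$, again by the constant-coefficient property; the mirror term $A^\mu(v)\p_\mu u$ is the same with $(x,\alpha)$ and $(y,\beta)$ interchanged. Hence $z \in \epoly$, and $\P_p$ is a subalgebra of $\P$.

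The only delicate point — the ``main obstacle'', modest as it is — is the interaction of the anchor's spatial derivations with the exponential weights: one has to see that $L_i$ hitting $e^{-(p\cdot\alpha)\tau}$ produces $-\tau\,L_i(p\cdot\alpha)\,e^{-(p\cdot\alpha)\tau}$, i.e.\ raises the $\tau$-degree by exactly one, while $\p_0$ produces only a purely spatial prefactor, so polynomiality in $\tau$ is preserved and the $\tau$-degrees do not run away. Everything else is the bookkeeping forced by Definition~\ref{def:axg}. In the write-up I would make this fully explicit by first reducing, via $\R$-bilinearity, to the spanning elements $f\,\tau^k\,b\,\sv{}^\alpha$ of $\P_p$ with $f \in \cinf(M^3)$, $k \in \N$, and $b$ a distinguished basis element of $F_\alpha\E_\Phi$; the resulting computation is direct.
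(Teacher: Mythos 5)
Your proof is correct and follows exactly the same route as the paper, which states the whole thing in one line — ``This is immediate since $[\sv{}^\alpha \epoly, \sv{}^\beta \epoly] \subset \sv{}^{\alpha + \beta} \epoly$'' — leaving the reader to fill in the Leibniz computation for $\p_0$ and $L_i$ hitting the exponential weights. You have simply unpacked that inclusion explicitly; the one observation the paper leaves implicit and you correctly surface is that $L_i e^{-(p\cdot\beta)\tau}$ produces a factor $-\tau L_i(p\cdot\beta)$, which raises the $\tau$-degree by one but stays polynomial.
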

\begin{proof}
This is immediate since
$[\sv{}^\alpha \epoly, \sv{}^\beta \epoly] \subset \sv{}^{\alpha + \beta} \epoly$.
\qed\end{proof}

In this subalgebra, every power of $\spar_i$ comes
with a power of $e^{-p_i\tau}$ which decays as $\tau \to \infty$ if $p_i > 0$;
only $\alpha = 0$ comes without such exponential damping.
\begin{definition}\label{def:mcsp}
Let
\[
\mcsp \subset \mcs
\]
be the elements where the naive leading term satisfies $p_1,p_2,p_3 > 0$.
\end{definition}
\begin{theorem}[Existence and uniqueness] \label{thm:ex}
Let $\uzero \in \mcsp$ as in Lemma \ref{lemma:mcael}
and $\P_p \subset \P$ the corresponding subalgebra.
Then there exists a
\[
\usol = \textstyle\sum_\alpha \usol_\alpha \sv{}^\alpha
                          \in \MC(\P_p)
\]
with $\usol \bmod \sv{} \P_p = \uzero$;
this means $\uzero$ is the leading term.
The element $\usol$
is unique up to gauge transformations in $\exp(\sv{} \P_p^0)$.
The assignment $\uzero \mapsto \usol$ can be realized as a map
\begin{equation}\label{eq:exkhke}
S_{\textnormal{formal}}\;:\;\mcsp \to \MC(\P)
\end{equation}
such that
\begin{enumerate}[({g}1),leftmargin=8mm]
\item \label{concg2}
$\usol_\alpha$ with $\alpha \neq 0$
are linear combinations of only the $F_\alpha \E^1_{\Phi,\GB}$ elements.
These are the entries in Table \ref{table:EPhiG}
but not $\theta_0$ times these entries.
\item \label{concg1}
Each coefficient of the linear combination in \ref{concg2}
is a polynomial in $\tau$
with coefficients 
that are polynomials
in (i) the
entries of the tuple \eqref{eq:ttuu}
and their derivatives of orders $\leq |\alpha|$;
(ii) the structure coefficients (Definition \ref{def:scoeff})
and their derivatives of orders $\leq |\alpha|$;
 (iii) $(n_1p_1 + n_2p_2 + n_3p_3)^{-1}$ with $n \in \N^3-0$.
\item \label{concg3} $\usol_\alpha = 0$ when $|\alpha|$ is odd.
\end{enumerate}
\end{theorem}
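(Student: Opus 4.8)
The plan is to construct $\usol$ by induction on the filtration degree $|\alpha| \in \N$, solving the Maurer–Cartan equation order by order in $\P_p$ and showing at each stage that the obstruction vanishes after a suitable gauge adjustment. First I would set $\usol_{000} = \uzero$, which is already in $\MC(\A_0)$ by Lemma \ref{lemma:mcael}, restricted to the $\spar_i = 0$ part. The key structural observation, parallel to Remark \ref{remark:abbs}, is that for the subalgebra $\P_p$ the bracket decomposes according to filtration degree with the ``correction'' factors $\sv{}^{\alpha+\beta-\gamma}$, so if $\usol^{(N)} = \sum_{|\alpha| \leq N} \usol_\alpha \sv{}^\alpha$ satisfies $[\usol^{(N)}, \usol^{(N)}] = 0 \bmod (\text{filtration degree} > N)$, then writing $\usol^{(N+1)} = \usol^{(N)} + \sum_{|\alpha| = N+1} \usol_\alpha \sv{}^\alpha$ the equation at degree $N+1$ becomes $d_\alpha \usol_\alpha = -\tfrac12 [\usol^{(N)}, \usol^{(N)}]_\alpha$ for each $\alpha$ with $|\alpha| = N+1$, where $d_\alpha = [\usol_0, -]$ is the differential on $\A_\alpha$ from the proof of Lemma \ref{lemma:gae}, \emph{augmented} by the lower-degree pieces of $\usol^{(N)}$ acting by bracket. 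The right-hand side lies in $F_\alpha \E_\Phi \cap \epoly$ by the inductive hypothesis (since each $\usol_\beta$ with $\beta < \alpha$ is polynomial in $\tau$ with coefficients in $\econst$), so the whole scheme stays inside $\P_p$.

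The heart of the argument is solving $d_\alpha \usol_\alpha = (\text{known})$ inside the polynomial-in-$\tau$ subspace $F_\alpha\E_\Phi \cap \epoly$. Here I would use the same homotopy-operator idea as in Lemma \ref{lemma:gae}: the map $K_\alpha^i : \A^i_{\alpha,\GB} \to \A^i_{\alpha,\GB}$ induced by $d_\alpha$ is of the form $\mathbbm{1}\p_0 + (\text{zeroth order})$, where the zeroth-order part is multiplication by a matrix whose eigenvalues are integer linear combinations $n_1 p_1 + n_2 p_2 + n_3 p_3$ with $n \in \N^3$ — and because $p_1, p_2, p_3 > 0$ (this is exactly where $\uzero \in \mcsp$, not merely $\mcs$, is used) the only zero eigenvalue occurs at $\alpha = 0$, i.e.\ $n = 0$. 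Therefore for $\alpha \neq 0$ the operator $\p_0 + (n \cdot p)$ is invertible on polynomials in $\tau$: one inverts it by the explicit formula $f \mapsto \int_0^\tau e^{-(n\cdot p)(\tau - s)} f(s)\, ds$, which maps polynomials to polynomials and introduces exactly the denominators $(n_1 p_1 + n_2 p_2 + n_3 p_3)^{-1}$ — this produces \ref{concg1}(iii), while the derivatives of the tuple \eqref{eq:ttuu} and the structure functions enter through the bracket, giving \ref{concg1}(i)–(ii). The cokernel of $K^1_\alpha$ is dual to the kernel of $K^0_\alpha$ (residual gauge), and the Jacobi identity $d_\alpha [\usol^{(N)},\usol^{(N)}]_\alpha = 0$ — together with the fact that the part of the right-hand side \emph{not} in the image of $K^1_\alpha$ can be absorbed by a gauge transformation in $\exp(\sv{}\P^0_p)$, exactly as in Lemma \ref{lemma:gae} and the ``synchronous frame'' lemma of \cite{rtfil} — forces solvability. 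Conclusions \ref{concg2} (no $\theta_0$-components for $\alpha \neq 0$) and \ref{concg3} ($\usol_\alpha = 0$ for $|\alpha|$ odd) I would verify as I go: \ref{concg2} because the gauge choice $\Aspec$ and the surjectivity of $K^i_\alpha$ let us always pick the representative with no $\theta_0$-component, and \ref{concg3} because the $\N^3$-grading in Table \ref{table:EPhiG} has $G_\alpha = 0$ unless $|\alpha|$ is even (the nonzero rows are $000, 200, 020, 002, 011, 101, 110, 211, 121, 112, 222$, all of even total degree), so the inductive source terms vanish identically at odd $|\alpha|$.

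Uniqueness up to $\exp(\sv{}\P^0_p)$ follows by a standard argument: given two solutions $\usol, \usol'$ with the same leading term, one constructs the conjugating gauge transformation degree by degree, at each stage using the kernel of $K^0_\alpha$ to kill the difference $\usol_\alpha - \usol'_\alpha$ modulo the image of $d_\alpha$, and using that the MC equation forces this difference to be $d_\alpha$-exact. Finally, to produce the \emph{map} $S_{\textnormal{formal}} : \mcsp \to \MC(\P)$ one makes all the choices above canonical — fix once and for all a right inverse of each $K^i_\alpha$ (the explicit integral operator) and a complement to each $\image K^0_\alpha$ — so that the construction is a well-defined function of the tuple \eqref{eq:ttuu}; smoothness in the tuple is then manifest since every operation (differentiation in the spatial variables, the integral operator, division by $n\cdot p$) depends smoothly, indeed polynomially or rationally, on the data. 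The main obstacle I anticipate is \textbf{the bookkeeping of the obstruction at each degree}: one must check that the piece of $[\usol^{(N)},\usol^{(N)}]_\alpha$ lying outside $\image K^1_\alpha$ is genuinely of the form $d_\alpha(\text{gauge element})$ — i.e.\ that the naive leading-term constraint \eqref{eq:consmca} together with the Jacobi identity is \emph{exactly} what is needed — and this is the step where one cannot avoid invoking the detailed structure of Table \ref{table:EPhiG} and the ideal $\I^2$, much as in the $n=0$ case of Lemma \ref{lemma:gae}; the positivity $p_i > 0$ is what makes this obstruction analysis purely algebraic (no small-denominator or zero-denominator pathology, unlike the vacuum case flagged in the introduction).
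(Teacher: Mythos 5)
Your skeleton matches the paper's: solve the Maurer--Cartan equation in $\P_p$ order by order in filtration degree, inverting $d = [\usol_0, -]$ on the polynomial-in-$\tau$ subspace and using $p_i > 0$ to keep denominators $n_1 p_1 + n_2 p_2 + n_3 p_3$ nonzero. The parity argument for \ref{concg3} is also correct. But the ``main obstacle'' you flag at the end --- that a piece of $[\usol^{(N)},\usol^{(N)}]_\alpha$ might lie outside $\image K^1_\alpha$ and must be absorbed by a gauge transformation --- is a misdiagnosis, and the proposed gauge-absorption workaround is not obviously correct: gauge transformations in $\exp(\sv{}\P_p^0)$ modify $\usol^{(N)}$ by $d$-exact terms, which does not on its face change the $H^2$-class of the obstruction. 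You are also conflating the $K^i_\alpha$ from Lemma \ref{lemma:gae}, which acts on $\A$ with $\cinf(\Omega)$ coefficients and is merely surjective (via $\p_0$), with the operator relevant here, which lives on a finite-dimensional polynomial subspace where surjectivity is not automatic.

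What closes this gap in the paper is a direct finite-dimensional computation showing there is \emph{no} obstruction at all. After decomposing $\P_p$ by a refined index $(n,\beta)$ with $\alpha = n + \beta$ and then by $\tau$-degree $m$, block-triangularity reduces the question to the $\tau^0$ component $C = G_\beta\E_\Phi \cap \econst$, a finite free $\cinf(M^3)$-module. One then checks by explicit determinant computation that the composite $w : K \to C \xrightarrow{d} C \to C/K \simeq K$ is \emph{bijective}: each $\det w^i$ is a product of powers of $A_n = n_1 p_1 + n_2 p_2 + n_3 p_3$ and shifts $A_{n+\beta'}$, hence nonzero when $n \neq 0$. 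By the homological-algebra lemma in the proof, $H(d_{n\beta}) = 0$ in \emph{all} degrees simultaneously; $H^2 = 0$ gives existence with no gauge freedom invoked, $H^1 = 0$ gives uniqueness modulo $\exp(\sv{}\P_p^0)$, and the property $\image h \subset K$ of the homotopy delivers \ref{concg2} directly rather than via a gauge-fixing step. A smaller but real issue: the operator $f \mapsto \int_0^\tau e^{-(n\cdot p)(\tau-s)} f(s)\,ds$ does not preserve polynomials in $\tau$ (integration by parts leaves an $e^{-(n\cdot p)\tau}$ tail); the correct inverse of $\p_0 + (n\cdot p)$ on polynomials is the terminating Neumann series $\tfrac{1}{n\cdot p}\sum_{k\geq 0}\bigl(-\p_0/(n\cdot p)\bigr)^k$, and this is the actual source of the denominators in \ref{concg1}(iii).
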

\begin{remark} \label{remark:sfr}
The map $S_{\text{formal}}$
 is the restriction of a smooth map
 \[
    \smash{\Aspecaffine^+} \to \P
 \]
 that is right-inverse to $\P \to \A$.
To every element in $\smash{\Aspecaffine}$
associate a tuple \eqref{eq:ttuu}.
For example, $p_i \in C^\infty(\Omega)$ are
 defined to be the coefficients of $\theta_0 \sigma_0 + \theta_i \sigma_i$.
 By definition, 
 $\smash{\Aspecaffine^+} \subset \Aspecaffine$
  is the subset where $p_1,p_2,p_3 > 0$ 
  and where $D_1,D_2,D_3$ are linearly independent everywhere on $\Omega$.
  In particular one can define the structure coefficients.
The polynomials in \ref{concg2} and \ref{concg1} give
 the desired map.
\end{remark}
\begin{proof}
Use \eqref{eq:fusingg} to decompose $\P_p = \sum_{n\beta} \P_{n\beta}$ where
\[
\P_{n\beta} = (G_\beta \E_\Phi \cap \epoly) \sv{}^{\beta + n}
\]
for $n,\beta \in \N^3$.
Let $B \subset \N^3$ be the finite subset of $\beta$ for which
$G_\beta \E_\Phi \neq 0$.
Throughout, the pairs $n\beta$ are in $\N^3 \times B$.
In addition we have
\[
[\P_{n\beta},\P_{n'\beta'}]
\subset \P_{n+n'\,
            \beta+\beta'}
+
\textstyle\sum_{\substack{n'' > n+n',\;
                \beta'' < \beta + \beta'\\
n + \beta + n' + \beta' = n'' + \beta''}}
\P_{n''\beta''}
\]
On $\N^3 \times B$ define
$n\beta < n'\beta'$ iff $(n < n')$ or $(n = n'\;\text{and}\;\beta < \beta')$,
which is a well-founded partial order; it is not total since the order on $\N^3$ is not.
It follows that
\begin{equation}\label{eq:ppp}
[\P_{\geq n\beta}, \P_{\geq n'\beta'}]
\subset
\P_{\geq n+n'\,\beta+\beta'}
\end{equation}
and similarly if on the left one replaces one $\geq$ by $>$,
and on the right one replaces $\geq$ by $>$.
We seek an MC-element
$\usol = \sum_{n\beta} \usol_{n\beta}$ with $\usol_{n\beta} \in \P_{n\beta}^1$.
The two notations are related by
$\usol_{\alpha} \sv{}^\alpha = \sum_{n + \beta = \alpha} \usol_{n\beta}$,
in particular $\usol_0 = \usol_{00}$.
Define the $\usol_{0\beta}$
by $\sum_\beta \usol_{0\beta} = \uzero$,
which one can since the pieces of $\uzero$ are in the right spaces.
Note that $\usol_0$ is given by \eqref{eq:naive}.
We will prove the following lemma:
\begin{quote}
For all $n\beta$ with $n \neq 0$,
and $y \in \P_{< n\beta}^1$ satisfying
$[y,y] \in \P_{ \not< n\beta}^2$\footnote{%
This means that in the decomposition of $[y,y]$
all components $< n\beta$ are zero.
In this sense, $y$ is an approximate MC-element.
},
there exists a $z \in \P_{n\beta}^1$
such that $x = y+z \in \P_{\leq n\beta}^1$
satisfies $[x,x] \in \P_{\not\leq n\beta}^2$.
\end{quote}
We first show that the Lemma implies the existence claim in Theorem \ref{thm:ex}.
Note that there exists an enumeration
$(n_i\beta_i)_{i \in \N}$ of $\N^3 \times B$ such that
$n_i\beta_i < n_j\beta_j$ implies $i < j$.
This is an appropriate order to construct the $\usol_{n_i\beta_i}$.
Namely, by induction on $N\in \N$ one shows:
There exist $\usol_{n_i\beta_i} \in \smash{\P_{n_i\beta_i}^1}$
for all $i \leq N$, with the $\usol_{0\beta}$ fixed in terms of $\uzero$,
such that $y = \sum_{i \leq N} \usol_{n_i\beta_i}$ satisfies
$[y,y] \in \sum_{i > N} \smash{\P_{n_i\beta_i}^2}$.
The induction step follows from the Lemma
(which is only for $n \neq 0$
but this suffices since we have fixed the $\usol_{0\beta}$ already)
together with \eqref{eq:ppp}.
The existence claim in Theorem \ref{thm:ex}
follows by choosing
the $\usol_{n\beta}$ consistently
(logically this requires the axiom of dependent choice,
but it is not actually
required given the constructive nature of the rest of the proof).

To prove the lemma, given $n\beta$, introduce
a differential on $\P_{n\beta}$ by
\[
d_{n\beta} = [\usol_0,-]
\;\in\; \End^1(\underbrace{\P_{\not< n\beta}/\P_{\not \leq n\beta}}_{\simeq \P_{n\beta}})
\]
which is well-defined and a differential using \eqref{eq:ppp}.
The lemma
amounts to finding a $z$ such that $d_{n\beta} z = e$ where $e = -[y,y]/2$.
But the Jacobi identity $[y,[y,y]]=0$
together with $y-\usol_0 \in \P_{>00}$ implies $d_{n\beta} e = 0$,
hence it suffices to show that $\smash{H^2(d_{n\beta})} = 0$.
We will actually show $H(d_{n\beta}) = 0$,
always assuming $n \neq 0$ as in the lemma.
The fact that $\smash{H^1(d_{n\beta})}=0$ implies the uniqueness part of the claim
by standard arguments.

Decompose $\P_{n\beta} = 
\bigoplus_m \P_{n\beta m}$
where $\P_{n\beta m}
= (G_\beta \E_\Phi \cap \econst \tau^m ) \sv{}^{\beta + n}$.
In the following we view $d_{n\beta}$
as a block matrix, with blocks indexed by $m$.
The differential $d_{n\beta}$ is triangular in $m$,
meaning it cannot increase $m$ because:
\begin{itemize}
\item $\usol_0 \in G_0 \E_\Phi \cap \econst$ so the coefficients do not involve $\tau$.
\item $\usol_0$ contains no $L_1,L_2,L_3$
that could generate $\tau$ when hitting an $\sv{i}$\footnote{%
Due to the structure of the filtration in degree zero,
see Remark \ref{rem:framerank0}.
}\textsuperscript{,}%
\footnote{%
The frame $L_1,L_2,L_3$ acts by differentiation through the anchor map.
}.
\end{itemize}
It suffices to show that the diagonal blocks of $d_{n\beta}$,
denoted $d_{n\beta m} \in \End^1(\P_{n\beta m})$,
satisfy $H(d_{n\beta m}) = 0$;
the vanishing of the homologies of the diagonal blocks
implies the vanishing of the homology of the block triangular $d_{n\beta}$.

The factor $\tau^m$ does not affect the differential $d_{n\beta m}$;
in particular $\p_0 \tau^m = m \tau^{m-1}$ does not contribute
to the diagonal. Hence it suffices to show that 
$H(d_{n\beta 0}) = 0$.

In summary, it suffices to show that
\[
	d_{n\beta 0} = [\usol_0,-]\;\in\;
        \End^1( (G_\beta \E_\Phi \cap \econst) \sv{}^{\beta + n})
\]
has vanishing homology, if $n \neq 0$.
Dividing and multiplying by $\sv{}^{\beta + n}$,
we get a $\cinf(M^3)$-linear differential
on the finite free $\cinf(M^3)$-module
$C = G_{\beta} \E_\Phi \cap \econst$.
For $C$ a basis is given by the
$G_\beta\E_\Phi$-elements in Table \ref{table:EPhiG}
and $\theta_0$ times these elements.
Denote $K = (G_\beta \E_\GB \oplus G_\beta \Phi_\GB) \cap \econst$,
that is,
a basis of $K$ is given by the $G_\beta\E_\Phi$-elements in Table \ref{table:EPhiG}
but not $\theta_0$ times these elements.
We now use the following homological algebra lemma:
If $(C,d)$ is a complex and $K \subset C$
is a subspace such that the composition $w: K \to C \to^d C \to C/K$ is bijective,
then $C$ is exact,
in fact if $dx = 0$ then there exists a $y \in K$ such that $x = dy$.
This is because the composition
$h: C \to C/K \smash{\to^{w^{-1}}} K \to C$
is a homotopy, $dh + hd = \one$ and $h^2 = 0$ with $\image h \subset K$\footnote{%
Concretely, decomposing $C \simeq K \oplus C/K$,
the differential has the block form $d = (\begin{smallmatrix} a & b \\ w & c \end{smallmatrix})$
and $h = (\begin{smallmatrix} 0 & w^{-1} \\ 0 & 0 \end{smallmatrix})$.
Then $dh + hd = (\begin{smallmatrix} \one & aw^{-1} + w^{-1}c \\ 0 & \one \end{smallmatrix})$.
The upper right entry vanishes since $d^2 = 0$ implies $wa + cw = 0$.
}.

Note that $C = K \oplus \theta_0 K$ where $\theta_0: K \to C$ is injective,
hence $C/K \simeq K$ with a homological degree shift by one,
so we can regard $w \in \smash{\End_{\cinf(M^3)}^0(K)}$.
Checking bijectivity amounts to checking that $\det w \in \cinf(M^3)$ is nonzero
which amounts to a concrete calculation.
Since $w$ preserves homological degree,
it is a block direct sum of matrices $w = w^0 \oplus w^1 \oplus w^2 \oplus w^3$.
One finds the following values for $\det w^i$,
up to multiplicative constants in $\R^\times$,
and analogous for permutations of $\beta$:
\[
\arraycolsep=5pt
\begin{array}{c|c c c c}
\beta & \det w^0 & \det w^1 & \det w^2 & \det w^3\\
\hline
000 & A_n^5 & A_n^4 & A_n &  \\
200 &   & A_n &   &  \\
011 & A_nA_{n+011} & A_n^8A_{n+011} & A_n &  \\
211 &   & A_n & A_n^7 &  \\
222 &   &   & A_n & A_n^6
\end{array}
\]
Here $A_n = n_1p_1 + n_2p_2 + n_3p_3 \in \cinf(M^3)$,
nonzero when $n \neq 0$ since $p_1,p_2,p_3 > 0$.
The factors $A_{n+011}$ are also nonzero.
The index $n$ comes in through $\p_0 \sv{}^n = -A_n \sv{}^n$.
The determinant is easy to calculate
since for every $\beta$ the matrix $w^i$ is upper triangular relative
to some permutation of the basis elements.

Our proof of $H^2(d_{n\beta}) = 0$ 
delivers an algorithm for constructing $z$
in $d_{n\beta} z = e$, and it generates
$\usol_\alpha$ 
as polynomial objects of the kind
listed in Theorem \ref{thm:ex}.
Furthermore $z$ only uses $\GB$ basis elements, see the definition of $K$.
\qed\end{proof}

The construction 
is complete in the sense that one can get all $\MC(\P)$
elements, up to gauge transformations,
by composing $S_{\textnormal{formal}}$ with formal curves in $\MC(\A)$.
\begin{lemma}[Formal completeness for anisotropic elements] \label{lemma:fcomp}
Define the canonical projection $\pi: \MC(\P) \to \MC(\A)$.
Let $x \in \mcsp$
with anisotropic leading term, see Definition \ref{def:anisotropic}.
Then the map
\begin{equation}\label{eq:fc}
\MC(\A[[\spar]]) \cap (x + \spar \Aspec[[\spar]]) \;\to\;
\frac{\pi^{-1}(x)}{\exp(\spar \P^0)}
\end{equation}
induced by $S_{\textnormal{formal}}$
(we use Remark \ref{remark:sfr}) is surjective.
\end{lemma}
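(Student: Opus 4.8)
The plan is to prove surjectivity by gauging an arbitrary $y \in \pi^{-1}(x)$, order by order, into the form $S_{\textnormal{formal}}(\tilde x)$ for a suitable formal curve $\tilde x \in \MC(\A[[\spar]]) \cap (x + \spar\Aspec[[\spar]])$, combining the homological construction in the proof of Theorem \ref{thm:ex} with the anisotropic gauging mechanism of Lemma \ref{lemma:gae}. Throughout I would use the module identification $\P \simeq \A[[\spar]]$ of Remark \ref{remark:abbs} together with the grading $\P = \bigoplus_{n\beta}\P_{n\beta}$ from the proof of Theorem \ref{thm:ex}, where $\P_{n\beta}$ is the $G_\beta$-summand placed at $\spar$-power $\beta + n$; under these, $\pi(y) = \sum_\beta y_{0\beta}\spar^\beta$, so that $\pi^{-1}(x)$ is cut out by $y_{0\beta} = x_\beta$ for all $\beta$, while the components $y_{n\beta}$ with $n \neq 0$ are constrained only through $[y,y]_\P = 0$. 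A curve $\tilde x = x + \sum_{\epsilon \neq 0}\spar^\epsilon \tilde x_\epsilon$ with $\tilde x_\epsilon \in \Aspec$ lands, after the $S_{\textnormal{formal}}$-correction, in $\pi^{-1}(x)$ precisely because each piece $\spar^\epsilon \tilde x_\epsilon$ occupies orders $n\beta$ with $n = \epsilon \neq 0$, which $\pi$ does not see.

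First I would fix $y$, enumerate $\N^3 \times B$ compatibly with the well-founded order, and construct inductively along this enumeration a curve $\tilde x$ and a gauge transformation $g \in \exp(\spar\P^0)$ such that $g \cdot y$ and $S_{\textnormal{formal}}(\tilde x)$ agree through each successive order. At order $00$ both equal $\usol_0$. At an order $n\beta$ with $n = 0$ the matching amounts to copying $x_\beta$, which automatically lies in $\Aspec$ since $x \in \Aspecaffine$. The substantive case is an order $n\beta$ with $n \neq 0$: granted agreement at all lower orders, the $n\beta$-components of both $g\cdot y$ and $S_{\textnormal{formal}}(\tilde x)$ solve the same affine equation $d_{n\beta}(\cdot) = e_{n\beta}$, whose right-hand side is fixed by the matched lower orders via the Jacobi identity $[y,[y,y]] = 0$ (as in the proof of Theorem \ref{thm:ex}), so they differ by an element of $\ker d^1_{n\beta}$. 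This difference I would split into a piece in $\image d^0_{n\beta}$, absorbed into $g$, and — when $\beta$ lies in the finite support of $\Aspec$ — a residual recorded as $\spar^n \tilde x_{n,\beta}$, which then automatically solves the linearized $\A$-Maurer--Cartan equation at that order. That the leftover can always be reduced to this shape is exactly where the anisotropy of $\usol_0$ (equivalently of $x$) enters: it lets one solve for the requisite gauge-direction kernel element as in the proof of Lemma \ref{lemma:gae}, and, following the $n \geq 1$ clause there, the correcting gauge transformation can be taken of strictly positive $\spar$-order, so it perturbs neither the already-fixed lower orders nor the conditions $y_{0\gamma} = x_\gamma$. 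Passing to the limit assembles $\tilde x$ and $g$ with $g \cdot y = S_{\textnormal{formal}}(\tilde x)$ and $\tilde x \in \MC(\A[[\spar]]) \cap (x + \spar\Aspec[[\spar]])$, which is the asserted surjectivity.

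The step I expect to be the main obstacle is the careful interleaving of the two roles of $\spar$ — as the Rees grading on $\P$ and as the direction along which $\tilde x$ deforms off of $x$ — so that the corrections furnished by anisotropy in Lemma \ref{lemma:gae}, which there act on $\A$, transport correctly through $\P \simeq \A[[\spar]]$, and so that every gauge transformation invoked remains of positive $\spar$-order (which is what makes the infinite composition converge and keeps $\pi(y) = x$ intact). A second point that requires genuine verification rather than mere bookkeeping is that the recorded residuals $\tilde x_{n,\beta}$ truly lie in $\Aspec$ and satisfy the linearized $\A$-Maurer--Cartan equation — the higher-order analogue of the constraint analysis behind Lemma \ref{lemma:mcael} — so that the assembled $\tilde x$ belongs to the stated domain of \eqref{eq:fc}.
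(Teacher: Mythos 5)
Your overall strategy is the same as the paper's: construct, order by order, a gauge transformation $g \in \exp(\spar\P^0)$ and a formal curve $\tilde x$ so that $g\cdot y = S_{\textnormal{formal}}(\tilde x)$, using the $\P \simeq \A[[\spar]]$ identification, the exactness results behind Theorem \ref{thm:ex}, and the anisotropic gauging of Lemma \ref{lemma:gae} at each step. The two places you flag as delicate are in fact where the argument needs more than bookkeeping, and one of them is a genuine gap.

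The gap is your claim that the recorded residuals $\tilde x_{n,\beta}$ ``automatically'' solve the linearized $\A$-Maurer--Cartan equation at the corresponding order. This is not automatic. What you actually control is that $S_{\textnormal{formal}}(\tilde x)$ matches $g\cdot y$, hence is an $\MC(\P)$ element; from this you need to deduce that the curve $\tilde x$ itself lies in $\MC(\A[[\spar]])$. That is the converse direction to \eqref{eq:uuuuu}, and it is exactly what the paper's footnote establishes by a separate argument: expand $S_{\textnormal{formal}}(X) = \sum_\alpha S_\alpha(X)\spar^\alpha$ and $[X,Y] = \sum_\alpha [X,Y]_\alpha \spar^\alpha$, then derive identities among the $S_\alpha$, their derivatives, and the $[-,-]_\alpha$ from the forward implication \eqref{eq:uuuuu}, and use these to show the rest term $R$ in the $\spar^\alpha$-coefficient of $[y,y]=0$ vanishes, leaving precisely $\sum_{\beta+\gamma=\alpha}[x_\beta,x_\gamma]_0=0$. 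Without some such identity your inductive step does not close: you cannot cite Lemma \ref{lemma:gae} to gauge $x_\alpha$ into $\Aspec$ until you know the truncated curve is an MC element of $\A[[\spar]]/\spar^{\not\le\alpha}$.

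A secondary, structural issue: you propose to split the difference at order $n\beta$ ($n\neq 0$) into $\image d^0_{n\beta}$ plus a ``residual'' in $\Aspec$. But the complex $d_{n\beta}$ whose exactness is computed in the proof of Theorem \ref{thm:ex} is the one with $\tau$-\emph{polynomial} coefficients ($\P_p$). On that complex $\ker d^1_{n\beta} = \image d^0_{n\beta}$, so your residual would vanish identically, which would reduce the domain of \eqref{eq:fc} to $\{x\}$ -- contradicting the intended content of the lemma. To get a nonzero residual you need the general $\cinf(\Omega)$-coefficient module, where the operator $\theta_0\p_0 + (\text{constant matrix})$ has infinite-dimensional kernel and the homology picture is different and not worked out in the paper; and even then it is not clear the complement of $\image d^0_{n\beta}$ in $\ker d^1_{n\beta}$ can be taken inside $\Aspec_\beta$. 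The paper avoids this by working at the coarser level of the curve degree $\alpha$ (not the finer $n\beta$), defining $x_\alpha \in \A^1$ directly by subtraction \eqref{eq:yy} with no attempt to place it in $\Aspec$ a priori, proving MC via the footnote identity, and only then invoking Lemma \ref{lemma:gae} to realize the $\Aspec$ gauge by an admissible $\exp(\spar^\alpha\A^0)$ transformation. You correctly flagged both difficulties, but the proposal as written does not resolve them, and the second one points to a decomposition that does not behave as you expect.
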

\begin{proof}[Sketch]
Lemma \ref{lemma:mcael} and Theorem \ref{thm:ex} hold,
by the same proofs, 
if the parameters such as the elements in the tuple \eqref{eq:ttuu}
are formal power series in a formal parameter $t$,
and we work modulo $t^n$ for some $n>0$.
In particular, the map in Remark \ref{remark:sfr} has the property that,
for all $x_0 = x$ and $x_1,\ldots,x_{n-1} \in \Aspec$ we have the implication
\begin{multline}\label{eq:uuuuu}
x_0 + tx_1 + \ldots + t^{n-1} x_{n-1} \in \MC(\A[[t]]/t^n)\\
\Longrightarrow
\qquad
S_{\textnormal{formal}}(x_0 + tx_1 + \ldots + t^{n-1} x_{n-1}) \in \MC(\P[[t]]/t^n)
\end{multline}
Analogous if $t$ stands for several formal parameters.
Hence in particular the map in Lemma
\ref{lemma:fcomp} is well-defined, one does get an element in $\pi^{-1}(x)$,
by setting $t = \spar$.

Use $\P \simeq \A[[\spar]]$ as in Remark \ref{remark:abbs}.
The two are different as gLa,
and it is implicit below which bracket is used.
Let $y = x + \sum_{\alpha > 0} y_\alpha \spar^\alpha \in \pi^{-1}(x)$
with $y_\alpha \in \A^1$ be given.
We must construct a corresponding element on the left hand side of \eqref{eq:fc}.
The proof is by induction, using the following lemma
for all $\alpha \in \N^3-0$:
If there exist
$x_\beta \in \Aspec$ and $a_\beta \in \A^0$ for all $0 < \beta < \alpha$
such that
\begin{align*}
\textstyle x + \sum_{0<\beta<\alpha} \spar^\beta x_\beta \;&\in\;
 \MC(\A[[\spar]]/\spar^{\not<\alpha})\\
\textstyle\exp(\sum_{0<\beta<\alpha} \spar^\beta a_\beta)
S_{\text{formal}}(x + \sum_{0<\beta<\alpha} \spar^\beta x_\beta)
\;& =\; y \mod \spar^{\not<\alpha}
\end{align*}
then there are $x_\alpha \in \Aspec$ and $a_\alpha \in \A^0$
such that the same two statements hold when we add $\spar^\alpha x_\alpha$
respectively $\spar^\alpha a_\alpha$ and
the two statements are modulo $\spar^{\not\leq\alpha}$.
Without loss of generality,
by redefining $y$, we may assume $a_\beta = 0$ for $0 < \beta < \alpha$.
First define $x_{\alpha} \in \A^1$
(not necessarily $x_\alpha \in \smash{\Aspec}$)
 uniquely by requiring
\begin{equation}\label{eq:yy}
y = S_{\text{formal}}(x + \textstyle\sum_{0<\beta<\alpha} \spar^\beta x_\beta)
     + \spar^\alpha x_\alpha \mod \spar^{\not\leq\alpha}
\end{equation}
Using $[y,y]=0$ one checks\footnote{%
Take the $\spar^\alpha$ coefficient of $[y,y] = 0$,
where $y$ is replaced by \eqref{eq:yy}
and where $S_{\text{formal}}$ is expanded about $x$.
This is of the form
$\sum_{\beta + \gamma = \alpha} [x_\beta,x_\gamma]_0 + R = 0$
where $[-,-]_0 : \A \times \A \to \A$ is the bracket on $\A$ and $R$ is a rest term.
We must show $R = 0$. 
Expand $S_{\text{formal}}(X) = \sum_{\alpha \geq 0} S_\alpha(X) \spar^\alpha$
with $S_\alpha: \Aspec^+ \to \A$
and where $S_0(X) = X$,
and $[X,Y] = \sum_{\alpha \geq 0} [X,Y]_\alpha \spar^\alpha$
with $[-,-]_\alpha : \A \times \A \to \A$.
From \eqref{eq:uuuuu} (with $t = \spar$) one derives identities
among the $S_\alpha$ and their derivatives and the $[-,-]_\alpha$ that imply that $R = 0$.}
$x + \sum_{0 < \beta \leq \alpha} \spar^\beta x_\beta
\in \MC(\A[[\spar]]/\spar^{\not\leq\alpha})$.
Lemma \ref{lemma:gae}
yields an $a_\alpha \in \A^0$ such that
$\exp(\spar^\alpha a_\alpha) \in \exp(\spar^\alpha \A^0[[\spar]] / \spar^{\not\leq\alpha})$
puts $x_\alpha \in \Aspec$.
\qed\end{proof}
\section{$\MC(\E_\Phi)$ elements, true big-bang solutions}
\label{section:maintruesolution}
We truncate
$\MC(\P)$ elements,
replace the formal parameters
$\spar_1$, $\spar_2$, $\spar_3$ by a small constant,
and use this as an approximate solution $\amc$
for Theorem \ref{theorem:asyts}
to construct $\MC(\E_\Phi)$ elements
with nondegenerate frame
asymptotic to $\amc$ as $\tau \to \infty$.
Invoking Lemma \ref{lemma:gsf}
one gets a smooth metric and
scalar field solving the Einstein equations \eqref{eq:esf}.
The theorem below is for $M^3 = \mathbbm{T}^3$
only because this simplifies the notation for partial derivatives,
as in Theorem \ref{theorem:asyts}.
The theorem can equivalently be stated for $\spar_1 = \spar_2 = \spar_3 = 1$,
 see Remark \ref{remark:s1} below.

\newcommand{\pmin}{\tc{scolor}{H}}
\newcommand{\zmin}{Z_{\textnormal{min}}}
\newcommand{\mmin}{J_{\textnormal{minREMOVE}}}
\newcommand{\kc}{\tc{hcolor}{k}}
\newcommand{\corr}{\tc{qcolor}{\omega}}
\newcommand{\qts}{\tc{scolor}{rREMOVE}}
\newcommand{\jc}{\tc{qcolor}{jREMOVE}}

\begin{theorem}[Semiglobal existence of $\MC(\E_\Phi)$ elements] \label{theorem:sendg}
There exists
a constant $c > 0$
such that the following is true.
Given four constants
\[
\pmin > 0
\qquad
\kc > 0
\qquad
J \geq \max\{2,c \kc/\pmin\}
\qquad
K > 0
\]
with $J \in 2\N$.
Set $\Omega = [0,\infty) \times \mathbbm{T}^3$.
 Then exists a constant $\Lambda>0$ such that for all:
\begin{itemize}
\item
$\mcsp$ elements given by a tuple as in \eqref{eq:ttuu} on $\mathbbm{T}^3$, that is
\[
   (D_1,D_2,D_3,p_1,p_2,p_3,\bklq,\scl,\phiconstraint)
\]
In particular
$D_1,D_2,D_3$ constitute a frame; $p_1,p_2,p_3 > 0$;
the constraints hold.
\item constants $\lambda > 0$
\end{itemize}
if\footnote{Here $L^\infty$ means $L^\infty(\mathbbm{T}^3)$
and $\alpha \in \N^3$.
}
\begin{enumerate}[({c}1),leftmargin=10mm]
\item \label{pminx} $p_i \geq \pmin$. 
\item \label{der3}
      $\|\p^\alpha(p_i,\bklq)\|_{L^\infty} \leq \kc$
      for $|\alpha| \leq 1$.
\item \label{gen3}
      $\|\p^\alpha(D_i,p_i,\bklq,\scl,\phiconstraint,c_{ij}^k)\|_{L^\infty} \leq K$
      for $|\alpha| \leq J + 7$.
\item \label{eps3}
      $0 < \lambda \leq \Lambda$.
\end{enumerate}
for $i=1,2,3$,
then denoting by 
$\usol = \textstyle\sum_\alpha \usol_\alpha \sv{}^\alpha \in \MC(\P_p)$
the corresponding $\MC(\P)$ element defined by Theorem \ref{thm:ex}
(see also Lemma \ref{lemma:pp}),
and denoting
\begin{itemize}
\item[] $\amc = \sum_{\alpha: |\alpha| \leq J}
  \usol_\alpha \sv{}^\alpha|_{\spar_1 = \spar_2 = \spar_3 = \lambda} \in \E_\Phi^1$
\end{itemize}
there exists a
$\corr \in \E_\Phi^1$ such that, in $\E_\Phi$,
\begin{align*}
[\amc + \corr, \amc + \corr] & = 0\\
\|\p^\alpha \corr\|_{L^\infty_x} &
   = \mathcal{O}(e^{-J \pmin \tau}) \;\text{as $\tau \to \infty$}
\end{align*}
and the frame associated to $\amc + \corr \in \MC(\E_\Phi)$ is nondegenerate.
The associated smooth spacetime metric and scalar field
(by Lemma \ref{lemma:gsf})
solve the Einstein equations \eqref{eq:esf}.
The level sets of $\tau$ are spacelike.
There are particle horizons and a curvature singularity at $\tau \to \infty$.
\end{theorem}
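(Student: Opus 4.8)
The plan is to verify the hypotheses of Theorem \ref{theorem:asyts} for the approximate solution $\amc$, apply it to produce $\corr$, and then translate the conclusion into statements about the metric. First I would fix the graded Lie algebra $\gx = \E_\Phi$ on $\Omega = [0,\infty) \times \mathbbm{T}^3$, which satisfies Definition \ref{def:gwfobag} by Lemma \ref{lemma:translation} (via Lemma \ref{lemma:gsf} for the gauge $\E_{\Phi,\GS}$). The constants $1 < q < Q$ and $0 < Qz < Z$ must be chosen in terms of $\pmin, \kc, J, K$: the key point is that the leading term $\usol_0$ of \eqref{eq:naive} has $A^0_i(\usol_0)$ close to the identity and $L_i(\usol_0)$ small (of size $\mathcal{O}(\pmin^{-1} \cdot (\text{derivative bounds}))$ is not right --- rather $L_i$ is controlled by the $p_j$ themselves), so one can take $Z$ comparable to $J \pmin$ and $z$ a small multiple, using \ref{pminx}, \ref{der3}. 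The constant $c$ in the theorem statement and the relation $J \geq c\kc/\pmin$ are exactly what makes $Qz < Z$ achievable. This is where I would spend the most care: producing the five constants as explicit (admissible) functions of the data so that Theorem \ref{theorem:asyts} applies, and then $\Lambda$ is obtained from the $\eps$ of that theorem after the estimates below.

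Next I would check hypotheses \ref{bbounded}--\ref{hifhufn2} for $\amc = \sum_{|\alpha| \leq J} \usol_\alpha \sv{}^\alpha|_{\spar = \lambda}$. Boundedness \ref{bbounded} and the decay-type conditions follow from the structure of $\usol_\alpha$ in Theorem \ref{thm:ex} \ref{concg1}: each coefficient is polynomial in $\tau$ times $e^{-A_n \tau}$ for $n \neq 0$ with $A_n \geq \pmin(n_1+n_2+n_3) > 0$, so all such terms and their derivatives decay, and only the $\alpha = 0$ term (the explicit $\usol_0$) survives at $\tau \to \infty$; since $\usol_0$ has constant-in-$\tau$ coefficients depending on $p_i, \bklq$, its spatial derivatives are bounded by \ref{gen3} and $\p(A_i(\amc))$, $\p(L_1(\amc))$ tend to zero as $\tau \to \infty$, giving \ref{bcs1}, \ref{bcs2}. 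The smallness in \ref{bcs1}, \ref{jhgjdffd1}, \ref{jhgjdffd2} requires choosing $\lambda$ small (and possibly $\tau$-uniformity from the exponential gain), which feeds into $\Lambda$. The crucial estimates \ref{hifhufn1}, \ref{hifhufn2} on $[\amc,\amc]$: since $\usol = \sum_\alpha \usol_\alpha \sv{}^\alpha$ is an exact $\MC(\P_p)$ element, $[\usol,\usol] = 0$ in $\P_p$, so $[\amc,\amc]$ is exactly the tail $\sum_{|\alpha| > J}$-part, which by \eqref{eq:ppp}-type bookkeeping consists of terms with $\sv{}^\gamma$, $|\gamma| > J$, hence bounded by $\lambda^{J+1}$ times polynomials in $\tau$ times $e^{-(J+1)\pmin \tau}$; with $Z \leq (J+1)\pmin$ chosen appropriately (and here using that $J$ is even and \ref{concg3} kills odd-order terms so the true next order is $J+2$) this gives $\|\p^\alpha [\amc,\amc]\|_{L^\infty_x} \leq \delta e^{-Z\tau}$ for $\delta$ a small multiple of $\lambda^{J+1}$, satisfying \ref{hifhufn1} for $\lambda \leq \Lambda$, and $\mathcal{O}(e^{-Z\tau})$ for all $\alpha$ giving \ref{hifhufn2}.

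Having verified the hypotheses, Theorem \ref{theorem:asyts} yields a unique $\fld \in C^\infty(\Omega, \R^{\dimz_1})$ with $[\amc + \ginj_1\fld, \amc + \ginj_1\fld] = 0$ and $\|\p^\alpha \fld\|_{L^\infty_x} = \mathcal{O}(e^{-Z\tau})$; set $\corr = \ginj_1 \fld$, so $\|\p^\alpha\corr\|_{L^\infty_x} = \mathcal{O}(e^{-Z\tau}) = \mathcal{O}(e^{-J\pmin\tau})$ after arranging $Z \geq J\pmin$ (compatible with $Z \leq (J+1)\pmin$ and $Qz < Z$). Nondegeneracy of the frame of $\amc + \corr$: the leading term $\usol_0$ is nondegenerate (its frame is $\theta_0\p_0 + \sum e^{-p_i\tau}(\cdots)$, essentially the Kasner frame of Remark \ref{remark:kasner2}), the finitely many higher $\usol_\alpha \sv{}^\alpha|_{\spar=\lambda}$ are small perturbations for large $\tau$, and $\corr$ decays; moreover Remark \ref{remark:nondegpreserved} shows $\det$ of the spatial frame satisfies $\p_0 m = (-3a_0 + a_1+a_2+a_3)m$ so nondegeneracy propagates from $\tau = \infty$ backwards to all of $\Omega$ --- this is the cleanest way, though it requires first establishing $m \neq 0$ for $\tau$ large, which follows from the asymptotics. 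Then Lemma \ref{lemma:gsf} gives a smooth metric $g$ and scalar field $\phi$ solving \eqref{eq:esf} (exactness of $\alpha_{\eel}$ and $\beta_{\eel\oplus\phiel}$ holds because $\amc$ was built from an $\mcsp$ element, which is exact by Definition \ref{def:mcsnew}, and $\corr$ is exact since $\ginj_1$ lands in the relevant submodule --- this point needs a short check that the added $\fld$ does not spoil exactness, using that the one-forms are determined by components in $G_{000}$). The level sets of $\tau$ being spacelike is \eqref{acomp2} read through $A^0_1 > 0$. Finally, particle horizons and the curvature singularity at $\tau \to \infty$: since $g$ is uniformly close to the Kasner metric of Remark \ref{remark:kasner1} with $p_1, p_2, p_3 > 0$ (hence all $\normp_i = p_i/(p_1+p_2+p_3) \in (0,1)$), one imports the conformal-compactification argument sketched there --- the conformal metric extends to the boundary $\tau = \infty$ whose points are causally disconnected, and the Kretschmann scalar blows up like (a bounded-above-and-below factor times) $e^{2(p_1+p_2+p_3)\tau} \to \infty$.

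The main obstacle I expect is the constant bookkeeping in the first two paragraphs: one must choose $q, Q, z, Z, b$ (and then $\eps, C$ from Theorem \ref{theorem:asyts}, and finally $\delta, \lambda, \Lambda$) so that every one of \ref{bk2k}--\ref{hifhufn2} holds \emph{with the stated dependence only on $\pmin, \kc, J, K$ and the universal $c$}, while simultaneously $Qz < Z \leq J\pmin$ and $Z$ large enough that the output decay rate is $e^{-J\pmin\tau}$. The tension is that $L_1, L_2$ carry a factor controlled by $\kc/\pmin$-type quantities whereas $Z/Q$ must stay below $(J+1)\pmin$; this is precisely resolved by the hypothesis $J \geq c\kc/\pmin$, but making that quantitative --- tracing how $\kc$ enters $\|L_i(\amc)\|_{L^\infty}$ and how $J$ enters the admissible $Z$ --- is the delicate accounting step. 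Everything else is either a direct appeal to Theorem \ref{theorem:asyts}, Lemma \ref{lemma:gsf}, and Remark \ref{remark:nondegpreserved}, or a routine comparison with the explicit Kasner model.
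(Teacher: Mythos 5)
Your proposal follows essentially the same route as the paper's proof: invoke Theorem \ref{theorem:asyts} with $q,Q,z,Z,b$ derived from $\pmin,\kc,J,K$ so that $Qz<Z$ is enforced by the hypothesis $J\geq c\kc/\pmin$, verify \ref{bbounded}--\ref{hifhufn2} using the structure of $\usol_\alpha$ from Theorem \ref{thm:ex} (including the observation that \ref{concg3} and $J$ even force $[\amc,\amc]$ to begin at order $J+2$), set $\corr=\ginj_1\fld$, propagate nondegeneracy via Remark \ref{remark:nondegpreserved}, and pass to the metric via Lemma \ref{lemma:gsf} with exactness inherited from $\mcsp$. The only thing you leave implicit is the concrete bookkeeping that the paper spells out (e.g.\ $q=q_0+1$, $Q=q_0+2$, $z=(c_0+1)\kc$, $Z=(J+1)\pmin$, $c=(c_0+1)(q_0+2)+1$, $\delta=\lambda^{J+1}$), which you correctly flag as the delicate accounting step rather than a conceptual gap.
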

\begin{remark}
The assumptions of Theorem \ref{theorem:sendg} can be satisfied
for all \eqref{eq:ttuu}.
Let $\pmin$ be the biggest allowed by \ref{pminx};
let $\kc$ be the smallest allowed by \ref{der3};
set $J = \max\{2,c \kc/\pmin\}$;
let $K$ be the smallest allowed by \ref{gen3}.
Note that the latter depends on $J$.
\end{remark}
\begin{remark} \label{remark:s1}
To obtain a map $S: U \subset \mcsp \to \MC(\E_\Phi)$
as in the introduction, one uses
the following observations.
\begin{itemize}
\item 
Since $\A$ is $\N^3$-graded,
replacing $D_i$ by $a_ja_k D_i$
with $(i,j,k) \in C$
with $a_i > 0$ yields another element of $\MC(\A)$.
This can be used to state the theorem differently,
where in \ref{gen3} one requires $\|\p^\alpha D_i \|_{L^\infty} \leq K \lambda^2$,
and one sets $\spar_1 = \spar_2 = \spar_3 = 1$.
\item 
It is easy to prove a uniqueness result
for a fixed gauge $\ginj_1$.
One uses the uniqueness statement in Theorem \ref{theorem:sendg}.
\end{itemize}
\end{remark}
\begin{remark}
In the informal discussion in Section \ref{sec:discussion},
it is suggested that $\amc$ is only
a good approximation in the late/IR regime and not in the early/UV regime.
This perspective could be exploited in a finer analysis.
\end{remark}
\begin{proof}
The translation lemma, Lemma \ref{lemma:translation},
is used for $\E_\Phi$ without further notice.
Fix a symmetric hyperbolic gauge for $\E_\Phi$, see Lemma \ref{lemma:gsf}.
It is given by constant matrices,
for instance $\ginj_1 \in \Hom(\R^{\dimz_1}, \R^{\dimy_1})$ with
$\dimz_1 = 33 + 4$
and $\dimy_1 = 44 + 4$.
All constants produced in this proof (including in particular $c > 0$)
may depend on this gauge.
(An admissible condition will be imposed on $\ginj_1$ when discussing nondegeneracy.)

We will use the existence part of Theorem \ref{theorem:asyts},
using the parameters in Table \ref{table:parthmxxx}
that are chosen in due course.

\begin{table}[h]
\centering
\begin{tabular}{c|c}
Theorem \ref{theorem:asyts} & Theorem \ref{theorem:sendg} and this proof\\
\hline\hline
$q$, $Q$ & $q$, $Q$ \\
$z$, $Z$ & $z$, $Z$ \\
$b$ & $b$ \\
$\dimx$ & $3$\\
$\Omega$ & $\Omega$ \\
$\gx$ & $\E_\Phi$ (including gauge) \\
\hline
$\delta$ & $\delta$ \\
$\amc$ & $\amc$ \\
\hline\hline
$\fld$ & $\fld \in C^\infty(\Omega,\R^{33+4})$ \\
\eqref{pde2} & $[\amc+\corr,\amc+\corr] = 0$
where $\corr = \ginj_1 \fld \in C^\infty(\Omega,\R^{44+4})$
\end{tabular}
\caption{%
The rows up to and including $\amc$
are parameters used to invoke Theorem \ref{theorem:asyts}.
}
\label{table:parthmxxx}
\end{table}

There exist constants $q_0 > 1$ and $c_0 > 0$ such that,
for all $\usol_{0}$ of the form \eqref{eq:naive},
\begin{subequations}
\begin{align}
\label{eq:a12}
 q_0^{-1} \one & \leq A_i^0(\usol_{0}) \leq q_0 \one\\
\label{eq:a1234}
    \| L_i(\usol_{0}) \|_{L^\infty} & \leq
      c_0 \,\textstyle\sup_{|\alpha|\leq 1} \|\p^\alpha(p_i,\bklq)\|_{L^\infty}
\end{align}
and $i=1,2$.
Here $A_i = (A_i^\mu)_{\mu = 0\ldots 3}$ and $L_i$ are as in Theorem \ref{theorem:asyts}.
Here $A_i^0(\usol_{0}) = A_i^0(\theta_0\p_0)$
is a constant symmetric matrix,
and it is positive since
$\theta_0 \in W_+$
and by the second part of Lemma \ref{lemma:translation}.
By equation \eqref{eq:defli},
$L_i(-)$ only uses up to first derivatives.
Similarly, there exists $b_0 > 0$ such that,
for all $\usol_{0}$ of the form \eqref{eq:naive}\footnote{%
The motivation for the number $6$ will only become clear later.
It arises as $\dimx + 3 = 6$.}\textsuperscript{,}\footnote{%
On the left,
$\p^\alpha$ are derivatives
in $\dimx+1 = 4$ coordinates.
On the right,
in $\dimx = 3$ coordinates.
},
\begin{equation}\label{eq:alal}
\textstyle\max_{1 \leq |\alpha|\leq 6} 
\|\p^\alpha(A_1(\usol_{0}),L_1(\usol_{0}))\|_{L^\infty}
\;\leq\; b_0 
\textstyle\sup_{|\alpha|\leq 7} \|\p^\alpha(p_i,\bklq)\|_{L^\infty}
\end{equation}
\end{subequations}
The $q_0$, $c_0$, $b_0$ depend only on $\E_\Phi$
and the gauge, so can influence our choice of $c$.

Set\footnote{%
The fact that $c \geq 2$ will be used later to get a nondegenerate frame.
The condition $J \geq 2$ will not be explicitly used,
it follows from $J \geq c \kc/\pmin \geq 2$
where $\kc \geq \pmin$ is
required for consistency of \ref{pminx} and \ref{der3}.
} $c = (c_0+1)(q_0+2)+1$.
Further, to invoke
Theorem \ref{theorem:asyts}
using Table \ref{table:parthmxxx},
set\footnote{%
The motivation for these choices is
that $A_i^0(\amc)$ and $L_i(\amc)$ will satisfy appropriate estimates.
The gaps $+1$ are
used to bridge, by choosing $\Lambda$ small,
 the difference between $\usol_{0}$
and $\amc$. One can use a smaller gap.
}
$q = q_0 + 1$ and
$Q = q + 1$ and
$z = (c_0+1) \kc$ and
$Z = (J+1)\pmin$ and
$b = b_0 K + 1$.
In particular,
$Qz < c\kc \leq J\pmin < Z$.
We have now specified
enough data to obtain the constants $\eps>0$ and $C > 0$ from
Theorem \ref{theorem:asyts}; they can influence our choice of $\Lambda$.

Set $\delta = \lambda^{J+1}$.
With a smallness condition on $\Lambda$
we have $0 < \delta \leq \eps$.
Now all data for 
Theorem \ref{theorem:asyts} is specified
by Table \ref{table:parthmxxx}
and we now check that all its assumptions are satisfied.
The dependence of $\usol_\alpha$ on the tuple
 \eqref{eq:ttuu} as spelled out in 
 Theorem \ref{thm:ex} is used below.
Statements that require
 a smallness assumption on $\Lambda$
 and that use
 \ref{gen3} and \ref{eps3} are tagged by $\star$ below
(note that $K$ 
 can influence $\Lambda$).
 Every term
 in $\amc-\usol_{0}$ comes, by \ref{concg3}, with at least a factor\footnote{%
 There can also be polynomial factors in $\tau$.}
 $\lambda^2 e^{-(p_i + p_j) \tau} \leq \Lambda^2 e^{-2\pmin \tau}$ for some $i,j = 1,2,3$.
\begin{itemize}
\item \ref{bbounded} is clear by compactness of $\mathbbm{T}^3$
and the definition of $\P_p$.
\item \ref{bk2k} uses \eqref{eq:a12}; the gap $q - q_0 = 1$; and $\star$.
\item \ref{blop} uses \eqref{eq:a1234} and \ref{der3};
the gap $z - c_0 \kc = \kc > 0$; and $\star$.
\item \ref{bcs1}, \ref{bcs2} use $\p(A_i(\usol_{0})) = 0$ for $i=1,2$; and $\star$.
\item \ref{jhgjdffd1}, \ref{jhgjdffd2}
use \eqref{eq:alal}
and \ref{gen3}
and $\dimx + 3 = 6$;
the gap $b - b_0K = 1$;
and $\star$\footnote{Note that \ref{gen3} controls $J + 7$ derivatives
of the data, hence controls $7$ derivatives of $\amc$.}.
\item \ref{hifhufn1}, \ref{hifhufn2}
use that $\amc$ is the truncation at order $J$ of a formal power series solution;
the bracket is first order;
use \ref{gen3} and $\dimx + 3 = 6$;
and $\star$.
More in detail, for \ref{hifhufn1},
note that $[\amc,\amc]$ as a polynomial in the $\sv{i}$ starts with terms of degree $J+2$
with coefficients (and their derivatives up to order $6$) controlled by \ref{gen3},
so all terms in $[\amc,\amc]$
come at least with a factor $\lambda^{J+2} e^{-(J+2)\pmin \tau}$ 
times a polynomial in $\tau$,
beating the product
of $\delta = \lambda^{J+1}$ with $e^{-Z\tau} = e^{-(J+1)\pmin \tau}$ using $\star$.
\end{itemize}
By Theorem \ref{theorem:asyts} we have existence
of a solution $\amc + \corr$ with $\corr = \ginj_1 \fld$;
see Table \ref{table:parthmxxx}.
The frame is nondegenerate by the following argument:
\begin{itemize}
\item
By assumption $D_1,D_2,D_3$ in \eqref{eq:ttuu} constitute a frame,
hence the four-dimensional frame of $\uzero$ is nondegenerate
when $\spar_i$ are replaced by nonzero real numbers.
Hence the frame of $\amc$
is nondegenerate for sufficiently large $\tau$ using \ref{concg2}\footnote{%
This is true even if the $p_i$ are of very different sizes
 say two small and one large.}.
\item 
Since $Z = (J+1)\pmin > J\pmin
   \geq c \kc \geq 2 \kc \geq 2 \sup_i \smash{\|p_i\|_{L^\infty}}$
we use \eqref{adec22} to conclude that
also $\amc + \corr$ 
with $\corr = \ginj_1 \fld$
has nondegenerate frame for large $\tau$.
\item
Note that $\amc$ has the form \eqref{eq:t0d0},
and one can choose $\ginj_1$ such that $\amc + \corr$ also has the form \eqref{eq:t0d0}.
Then Remark \ref{remark:nondegpreserved} implies
that nondegeneracy holds everywhere\footnote{%
Alternatively note that, whatever the choice of $\ginj_1$,
the $\theta_0$ part of the frame of $\amc + \corr$
is $\sum_{\mu = 0}^3 v^\mu \theta_0 \p_\mu$
for some $v^\mu$ with $v \approx (1,0,0,0)$ and a small change of coordinates
brings this into the form $\theta_0 \p_0$.}.
\end{itemize}

The associated 1-forms in Definition \ref{def:aof1} and \ref{def:aof2}
are closed, and exact because their integrals around the 1-cycles on $\mathbbm{T}^3$
vanish for large $\tau$ because they vanish for $\uzero$.
So Lemma \ref{lemma:gsf} applies.
The claims about the causal structure are
properties of the conformal metric,
see Definition \ref{def:nondeg} and \eqref{eq:nijm} and \eqref{eq:sss1},
and follow from $\uzero$.
\qed\end{proof}

\section{Discussion and outlook} \label{sec:discussion}

We have constructed  smooth big bang type solutions.
Formal completeness
for anisotropic elements makes it plausible
that these solutions yield an open set of smooth initial data,
up to gauge transformations.
Here we discuss informally the difficulties that one might encounter,
and a plan for addressing them, for $M^3 = \mathbbm{T}^3$.
\step
Recall that \cite{rs} obtain an open set of initial data,
in the near-spatially-flat-FLRW case.
We only consider anisotropic elements $p_1 \neq p_2 \neq p_3 \neq p_1$
see Remark \ref{remark:anisotropic}.
\step
The obvious strategy is to apply an inverse function theorem to 
a suitably gauged version of the map $U \subset \MC(\A) \to \MC(\P)$
to obtain local surjectivity. Subtly different ways of defining the map can
make a big difference in carrying this out, and one must be open to making modifications.
With this in mind,
recall our construction of the true solution as a truncation of a formal series solution
plus a correction, $\amc + \corr$.
The truncation $\amc$ is accurate at late times $\tau \to \infty$,
and the correction $\corr$ becomes more relevant at earlier times.

This perspective can be refined in an essential way by
analyzing both in time $\tau \in [0,\infty)$ and in spatial frequency (or momentum) $k \in \Z^3$.
There are two qualitatively different regimes, see Figure \ref{fig:regs}.
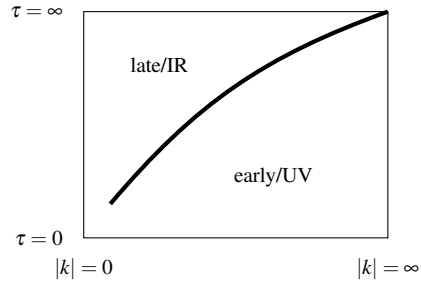
\begin{figure}[h]
\centering
\[
  \begin{tikzpicture}
  \draw (0,0) rectangle (4,3);
  \draw [ultra thick] (0.35,0.45) to[bend angle = 15, bend left] (4,3);
  \draw (0,0) node[anchor=east, xshift=-5] {$\tau=0$};
  \draw (0,3) node[anchor=east, xshift=-5] {$\tau=\infty$};
  \draw (1,2.3) node {late/IR};
  \draw (2.5,0.8) node {early/UV};
  \draw (0,0) node[anchor=north, yshift=-5] {$|k|=0$};
  \draw (4,0) node[anchor=north, yshift=-5] {$|k|=\infty$};
  \end{tikzpicture}
\]
\caption{Two regimes where IR stands for infrared or low frequency,
and UV stands for ultraviolet or high frequency.
The transition is not sharp, but is roughly
at $\tau = \text{const} \cdot \log |k| + \text{const}$
with constants that are also not sharply defined.
}\label{fig:regs}
\end{figure}
This follows from the causal structure,
schematically given by the conformal orthonormal frame
$\p_\tau$ and $e^{-\text{const}\cdot \tau} \p_x$, by which
$e^{-\text{const}\cdot \tau} |k| \approx 1$ is an important heuristic threshold,
and we conjecture that the following heuristics can be made rigorous:
\begin{itemize}
\item \emph{Late/IR regime.} Time derivatives dominate.
The behavior is essentially determined, in the sense of scattering,
by the associated graded of the BKL filtration.
One can use a gauge adapted to the filtration.
Note that MC elements in the associated graded are explicit and do not display wave behavior.
\item \emph{Early/UV regime.} 
Here one has typical wave behavior,
not easy to describe by an expansion.
This must be analyzed accordingly,
using gauges that yield a symmetric hyperbolic system and energy estimates.
\end{itemize}
One should be ready to use different gauges in the two regimes;
the word gauge here refers comprehensively
to gauge fixing conditions
and choosing a splitting into dynamic and constraint equations.
Note that we have done something very similar in this paper,
by constructing the truncation $\amc$ and the correction $\corr$
using two different gauges.
But while the splitting into $\amc$ and $\corr$ is similar to the late/IR versus early/UV splitting,
it is not as clear-cut,
and further analysis may benefit from using
a definite gauge depending only on $(\tau,k)$.

However large $\tau$,
there is always a threshold for $|k|$ above which one is in the early/UV
regime. But using energy estimates
to do what they are good for, namely controlling high frequencies,
one would effectively show that the late/IR behavior
as, governed by the associated graded, ultimately dominates.

For a nonlinear system,
or a linear system with non-constant coefficients,
different spatial frequencies $k$ interact,
and therefore the two regimes late/IR and early/UV are in permanent interaction,
and this must be controlled.
But the propagator of a \emph{linear} symmetric hyperbolic system
with sufficiently smooth coefficients admits Sobolev $H^s$ estimates
for positive and negative $s$,
which bounds transfer from low-to-high respectively high-to-low frequencies.
Such estimates for linear systems 
are useful when, and good motivation for, invoking an inverse function theorem.
While it seems difficult to obtain estimates that do not lose derivatives,
more precisely to find two Banach spaces such that linearization and inverse are bounded,
toy calculations suggest that one can get estimates that lose only finitely
many derivatives, hence the Nash-Moser inverse function theorem
in Frechet spaces is a natural candidate.

We propose that one should prove scattering, between 
true MC elements and the simpler associated graded MC elements.
To clarify what we mean by scattering,
suppose we have two systems, described by linear or nonlinear propagators $P_A$ and $P_B$,
so the composition law $P_A(a,\tau)P_A(\tau,b) = P_A(a,b)$ must hold and similar for $P_B$.
Scattering is the existence of the limits
$S_{AB} = \lim_{\tau \to +\infty} P_A(0,\tau) P_B(\tau,0)$
and $S_{BA} = \lim_{\tau \to +\infty} P_B(0,\tau) P_A(\tau,0)$
that must be inverses of one another. An important technique is to define
$P_X(a,b) = P_B(0,a)P_A(a,b)P_B(b,0)$, which is itself a system in the sense that
the composition law holds.
Then
$S_{AB} = \lim_{\tau \to +\infty} P_X(0,\tau)$ and $S_{BA} = \lim_{\tau \to +\infty} P_X(\tau,0)$.
The limits for $P_X$ are often easier to establish,
if one can calculate the infinitesimal generator of the linear or nonlinear system $X$.
This ought to be an effective technique in the case at hand,
where $A$ would be a dynamical system for $\MC(\E_\Phi)$
and $B$ would be a dynamical system for $\MC(\A)$.

Unfortunately,
we anticipate that scattering holds but with a nasty technical complication.
Namely $S_{AB}$ and $S_{BA}$ will only exist on, and will only establish a bijection between,
certain nontrivial Frechet submanifolds, but this will be sufficient
because directions transversal to these submanifolds
are equivalent to applying $\Orth(3)$ gauge transformations
that rotate the spatial frame.
These $\Orth(3)$ rotations depend on the spatial coordinates.
Intuitively, scattering requires that the frame be properly rotated
at each spatial point, for the filtration to control the late/IR behavior.
This kind of complication
should perhaps not be surprising in the context of a gauge theory.

If this plan can be implemented,
it would prove stability under perturbations of anisotropic initial data,
for big bang type singularities,
with very precise control over the behavior at $\tau \to \infty$ via scattering.
More importantly,
we expect that similar techniques 
will be relevant for the BKL conjectures for vacuum (cf.~Subsection \ref{subsec:heuristics}).

\section{Acknowledgements}
M.R.~is grateful to have received funding from ERC through grant agreement No.~669655.

\appendix

\setcounter{MaxMatrixCols}{14}
\newcommand{\kfreq}{k}
\section{Analysis of the $\mcs$ constraint equations}\label{app:constraints}

In this logically self-contained appendix we solve the 
$\mcs$
constraint equations.
The discussion is limited to the following situation:
\begin{itemize}
\item We are on the torus $\mathbbm{T}^3 = (\R/2\pi \Z)^3$.\\
This simplification allows one to use Fourier series,
other cases should be similar
(see Remark \ref{remark:sphere}).
\item We only consider solutions close to spatially homogeneous
and totally anisotropic.
\item We only consider smooth solutions.\\
One can construct finitely differentiable solutions in just the same way.
\end{itemize}
We show that locally the space of solutions
almost looks like a graph
of a smooth map of Frechet spaces.
Almost, because there are in addition a finite number $q$ of independent nonlinear
conditions that cut out a nonempty codimension $q$ real variety,
to leading order described by $q$ homogeneous quadratic equations.
Here $q=3$.
The main tool is to observe that the principal symbol of this constraint system
has full rank and therefore can be reduced to an elliptic problem.
 We do not make contact with literature
 about the Einstein constraint equations.
 This appendix is logically self-contained.
\step
Let $\mathcal{S} = C^\infty(\mathbbm{T}^3)$ be the Frechet space of smooth real
functions on the torus $\mathbbm{T}^3$.
Fourier series establish an explicit isomorphism to the Frechet space
of rapidly decreasing sequences.
The three standard partial derivatives will be denoted
$\p_1,\p_2,\p_3$.
\begin{definition}[Constraint equations]
Define the smooth nonlinear map
\begin{equation}\label{eq:csx}
\begin{aligned}
\mathbf{C} : N \subset \mathcal{S}^{15} & \to \mathcal{S}^4\\
   (D_1,D_2,D_3,p_1,p_2,p_3,\bklq,\scl,\phiconstraint)
    & \mapsto
   (A,B_1,B_2,B_3)
\end{aligned}
\end{equation}
where $D_1,D_2,D_3$ are three vector fields on $\mathbbm{T}^3$
that for analysis purposes we view as nine component functions,
and the rest are six functions.
By definition,
the domain $N$ corresponds to the open subset where
$D_1,D_2,D_3$ are linearly independent at every point of $\mathbbm{T}^3$.
We have abbreviated
\begin{align*}
A & = 3 \bklq^2 - p_2p_3 - p_3p_1 - p_1p_2\\
B_i & = -\tfrac12 D_i(p_j+p_k) 
- \tfrac12 c_{ij}^j (p_i-p_j)
+ \tfrac12 c_{ki}^k (p_i-p_k)
+ p_i D_i(\scl)
+ 3\bklq D_i(\phiconstraint)
\end{align*}
for all cyclic $(i,j,k) \in C$, see \eqref{eq:cycl}, and the structure functions are as in Definition \ref{def:scoeff}.
\end{definition}
\begin{definition}[Good intersection of real quadrics]\label{def:ndegb}
Let $U$ be a real Banach space and $\beta = (\beta^i)_{i=1\ldots n}$ with $\beta^i: U \times U \to \R$
a collection of $n$ continuous symmetric bilinear maps. We say $\beta$ defines a good intersection
if there exists a decomposition into closed subspaces
$U = U_1 \oplus \ldots \oplus U_n \oplus W$
with $\dim_\R U_i < \infty$ such that the
$\beta_{jk}^i = \beta^i|_{U_j \times U_k}$
satisfy:
\begin{itemize}
\item $\beta^i_{jk} = 0$ unless $i = j = k$.
\item $\beta^i_{ii}$ is an indefinite form, meaning it is of mixed signature.
\end{itemize}
\end{definition}
\begin{lemma}[Local structure of the solution space]
Suppose $p_{10},p_{20},p_{30}>0$ are three constants 
that are pairwise different, $p_{10} \neq p_{20} \neq p_{30} \neq p_{10}$,
interpreted as anisotropy.
Set
\[
x_0 = (\p_1,\p_2,\p_3,p_{10},p_{20},p_{30},p_{00},0,0) \in N
\]
where $p_{00} > 0$ is chosen so that $\mathbf{C}(x_0) = 0$.
Then there exist:
\begin{itemize}
\item An internal direct sum decomposition of Frechet spaces
\[
    \mathcal{S}^{15} = U \oplus V
\]
\item An open neighborhood $\Omega_U \subset U$ of the origin of $U$.
\item An open neighborhood $\Omega_V \subset V$ of the origin of $V$.
\item A smooth map $\varphi : \Omega_U \subset U \to \Omega_V \subset V$ 
with $\varphi(0) = 0$.
\item A smooth map $B : \Omega_U \subset U \to \R^3$ with
$B(0) = 0$ and $DB(0) = 0$
and whose second derivative $D^2B(0)$
defines a good intersection in the sense of Definition \ref{def:ndegb}.
\end{itemize}
such that
\[
	\mathbf{C}^{-1}(\{0\}) \cap (x_0 + \Omega_U + \Omega_V)
        \;=\;
           \big\{
             x_0 + (u \oplus \varphi(u)) \mid u \in \Omega_U,\; B(u) = 0
           \big\}
\]
\end{lemma}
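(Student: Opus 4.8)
The strategy is a Lyapunov--Schmidt reduction applied to the smooth nonlinear map $\mathbf{C}$ of Frechet spaces at the point $x_0$. First I would compute the linearization $D\mathbf{C}(x_0) : \mathcal{S}^{15} \to \mathcal{S}^4$. Because $\scl$ and $\phiconstraint$ vanish at $x_0$ and the unperturbed $p_{i0}$ are constant, most zeroth-order terms drop out and the principal symbol of the $B_i$-block is governed by the first-order operators $p_{i0}\p_i$ and $3p_{00}\p_i$ acting on $(\scl,\phiconstraint)$ and by the $-\tfrac12 D_i(p_j+p_k)$ terms acting on the $p_i$; the $A$-equation linearizes to $6p_{00}\,\dd\bklq - p_{10}(\dd p_2 + \dd p_3) - \cdots$, which is zeroth order. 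The key observation, as advertised before the lemma, is that the first-order part of $D\mathbf{C}(x_0)$ has principal symbol of full rank $4$ at every cotangent direction $\kfreq \in \R^3\setminus 0$ on $\mathbbm{T}^3$: one checks that for each nonzero $\kfreq$ the $4$ rows are linearly independent (this uses anisotropy $p_{10}\neq p_{20}\neq p_{30}\neq p_{10}$ to separate the contributions of $D_i(p_j)$). Given full-rank symbol, $D\mathbf{C}(x_0)$ is (after composing with a suitable left parametrix / by overdetermined-elliptic theory on the compact manifold $\mathbbm{T}^3$, or equivalently by a Fourier-series estimate: for $|\kfreq|$ large the symbol dominates the zeroth-order part) a Fredholm operator between the Frechet spaces in the sense of having closed range of finite codimension and finite-dimensional kernel, with a continuous right inverse modulo these finite-dimensional pieces.

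Next I would set up the reduction. Let $V_0 = \ker D\mathbf{C}(x_0) \subset \mathcal{S}^{15}$ and let $Z_0 \subset \mathcal{S}^4$ be a finite-dimensional complement of $\image D\mathbf{C}(x_0)$; write $q = \dim Z_0$, and the claim is $q=3$. Choose a closed complement $W$ of $V_0$ so that $\mathcal{S}^{15} = V_0 \oplus W$ and $D\mathbf{C}(x_0)|_W : W \to \image D\mathbf{C}(x_0)$ is a topological isomorphism (continuous inverse, using the right inverse from the previous paragraph). Apply the implicit function theorem in Frechet spaces --- here I would invoke the Nash--Moser theorem, since the right inverse of $D\mathbf{C}(x)$ is only a tame inverse losing finitely many derivatives, and $\mathbf{C}$ and its derivatives are tame; all hypotheses are standard for such first-order constraint maps with smooth coefficients on a compact manifold. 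This yields, near $x_0$, the solution set of the $W$-component equation $\Pi\,\mathbf{C}(x_0 + u + w) = 0$ (where $\Pi$ projects onto $\image D\mathbf{C}(x_0)$) as a graph $w = \widetilde\varphi(u)$, $u$ ranging over a neighborhood of the origin in $V_0$. Setting $U = V_0$, the decomposition $\mathcal{S}^{15} = U \oplus (W \oplus \text{rest}) =: U \oplus V$ and $\varphi = \widetilde\varphi$ (extended appropriately into $V$) gives the first four bullet points. The residual finite-dimensional obstruction is the map $B : \Omega_U \to Z_0 \cong \R^q$, $B(u) = (\one - \Pi)\,\mathbf{C}(x_0 + u + \varphi(u))$; then $\mathbf{C}^{-1}(0) \cap (x_0 + \Omega_U + \Omega_V) = \{x_0 + (u\oplus\varphi(u)) : B(u)=0\}$ essentially by construction. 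We have $B(0)=0$ and $DB(0)=0$ because $(\one-\Pi)D\mathbf{C}(x_0)=0$ and the chain rule kills the $\varphi$-term.

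The remaining, and most delicate, point is the structure of $D^2 B(0)$: showing $q=3$ and that $D^2B(0)$ is a good intersection in the sense of Definition \ref{def:ndegb}. For this I would identify $Z_0$ explicitly. The three obstructions come from the solvability conditions of the $B_i$-equations: integrating each $B_i$ against the cokernel directions of the first-order operator. Since the leading operator on $(\scl,\phiconstraint)$ in the $i$-th equation is $p_{i0}\p_i + 3p_{00}\p_i$, the relevant cokernel is spanned by constants (the $\kfreq=0$ Fourier mode of the $\p_i$-derivative terms), so the three obstructions are, to leading order, $\int_{\mathbbm{T}^3} \big(\text{quadratic in the perturbation}\big)$ --- one integral per value $i=1,2,3$, hence $q=3$. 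I would then expand $\mathbf{C}$ to second order about $x_0$: the quadratic part of $B_i$ that survives the integral is a sum of terms like $(\dd p_i)(\dd\scl)$-type pairings and the quadratic-in-structure-function terms $c_{ij}^j (\dd p_i - \dd p_j)$; after restricting to the $\kfreq$-modes and using Parseval, $D^2 B(0)$ becomes, for each $i$, a hermitian form on the Fourier coefficients that is manifestly indefinite (it contains both signs because it pairs independent perturbation directions against each other, e.g. a $+|c|^2$ against a $-|c'|^2$ contribution, cf. the sign pattern in the two brackets displayed in the footnote to Lemma \ref{lemma:gae}). Choosing $U_i$ to be a suitable finite-dimensional span of low-frequency modes on which $\beta^i_{ii}$ is nondegenerate indefinite and on which $\beta^j_{ii}=0$ for $j\neq i$ (possible because the three operators $\p_1,\p_2,\p_3$ have disjoint ``missing'' frequency directions --- $\p_i$ annihilates modes with $\kfreq_i=0$), and $W$ the complement, verifies Definition \ref{def:ndegb}. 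The main obstacle is precisely this last computation: pinning down which quadratic terms survive the projection onto $Z_0$ and checking the indefiniteness and the block-diagonal vanishing $\beta^i_{jk}=0$ for $(i,j,k)$ not all equal; this is a finite but somewhat intricate bookkeeping of the second-order Taylor expansion of $\mathbf{C}$ combined with the explicit description of the cokernel.
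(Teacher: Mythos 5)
There is a genuine gap in the very first step. You claim that the principal symbol of $D\mathbf{C}(x_0):\mathcal{S}^{15}\to\mathcal{S}^4$ has full rank $4$ for every $\kfreq\neq0$. This cannot be true: the $A$-equation $3\bklq^2-p_2p_3-p_3p_1-p_1p_2$ is purely algebraic (zeroth order), so the corresponding row of the principal symbol is identically zero, and the symbol has rank at most $3$. Any Fredholm/parametrix argument built on the rank-$4$ claim collapses. The paper avoids this by first using $A=0$ to eliminate $\bklq$ algebraically (it is solved pointwise for $\bklq>0$), reducing to a genuinely first-order quasilinear map $\mathbf{C}':\mathcal{S}^{14}\to\mathcal{S}^3$ whose symbol $\sigma(\kfreq)=A^i(x_0)\kfreq_i$ then does have full rank $3$. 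If you want to keep the $\mathcal{S}^{15}\to\mathcal{S}^4$ formulation you have to handle the zeroth-order $A$-row separately (it is an algebraic submersion), not fold it into an ellipticity claim for the symbol.

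The second gap is in the good-intersection argument, which is the ``most delicate'' point by your own description and is precisely where the proposal is hand-waved. Your appeal to ``the sign pattern in the two brackets displayed in the footnote to Lemma \ref{lemma:gae}'' is a non sequitur: that footnote is a gauge-fixing computation in the graded Lie algebra $\A$ and has nothing to do with the constraint map $\mathbf{C}$. Your mechanism for the off-diagonal vanishing $\beta^i_{jk}=0$ (``$\p_i$ annihilates modes with $\kfreq_i=0$'') is also not the right reason. What actually forces $\beta^i_{jk}=0$ for $j\neq k$ is frequency orthogonality under the zeroth-Fourier-mode projection $\Pi$: the bilinear form $D^2B'(0)$ arises from $-\Pi\big[(DA(x_0)[u])\,\p w\big]$ symmetrized, and $\Pi$ picks out the products of Fourier modes at $\kfreq$ and $-\kfreq$; choosing $U_i$ as the subspace of modes supported exactly on $\kfreq=\pm e_i$ makes distinct $U_j$, $U_k$ frequency-disjoint, hence orthogonal. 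Indefiniteness on the diagonal blocks has to be checked by an explicit finite computation (the paper notes that restricting to $5$ of the $14$ components of $U_{(\pm1,0,0)}\oplus iU_{(\pm1,0,0)}$ already exhibits mixed signature); a heuristic invocation of ``both signs appear'' does not establish it. Finally, a minor point: invoking Nash--Moser is unnecessarily heavy here; since $A(x_0)\p$ is constant-coefficient elliptic on $V'$, its inverse gains one derivative, exactly compensating the derivative lost by the quasilinear nonlinearity, so the fixed point iteration closes at each fixed Sobolev index without tame estimates.
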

\begin{proof}[Sketch]
Solving $A=0$ yields $\bklq = \bklq(p_1,p_2,p_3) > 0$.
Substituting this, we get a map
\begin{align*}
\mathbf{C}' : N' \subset \mathcal{S}^{14} & \to \mathcal{S}^3\\
   x = (D_1,D_2,D_3,p_1,p_2,p_3,\scl,\phiconstraint)
    & \mapsto
   (B_1,B_2,B_3)
\end{align*}
where $N'$
is the set where $D_1,D_2,D_3$ are nondegenerate and $p_1,p_2,p_3 > 0$.
The operator $\mathbf{C}'$ is quasilinear first order.
Parsing yields three unique nonlinear maps
$A^i: M' \subset \R^{14} \to \Hom_\R(\R^{14},\R^3)$
such that
\[
\mathbf{C}'(x) = \textstyle\sum_{i=1}^3 A^i(x) \p_i x
\]
The open subset $M' = \text{GL}(\R^3) \times (0,\infty)^3 \times \R^2$ is
simply the pointwise version of $N'$.
So $N'$ coincides with the elements in $\mathcal{S}^{14}$
with values in $M'$.
The principal symbol at $x_0$ has constant coefficients,
\begin{multline*}
\sigma(\kfreq) = A^i(x_0) \kfreq_i=\\
\begin{pmatrix}
0 & \Delta_{12}\kfreq_2 & \Delta_{13} \kfreq_3 &
0 & \Delta_{21}\kfreq_1 & 0 &
0 & 0 & \Delta_{31}\kfreq_1 &
0 & -\tfrac12 \kfreq_1 & -\tfrac12 \kfreq_1 &
p_{10}\kfreq_1 & 3p_{00} \kfreq_1\\
\Delta_{12}\kfreq_2 & 0 & 0 &
\Delta_{21}\kfreq_1 & 0 & \Delta_{23}\kfreq_3 &
0 & 0 & \Delta_{32}\kfreq_2 &
-\tfrac12 \kfreq_2 & 0 & -\tfrac12 \kfreq_2 &
p_{20}\kfreq_2 & 3p_{00} \kfreq_2\\
\Delta_{13}\kfreq_3 & 0 & 0 &
0 & \Delta_{23}\kfreq_3 & 0 &
\Delta_{31}\kfreq_1 & \Delta_{32}\kfreq_2 & 0 &
-\tfrac12 \kfreq_3 & -\tfrac12 \kfreq_3 & 0 &
p_{30}\kfreq_3 & 3p_{00} \kfreq_3
\end{pmatrix}
\end{multline*}
where $\Delta_{ij} = \tfrac12(p_{i0}-p_{j0})$
and $p_{00} = \bklq(p_{10},p_{20},p_{30})$.
The assumption $\Delta_{ij} \neq 0$ when $i \neq j$
implies that $\sigma(\kfreq) \in \Hom_\R(\R^{14},\R^3)$
has full rank equal to three
for all $\kfreq \in \R^3-0$.
For every $\kfreq \in \R^3$ decompose
\[
\R^{14} = U_\kfreq \oplus V_\kfreq
\]
where $U_\kfreq = \ker \sigma(\kfreq)$ and $V_\kfreq$ is some complement.
Explicitly, let $V_\kfreq$ be the orthogonal complement relative to the standard inner product on $\R^{14}$.
Note that $U_0 = \R^{14}$ and $V_0 = 0$.
If $\kfreq \neq 0$
then $\dim_\R U_\kfreq = 11$ respectively $\dim_\R V_\kfreq = 3$
and they are smooth vector bundles.
Note that $U_{-\kfreq} = U_{\kfreq}$ and $V_{-\kfreq} = V_{\kfreq}$.
Set\footnote{%
Recall that $\mathcal{S}$ are real valued functions.}
\begin{align*}
U' & = \{ y \in \mathcal{S}^{14} \mid \forall \kfreq:\;\widehat{y}(\kfreq) \in U_\kfreq \oplus i U_\kfreq \}\\
V' & = \{ y \in \mathcal{S}^{14} \mid \forall \kfreq:\;\widehat{y}(\kfreq) \in V_\kfreq \oplus i V_\kfreq \}
\end{align*}
where $\widehat{y}(\kfreq) \in \C^{14}$
denotes the Fourier coefficient for $\kfreq \in \Z^3$.
Note that $\mathcal{S}^{14} = U' \oplus V'$ is an internal
sum decomposition of Frechet spaces.
Every $x \in \mathcal{S}^{14}$ has a unique decomposition
 $x = x_0 + u + v$ where $u \in U'$ and $v \in V'$.
If $u,v$ are small then $x \in N'$.
Then
\[
\mathbf{C}'(x) = A(x)\p x = 0
\qquad
\Longleftrightarrow
\qquad
A(x_0)\p v = f_u(v)
\]
where by definition $f_u(v) = - (A(x_0+u+v)-A(x_0))\p (u+v)$.
We have used $A(x_0)\p u = 0$,
which holds by definition of $U'$.
For fixed $u$ the operator $f_u$ is a quasilinear first order
differential operator and we decompose
$f_u(v) = P_u + Q_uv + R_u(v)$ where
$R_u(v) = \mathcal{O}(v^2)$ as $v \to 0$.
Here $P_u \in \mathcal{S}^3$
satisfies $P_u = \mathcal{O}(u^2)$ as $u \to 0$.
Here $Q_u$ is a linear and $R_u$ a quasilinear first order differential operator.
Note that $Q_0 = 0$ so that $Q_u$ is small if $u$ is small;
this means that a Sobolev operator norm
losing one derivative is small.
Let $\Pi: \mathcal{S}^3 \to \R^3$ be the map that extracts
the zeroth Fourier coefficients, which are real. Note that
\[
\mathbf{C}'(x) = A(x)\p x = 0
\qquad
\Longleftrightarrow
\qquad
\begin{aligned}
A(x_0)\p v &= (1-\Pi) f_u(v)\\
\Pi f_u(v) &= 0
\end{aligned}
\]
since $\Pi A(x_0)\p v = 0$ for all $v \in V'$. We solve the first equation on the right, then the second equation.
\begin{itemize}
\item
Since $V_0=0$
and 
$A(x_0)\p$ is elliptic on $V'$ by construction,
the first equation admits,
using standard Sobolev estimates,
a solution $v = \varphi'(u)$ for small $u$ obtained by a fixed point iteration
starting at $0$.
In particular, inverting the (constant coefficient) elliptic operator gains one derivative
 compensating the loss of one derivative in $v \mapsto f_u(v)$.
 One first constructs a solution by iteration in a Sobolev space for a fixed but sufficiently large index,
 then one shows that the same iterates converge in the Sobolev space of index $k$ for all $k$, by induction on $k$.
The map $\varphi'$ is a smooth map of Frechet spaces,
and the space of solutions to the first equation is the graph of
$\varphi': \Omega_{U'} \subset U' \to \Omega_{V'} \subset V'$
where $\Omega_{U'}$ and $\Omega_{V'}$ are open sets containing the origin,
and furthermore $\varphi'(0) = 0$ and $D \varphi'(0) = 0$.
The derivative vanishes because
$P_u = \mathcal{O}(u^2)$ as $u \to 0$ by construction of $U'$.
\item
The second equation is equivalent to
$B'(u) = 0$
where
\[
B' : \Omega_{U'} \subset U' \to \R^3,\;u \mapsto \Pi f_u(\varphi'(u))
\]
which is a smooth nonlinear map
\[
B'(u) = -\Pi (A(x_0+u)-A(x_0))\p u + \mathcal{O}(u^3)
\]
Clearly $B'(0) = DB'(0) = 0$.
The second derivative $D^2B'(0)$
defines a good intersection in the sense of Definition \ref{def:ndegb},
for instance, one can take the three subspaces
$U_1 \oplus U_2 \oplus U_3 \subset U'$ where
\begin{align*}
U_1 & \;=\; \{\text{only $\kfreq = (\pm 1,0,0)$ Fourier coefficients}\}\\
U_2 & \;=\; \{\text{only $\kfreq = (0,\pm 1,0)$ Fourier coefficients}\}\\
U_3 & \;=\; \{\text{only $\kfreq = (0,0,\pm 1)$ Fourier coefficients}\}
\end{align*}
with\footnote{The restriction to $U_i \times U_j$ with $i \neq j$ is zero.
The signature of the restriction to $U_1 \times U_1$ is indefinite
already when restricting to the last 5 of 14 components of the
Fourier coefficients in $U_{(\pm 1,0,0)} \oplus i U_{(\pm 1,0,0)}$.
} $\dim_\R U_i = 22$.
\end{itemize}
To finish the proof, set $U \simeq U'$
and $V \simeq V' \oplus \mathcal{S}$
where the second summand is for $\bklq$;
the map $\varphi$ is a direct sum of $\varphi'$
and the map 
$\bklq = \bklq(p_1,p_2,p_3) > 0$
introduced at the beginning of this proof;
and $B \simeq B'$.
\qed\end{proof}
\begin{remark}[The sphere]\label{remark:sphere}
The argument above should also work for the sphere $S^3$.
We sketch how this works.
Let
$L_1,L_2,L_3$
and
$R_1,R_2,R_3$
be the usual frames of left- and right-invariant vector fields on $S^3$,
\[
[L_i,L_j] = \eps_{ijk} L_k
\qquad
[R_i,R_j] = \eps_{ijk} R_k
\qquad
[L_i,R_j] = 0
\]
Expand $D_i = {D_i}^j L_j$
so in \eqref{eq:csx} we can keep working with $\mathcal{S}^{14}$.
Use $x_0 =
(L_1,L_2,L_3,p_{10},p_{20},p_{30},p_{00},0,0)$.
Denote $L_{\pm} = L_1 \pm i L_2$.
Expand $\mathbf{C}'(x) = A^+(x) L_+ x + A^-(x) L_-x + A^3(x) L_3x$
where $A^+,A^-,A^3: M' \subset \R^{14} \to \Hom_{\C}(\C^{14},\C^{3})$
satisfying appropriate reality conditions.
The `Fourier coefficients' of an element of
$\C \otimes \mathcal{S}$
are elements of $V_m \otimes_\C V_m$ where
$\dim_\C V_m = 2m+1$
and where the left (resp.~right) $V_m$
is the irreducible unitary representation for the $L_i$ (resp.~$R_i$).
The ellipticity statement is that the linear map
\[
	\sigma(m) = A^+(x_0) L_+ + A^-(x_0) L_- + A^3(x_0)L_3
        \;\in\; \Hom_{\C}(\C^{14} \otimes V_m, \C^3 \otimes V_m) 
\]
has full rank for all $m > 0$. Let $V_{m0} \subset V_m$ be the lowest weight
using the Cartan subalgebra spanned by $L_3$. Using the anisotropy,
one can check that $A^3(x_0)$
has full rank when restricted to the joint kernel of $A^+(x_0)$ and $A^-(x_0)$,
which implies that $\C^3 \otimes V_{m0} \subset \image \sigma(m)$ when $m \neq 0$.
One can check that $A^+(x_0)$ has full rank which implies by an induction
on the weight filtration that $\sigma(m)$ has full rank.
\end{remark}

{\footnotesize

}

\end{document}